\documentclass[10pt]{article}

\usepackage{natbib}
\usepackage{acronym}
\usepackage{amsmath, amsfonts,amsthm,amssymb}
\usepackage{enumerate}
\usepackage[inline]{enumitem}   %
\usepackage{graphicx}
\usepackage{fullpage}
\usepackage[hidelinks]{hyperref}
 \usepackage{xcolor}

\acrodef{LC}[LC]{Lorenz curve}
\acrodef{LCs}[LCs]{Lorenz curves}
\acrodef{LRC}[LRC]{Lorenz random curve}
\acrodef{cdf}[cdf]{cumulative distribution function}
\acrodef{fPCA}[FPCA]{Functional Principal Component Analysis}
\acrodef{FPCA}[FPCA]{Functional Principal Component Analysis}

\newtheorem{theorem}{Theorem}%
\newtheorem{proposition}[theorem]{Proposition}%
\newtheorem{lemma}[theorem]{Lemma}%

\newtheorem{remark}{Remark}%

\newtheorem{definition}{Definition}%

\newcounter{HPcont}
\renewcommand{\theHPcont}{\mbox{(A-\arabic{HPcont})}}
\newenvironment{assume}{
	\begin{itemize}\refstepcounter{HPcont} \item[\textbf{\theHPcont}]} {\end{itemize}\par}

\begin{document}

\title%
{Principal Component Analysis for Zero-Inflated Probability Measures via Lorenz Curves}

\author{Giacomo Aletti\footnote{Dipartimento di Scienze e Politiche Ambientali, Universit\`{a} degli studi di Milano, giacomo.aletti@unimi.it},
	Enea G. Bongiorno\footnote{Dipartimento di Studi per l'Economia e l'Impresa, Universit\`{a}  del Piemonte Orientale, enea.bongiorno@uniupo.it} 
}

\maketitle

\begin{abstract}
	This paper develops a principal component analysis framework for probability measures associated with non-negative real-valued random variables with structural zeros.
	Specifically, leveraging the associated \ac{LC} and a suitable perspective transformation of the plane, a bijective embedding is defined to treat such measures as functional data in a Hilbert space. Once the random \ac{LC} is embedded in this space, standard \ac{FPCA} becomes directly applicable, and the diffeomorphic nature of the embedding allows the main convergence results and their interpretations to be transferred back to the original space of \acp{LC}. The a.s.\@ and $L^2(\mathbb{P})$ consistency of the estimated eigenfunctions, eigenvalues, and modes of variation is subsequently investigated, both in the embedded space and, by virtue of the embedding's diffeomorphism property, in the original \ac{LC} space. Crucially, this consistency analysis accounts for the double stochasticity of the estimation problem: one level of randomness is tied to the underlying \ac{LC} process itself, while the other stems from the empirical estimation of the \acp{LC} from sample data. As a consequence, consistency of the Karhunen--Loève reconstructions is also established, guaranteeing faithful approximation of individual curves in both spaces. Finally, the proposed \ac{FPCA} methodology is illustrated on a real-world dataset of global export flows and compared with a naive \ac{FPCA} approach.
\end{abstract}

\paragraph{keywords.}
Lorenz curves, Hilbert embedding approach, functional data analysis, perspective transformation, modes of variation

\section{Introduction}

Let $\mathcal{M}$ denote the family of probability measures on $[0,+\infty)$ that can be expressed as a convex combination of a Dirac measure at zero and a measure absolutely continuous with respect to the Lebesgue one. Specifically, any $\Gamma \in \mathcal{M}$ admits the representation
\begin{equation}\label{eq:misura_zero_inflated}
	\Gamma = \pi \delta_0 + (1-\pi)\lambda, \qquad \pi \in (0,1),
\end{equation}
where $\delta_0$ denotes the Dirac measure at zero and $\lambda$ is a probability measure absolutely continuous with respect to the Lebesgue one on $[0,+\infty)$. The presence of a positive probability mass at zero gives rise to what are commonly referred to as ``structural zeros'', and random variables associated to such measures are therefore known as ``zero-inflated''.

Measures of this form arise naturally in a wide range of applied domains including, for instance, financial markets, where variables such as trading volumes and number of trades exhibit a non-negligible proportion of exact zeros \citep{hau:mal:sch14}; health services, where medical expenditures are often zero for a large fraction of the population \citep{neelon2016modeling}; hydrology, where daily streamflows on intermittent rivers alternate between zero and positive values \citep{hutton1990non}; meteorology, where precipitation records frequently contain exact-zero amounts \citep{cuello2019predicting}; and inventory management, where intermittent and lumpy demand patterns produce large fractions of zero observations \citep{petropoulos2016another}.
The statistical analysis of zero-inflated data has traditionally been carried out through parametric models (see e.g.\@ \citealp{lambert1992} where zero-inflated Poisson regression is considered). These models usually impose strong distributional assumptions and, by focusing on a finite-dimensional summary, do not naturally extend to the settings in which the object of interest is the measure $\Gamma$ itself.
A related but distinct problem is that of PCA for zero-inflated compositional data, where zeros arise as numerical artefacts rather than structural probability mass; recent contributions in this direction include \cite{kim:par:jun24}, who propose a PCA method working directly on the compositional simplex without resorting to zero-replacement strategies. That setting, however, differs fundamentally from the present one, as the object of interest is a composition rather than a measure on $[0,+\infty)$.

A more flexible, geometry-aware approach consists of embedding the space of measures into a suitable Hilbert space, so that the rich toolkit of Functional Data Analysis (FDA -- see for instance the monographs \citealp{fer:vie06,hor:kok12,hsi:eub15,kok:rei17,ram:sil05}) becomes available. 
Several embeddings of this kind have been proposed in the literature. The Bayes space framework \citep{ego:dia:paw06, van2014bayes} provides a Hilbert space structure via the Radon--Nikodym derivative and the centered log-ratio transformation, and has been applied to \ac{FPCA} of density functions \citep{hro:men:tem:hru:fil16} and recently extended to mixture measures by \cite{mai:sto:fit:gre25}. \cite{pet:mul16} propose an alternative embedding via continuous and invertible transformations of the density, such as the log quantile density and the log hazard transformations. The work most directly related to the present paper is \cite{bon:goi19}, who construct a bijective embedding based on the \ac{LC} -- a function $L:[0,1]\to[0,1]$ summarizing the distribution of a non-negative random variable by describing the cumulative share of total wealth held by the bottom $p$ fraction of the population, and invariant to scale transformations -- by mapping its second derivative via the negative centered log-ratio transformation, thereby making FDA techniques directly applicable.
However, none of these frameworks handles zero-inflated measures $\Gamma \in \mathcal{M}$ from the \ac{LC} perspective: the \ac{LC} of such measures is flat on an initial interval $[0,\pi]$, with $\pi$ reflecting the structural zero mass, and strictly increasing thereafter, a feature that is not accommodated by \cite{bon:goi19}, which is restricted to strictly positive data.

The goal of this paper is to fill this gap by developing an \ac{FPCA} for samples of i.i.d.\@ measures with a.s.\@ finite mean $\Gamma_1, \ldots, \Gamma_n \in \mathcal{M}$, inspired by the framework of \cite{bon:goi19}. The proposed approach is based on a bijective embedding of $\mathcal{M}$ into $\mathbb{R} \times C([0,1])$, constructed by combining the \ac{LC} representation of $\Gamma$ with a perspective transformation $\varphi$ of the plane (see Figure~\ref{fig:intro_persp}). The perspective transformation acts on the graph of $L$ and produces a concave, twice differentiable curve $\ell$ defined on $[0,1]$: the flat segment of $L$ on $[0,\pi]$ collapses under $\varphi$ into a single intercept value $\ell(0) = \pi/(1-\pi)$, while the remaining shape of $L$ over $(\pi,1]$ is encoded in the concavity profile of $\ell$ over $(0,1]$. The map $L \mapsto (\ln(\ell(0)), \ln(-\ell''))$ then takes values in $\mathbb{R} \times C([0,1])$: the first component captures the log-odds of the structural zero mass, while the second encodes the inequality structure of the distribution, with $\ell''$ interpreted pointwise as a local radial weight of inequality (see Section~\ref{sec:the-psi}). It is shown that this embedding is a diffeomorphism from the family of \ac{LCs} $\mathcal{L}or$ onto $\mathbb{R} \times C([0,1])$, which is dense in the separable Hilbert space $\mathbb{R} \times L^2_{[0,1]}$; this allows FDA techniques to be applied in the embedded space, while the diffeomorphism property ensures that all convergence results transfer back to $\mathcal{L}or$, preserving theoretical guarantees and enhancing interpretability. The covariance operator of the embedded process has a block structure reflecting the decomposition into scalar and functional components, and its eigenfunctions are shown to be continuous, so that all finite-dimensional reconstructions via the Karhunen--Loève expansion are elements of $\mathbb{R} \times C([0,1])$ rather than merely of $\mathbb{R} \times L^2_{[0,1]}$.

\begin{figure}[h]
	\centering
	\includegraphics[width=.43\textwidth]{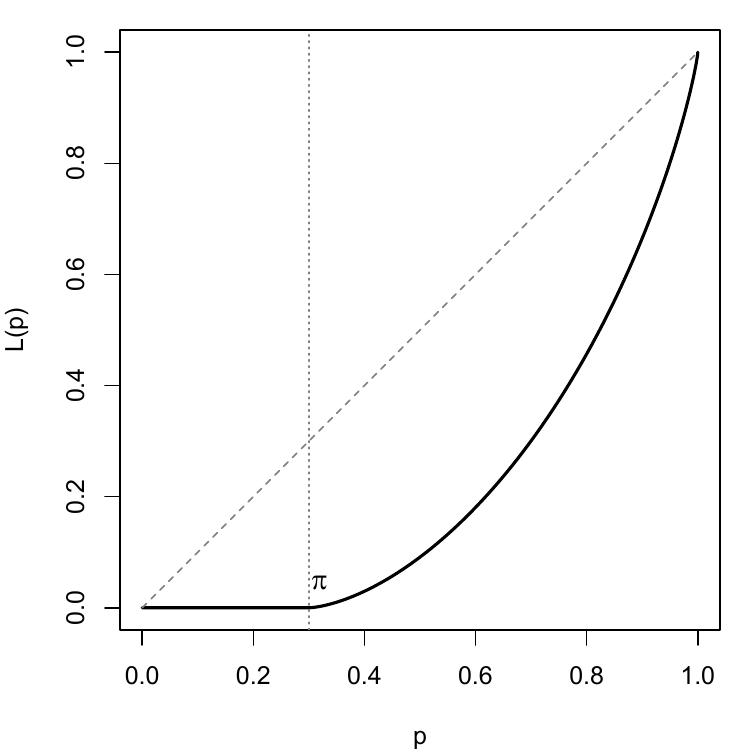}
	\includegraphics[width=.45\textwidth]{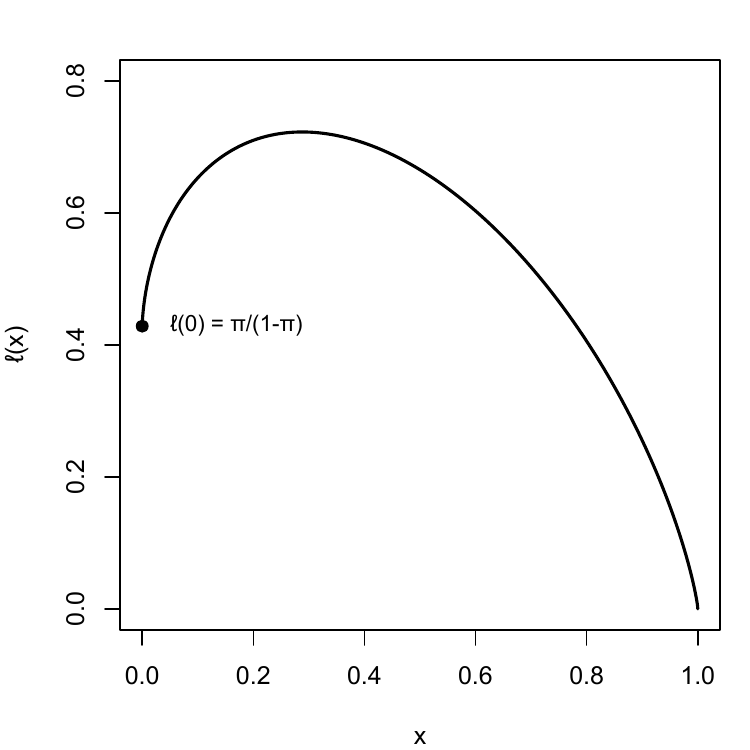}
	\caption{A simulated zero-inflated \ac{LC} $L \in \mathcal{L}or$ with 
		$\pi = 0.3$ (left) and its perspective transformation $\ell = 
		\varphi(L)$ (right). The flat segment $L(p)=0$ for $p \leq \pi$ is 
		mapped to the intercept $\ell(0) = \pi/(1-\pi)$, encoding the 
		structural zero mass as a single scalar.}
	\label{fig:intro_persp}
\end{figure}

It is worth contrasting the present approach with the Bayes space framework, which in principle could accommodate zero-inflated measures given a suitable choice of the reference measure. The two approaches differ, however, in the equivalence relation they impose and in the geometric structure they induce: the Bayes space framework embeds equivalence classes $[\Gamma] = \{\alpha\Gamma : \alpha > 0\}$ and lacks a direct connection to inequality measurement, whereas the present work embeds classes $[\mathbb{P}_X] = \{\mathbb{P}_{cX} : c > 0\}$ of distributions sharing the same relative inequality structure but differing in scale, placing the \ac{LC} interpretation at its core. Together with \cite{bon:goi19}, the two frameworks partition the space of non-negative distributions: $\pi = 0$ corresponds to the absolutely continuous case handled by \cite{bon:goi19}, while $\pi \in (0,1)$ is the zero-inflated case treated here.

The theoretical framework also accounts for the double stochasticity of the problem: a first source of randomness is related to the sampling of measures $\Gamma_1, \ldots, \Gamma_n$, and a second one to the fact that each $\Gamma_i$ is estimated from a finite sample of observations. Consistency of the eigenfunctions, eigenvalues, modes of variation, and Karhunen--Loève reconstructions is established both in $L^2(\mathbb{P})$ and almost surely -- a stronger result than those of \cite{bon:goi19}, who only establish $L^2(\mathbb{P})$ convergence -- and holds both in the embedded space and, by virtue of the diffeomorphism property, in the original \ac{LC} space.

The remainder of the paper is organized as follows. Section~\ref{sec:preliminaries} introduces the class $\mathcal{L}or$ of \ac{LCs}. Section~\ref{sec:diffeomorfismo} presents the perspective embedding and the bijection $\psi$, whereas Section~\ref{sec:consistency_results} illustrates the consistency results for the eigenfunctions, eigenvalues, modes of variation, and Karhunen--Loève reconstructions. Section~\ref{sec:application} illustrates the methodology on a real dataset concerning global export flows.
Appendix~\ref{sec:about_varphi} provides a geometric support to the introduced bijection while all proofs are collected in Appendix~\ref{sec:proofs}.

\section{Preliminaries on \ac{LCs}}
\label{sec:preliminaries}

Let $X$ be a non-negative real random variable with probability law $\Gamma \in \mathcal{M}$, with \ac{cdf} $F$ and finite mean $\mu$. Since $\Gamma$ admits the representation \eqref{eq:misura_zero_inflated}, the quantile function of $X$ satisfies $Q(p) = 0$ for $p \le \pi$ and is strictly positive and non-decreasing for $p > \pi$. The \ac{LC} associated to $X$ is defined according to \cite{gas72} as
\begin{equation}
	L(p)=\frac{1}{\mu }\int_{0}^{p}Q(t)\,dt, \quad p\in \left[ 0,1\right].
	\label{eq:Lorenz_def}
\end{equation}
Since $Q$ is non-negative and non-decreasing, $L$ is itself non-decreasing and convex, with $L(0)=0$ and $L(1)=1$. Moreover, the presence of a point mass at zero implies $L(p)=0$ for $p \in [0,\pi]$ and positive otherwise, so that $\pi$ naturally identifies the left endpoint of the support of $L$. Furthermore, $L$ determines the distribution of $X$ up to a scale factor given by $\mu$: given $L$ and $\mu$, the quantile function is recovered as $Q(p) = \mu L'(p)$, and, provided $L$ is twice differentiable, the density of the absolutely continuous component satisfies $f(Q(p)) = (\mu \, L''(p))^{-1}$ for any $p>\pi$, so that the full distribution is uniquely determined. Conversely, without $\mu$, all random variables of the form $cX$, for $c > 0$, share the same \ac{LC} and are thus observationally equivalent from the perspective of $L$ alone (see~\citealp{bon:goi19}).

To ensure that the \ac{LC} has the analytical properties required below, introduce the following class.
\begin{definition}\label{def:Lorenz_curves}
	A map $L:[0,1]\to[0,1]$ belongs to the class $\mathcal{L}or$ if it satisfies the 
	following properties:
	\begin{enumerate}[label=\upshape{($\mathbf{\mathrm{L}_{\arabic*}}$)}, 
		ref=\upshape{($\mathbf{\mathrm{L}_{\arabic*}}$)}]
		\item \label{Lor-prop:1} $L(0)=0$ and $L(1)=1$.
		\item \label{Lor-prop:support} There exists $\pi\in(0,1)$ such that $L(p)=0$ for all 
		$p\le \pi$ and $L(p)\in (0,1]$ otherwise. The support of $L$ is defined as 
		$\mathrm{supp}(L)=(\pi,1]$.
		\item \label{Lor-prop:conv_incr} $L$ is convex and non-decreasing over $[0,1]$, 
		and strictly increasing over $[\pi,1]$.
		\item \label{Lor-prop:derivable_at_q} $L$ is twice differentiable over $(0,1]\setminus\{\pi\}$, with $L''(p)>0$ for every $p\in(\pi,1)$, and admits positive finite right derivatives of first and second order at $p=\pi$.
		\item \label{Lor-prop:derivable_at_1} $L$ has bounded and strictly positive second left derivative at $p=1$.
	\end{enumerate}
\end{definition}
Properties~\ref{Lor-prop:1}--\ref{Lor-prop:conv_incr}, as well as the differentiability part of~\ref{Lor-prop:derivable_at_q}, follow directly from definition~\eqref{eq:Lorenz_def}, the nature of $\Gamma$ and the properties of $Q$. 
The remaining conditions in~\ref{Lor-prop:derivable_at_q} and~\ref{Lor-prop:derivable_at_1} -- including the strict positivity of $L''$ on $(\pi,1)$ and at $p=1$ -- are additional regularity assumptions: convexity alone only gives $L''\ge0$ a.e., whereas the embedding constructed in Section~\ref{sec:diffeomorfismo} requires $L''(p)>0$ for \emph{every} $p\in(\pi,1]$, so that the logarithmic map $\psi_2=\ln(-\ell'')$ is well defined and continuous on all of $[0,1]$.
In particular, the positivity of the right second derivative of $L$ at $p = \pi$ ensures that the transformed curve constructed in Section~\ref{sec:diffeomorfismo} has a finite and strictly negative second derivative at the corresponding boundary point, a condition required for the logarithmic embedding introduced therein to be well defined, while the positivity of the right first derivative at $p = \pi$ prevents the transformed curve from having a vertical tangent at the same boundary point, so that both its first and second derivatives remain finite there (see the explicit limits computed in the proof of Proposition~\ref{pro:definition_L-tilde}). Finally, the boundedness of the second left derivative at $p=1$ leads to a twice differentiable function after the perspective transformation.
It is worth noting that the cases $\pi = 0$ and $\pi = 1$ are excluded because the bijection defined in Section~\ref{sec:diffeomorfismo} no longer applies in either case. Excluding $\pi = 1$ is not particularly restrictive, as the corresponding \ac{LC} is degenerate: it describes a situation in which the entire wealth is concentrated in a single individual, with all others holding nothing. Excluding $\pi = 0$ may appear more limiting, since this case is not degenerate and corresponds to positive random variables whose distribution is absolutely continuous with respect to the Lebesgue measure; for this setting, reference is made to \cite{bon:goi19}, where a dedicated embedding for such \ac{LCs} is proposed as the basis for \ac{FPCA}. That embedding, however, is not suited to the zero-inflated measures considered in the present work.

Finally, Definition~\ref{def:Lorenz_curves} implies that any \ac{LC} in the class $\mathcal{L}or$ can be associated with a measure $\Gamma \in \mathcal{M}$ with finite mean via~\eqref{eq:Lorenz_def}, up to scale.
In fact, given $\Gamma = \pi\delta_0 + (1-\pi)\lambda$, the associated \ac{LC} satisfies~\ref{Lor-prop:1}--\ref{Lor-prop:conv_incr} by construction, with $\pi = \inf\mathrm{supp}(L)$, whereas the regularity and strict-curvature conditions~\ref{Lor-prop:derivable_at_q}--\ref{Lor-prop:derivable_at_1} are additional restrictions: $\mathcal{L}or$ is thus a smooth, strictly convex subclass of the \ac{LCs} generated by $\mathcal{M}$. Henceforth, attention is restricted to elements of $\mathcal{L}or$, and the dependence on the underlying measure $\Gamma$ is left implicit.

\section{About the Diffeomorphism $\psi$}
\label{sec:diffeomorfismo}

This section derives, step by step, the diffeomorphism $\psi$ that bijectively embeds $\mathcal{L}or$ into a Hilbert space. 
The first step, presented in Subsection~\ref{sec:the-perspective-varphi}, introduces the perspective map $\varphi$, shows how it acts on \ac{LCs} and, in particular, proves that, for a given $L\in\mathcal{L}or$, $\varphi(L)$ identifies a well-defined non-negative function $\ell$ on $[0,1]$ with the required regularity properties.
The second step shows how to bijectively link $\ell$ with its intercept and its second derivative. This is done in Subsection~\ref{sec:the-BVP} through the use of a suitable Boundary Value Problem (BVP).
The last step, see Subsection~\ref{sec:the-psi}, concerns the composition of such maps that leads to the diffeomorphism $\psi$ from $\mathcal{L}or$ to $\mathbb{R} \times C([0,1])$ where the \ac{FPCA} can be carried out (see Section~\ref{sec:consistency_results}).
\\%
For readability, the main theoretical results concerning the diffeomorphism are stated in this section, while the proofs are postponed to Appendix~\ref{sec:proofs}.

\subsection{The perspective $\varphi$}
\label{sec:the-perspective-varphi}

\begin{proposition}\label{pro:perspective}
Consider the function
\begin{align*}
	&\varphi : \mathbb{R}^2 \setminus \{(x_1,x_2) : x_2 = x_1 - 1\} \to \mathbb{R}^2 \setminus \{(x_1',x_2') : x_2' = -1\}, \\
	&\qquad\qquad
	\left(\begin{array}{c}
		x_1'\\
		x_2'
	\end{array}\right)
	=
	\left(\begin{array}{c}
		x_2/(x_2-x_1+1)\\
		(x_1-x_2)/(x_2-x_1+1)
	\end{array}\right).
\end{align*}
This is a perspective map and its inverse is given by
\begin{align*}
	&\varphi^{-1} : \mathbb{R}^2 \setminus \{(x_1',x_2') : x_2' = -1\} \to \mathbb{R}^2 \setminus \{(x_1,x_2) : x_2 = x_1 - 1\}, \\
	&\qquad\qquad
	\left(\begin{array}{c}
		x_1\\
		x_2
	\end{array}\right)
	=
	\left(\begin{array}{c}
		(x_1' + x_2') / (x_2' + 1)\\
		x_1'/(x_2' + 1)
	\end{array}\right).
\end{align*}
\end{proposition}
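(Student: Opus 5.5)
The plan is to verify three facts in turn: $\varphi$ is well defined between the stated sets, the proposed $\varphi^{-1}$ is a two-sided inverse, and $\varphi$ is a perspective (projective) map. The cleanest route is homogeneous coordinates. Writing points of the plane as $(x_1,x_2,1)$, I would exhibit the matrix
\[
M=\begin{pmatrix} 0 & 1 & 0\\ 1 & -1 & 0\\ -1 & 1 & 1\end{pmatrix},
\]
so that $M(x_1,x_2,1)^\top=(x_2,\;x_1-x_2,\;x_2-x_1+1)^\top$. Dehomogenizing, i.e.\ dividing by the third entry, gives exactly $\varphi$. Thus $\varphi$ is a linear-fractional map with a common denominator, which is what makes it a perspective/projective transformation. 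Since $\det M=-1\neq0$, the map is invertible as a projective map.

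Next I would check the domains. The denominator $x_2-x_1+1$ vanishes exactly on the line $x_2=x_1-1$, so $\varphi$ is defined on the stated domain. For its range, set $d=x_2-x_1+1\neq 0$. Then
\[
x_2'+1=\frac{x_1-x_2+d}{d}=\frac{1}{d}.
\]
This is never zero, so the image avoids the line $x_2'=-1$.

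For the inverse I would compute $M^{-1}$, which I expect to be
\[
M^{-1}=\begin{pmatrix}1&1&0\\1&0&0\\0&1&1\end{pmatrix},
\]
and check $MM^{-1}=I$ directly. Applying $M^{-1}$ to $(x_1',x_2',1)$ and dehomogenizing gives $\bigl((x_1'+x_2')/(x_2'+1),\;x_1'/(x_2'+1)\bigr)$, which is the claimed formula. This map is defined exactly when $x_2'\neq-1$. Its image avoids $x_2=x_1-1$, because the analogous computation gives
\[
x_2-x_1+1=\frac{1}{x_2'+1}\neq 0.
\]

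Finally, I would confirm that both compositions are the identity. One way is by direct substitution. For instance, with $x_2'+1=1/d$ and $x_1'+x_2'=x_1/d$, the formula for $\varphi^{-1}$ returns $(x_1,x_2)$, and the symmetric computation handles $\varphi\circ\varphi^{-1}$. Alternatively, it follows from $MM^{-1}=I$ together with the fact that dehomogenization is compatible with scalar multiples. There is no genuine obstacle here. The only step needing care is matching the excluded lines precisely, so that the two maps are mutually inverse bijections between the stated sets and not merely on dense subsets; the identities $x_2'+1=1/d$ and its mirror settle this.
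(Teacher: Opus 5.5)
Your proposal is correct and takes essentially the paper's route: the paper also writes $\varphi$ in homogeneous coordinates as the matrix you call $M$, and it obtains $\varphi^{-1}$ from the inverse matrix, giving $(x_1'+x_2',\,x_1',\,x_2'+x_0')$ before dehomogenizing. The differences are small. The paper \emph{derives} $M$ by imposing three point conditions on a generic perspective, and those conditions alone leave a two-parameter family of matrices, so exhibiting and verifying $M$ directly, as you do, is the tighter argument; your identities $x_2'+1=1/(x_2-x_1+1)$ and $x_2-x_1+1=1/(x_2'+1)$ also make explicit that the excluded lines correspond exactly and that the inverse is two-sided, which the paper leaves implicit.
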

For readability, the way such perspective is obtained together with usual geometric properties are postponed in Appendix~\ref{sec:about_varphi} whereas the proof of Proposition~\ref{pro:perspective} is in Appendix~\ref{sec:proofs}.

It is worth noting that, since $\varphi$ is a perspective mapping, $\varphi(L)$ is generally not a function but rather a parametric curve in $\mathbb{R}^2$. In particular, applying $\varphi$ to a $L\in\mathcal{L}or$ leads to a parametric expression of the transformed \ac{LC}:
\begin{equation}\label{Eq:persp}
\varphi(p, L(p)) = \left(\frac{L(p)}{L(p)-p+1} ,\frac{p-L(p)}{L(p)-p+1}\right),\qquad p\in[0,1].
\end{equation}
Next proposition shows that this parametric curve identifies a proper function $\ell$ from $[0,1]$ to $[0,\infty)$ and provides some of its basic properties. 
\begin{proposition}\label{pro:definition_L-tilde}
Let $L:[0,1]\to[0,1]$ be a \ac{LC} in $\mathcal{L}or$. Then, there exists a function $\ell:[0,1] \to [0,+\infty)$ whose graph on $[0,1]$ coincides with that of $\varphi(p,L(p))$ for $p\in[\pi,1]$ with $\pi=\inf \mathrm{supp}(L)$. Moreover, $\ell(0) = \pi(1-\pi)^{-1}$, $\ell(1) = 0$, and the function $\ell$ is concave and twice continuously differentiable on $[0,1]$.
\end{proposition}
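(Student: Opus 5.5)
We parametrize the transformed curve by $p\in[\pi,1]$: write $x(p)=L(p)/(L(p)-p+1)$ and $y(p)=(p-L(p))/(L(p)-p+1)$, which are the two components of \eqref{Eq:persp}. The denominator $D(p)=1-(p-L(p))$ is positive on $[0,1]$. Indeed, $p-L(p)\le p\le 1$, with equality only at $p=1$, where $L(1)=1$ gives $D(1)=1$. So both components are well defined and inherit the regularity of $L$ on $[\pi,1]$. The endpoint values are immediate. At $p=\pi$ we have $L=0$, so $x=0$ and $y=\pi/(1-\pi)$. At $p=1$ we get $x=1$ and $y=0$. Once we show that $x$ is a $C^1$ increasing bijection from $[\pi,1]$ onto $[0,1]$, we can set $\ell=y\circ x^{-1}$. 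This yields $\ell(0)=\pi(1-\pi)^{-1}$ and $\ell(1)=0$. Non-negativity follows from $p-L(p)\ge0$, which holds because $L$ is convex with $L(0)=0$ and $L(1)=1$, so $L(p)\le p$.

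For monotonicity, a direct computation gives
\[
x'(p)=\frac{L'(p)(1-p)+L(p)}{D(p)^2}.
\]
On $(\pi,1]$ we have $L>0$ and $L'\ge 0$, so $x'>0$ there. At $p=\pi$ the numerator equals $L'_+(\pi)(1-\pi)$, which is strictly positive by \ref{Lor-prop:derivable_at_q}. This is exactly where the positive right first derivative is needed: it ensures $x'(\pi)>0$ and hence a finite $\ell'(0)$. Thus $x$ is a strictly increasing $C^1$ diffeomorphism. Similarly,
\[
y'(p)=\frac{(1-L'(p))-(p-L(p))(1-L'(p))+\cdots}{D(p)^2},
\]
and simplifying gives $y'(p)=\bigl(p L'(p)-L(p)+\cdots\bigr)/D^2$. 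Rather than track this expression, it is cleaner to use the identity $x+y=(p)/D$, which follows from $\varphi^{-1}$ in Proposition~\ref{pro:perspective}. Then $\ell'(x)=y'(p)/x'(p)$ is continuous on $[0,1]$, with one-sided limits at the endpoints computed from the one-sided derivatives of $L$.

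For concavity and the second derivative we use $\ell''=(y''x'-y'x'')/(x')^3$. Rather than expanding this brute-force, we exploit the projective nature of $\varphi$. A perspective map sends lines to lines, so it preserves the sign of curvature relative to the side of the line at infinity. Concretely, we compute the Wronskian-type quantity $x'y''-x''y'$ in terms of $L$ by writing $x=N_1/D$ and $y=N_2/D$ with $N_1$, $N_2$, $D$ affine in $(p,L)$. A standard determinant identity gives
\[
x'y''-x''y'=\frac{\det(\cdot)}{D^3}\,L''(p),
\]
where the determinant comes from the affine coefficients. Checking with the explicit map, we expect
\[
x'y''-x''y'=-\frac{L''(p)}{D(p)^3}.
\]
This gives
\[
\ell''(x(p))=-\frac{L''(p)}{D(p)^3\,x'(p)^3},
\]
which is strictly negative on $(\pi,1)$ by \ref{Lor-prop:derivable_at_q}. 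Hence $\ell$ is concave.

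The main obstacle is continuity of $\ell''$ up to the endpoints. Near $p=\pi$, the formula above tends to $-L''_+(\pi)(1-\pi)^{-3}\big/\bigl(L'_+(\pi)(1-\pi)\bigr)^3\cdot(1-\pi)^{6}$, which is finite and negative. This uses both right derivatives from \ref{Lor-prop:derivable_at_q}. Near $p=1$, we have $D\to1$ and $x'(1)=L(1)=1$, so $\ell''(1)=-L''_-(1)$, which is finite by \ref{Lor-prop:derivable_at_1}. Two points still need care. First, the second derivatives of $L$ must actually converge to these one-sided values; if \ref{Lor-prop:derivable_at_q} only gives existence of one-sided second derivatives, we would interpret $C^2$ on the closed interval through one-sided derivatives of $\ell$ at $0$ and $1$, computed via the same chain rule. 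Second, the determinant identity must be verified with the correct sign. We would check it by substituting $L(p)=p^2$ locally.
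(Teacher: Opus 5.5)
Your argument is correct in substance, but it takes a different route from the paper's. The paper argues synthetically. Since $\varphi$ sends the lines through $(1,0)$ to vertical lines, the graph of \eqref{Eq:persp} is a function because the convex curve $L$ meets each line $y=m(x-1)$, $m<0$, exactly once. Concavity follows because $\varphi$ carries the convex region between $L$ and the diagonal onto the region under $\ell$. $C^2$ regularity is asserted from the smoothness of $\varphi$ on $T$ together with \ref{Lor-prop:derivable_at_q}--\ref{Lor-prop:derivable_at_1}, with endpoint values read off the displayed formula \eqref{eq:ell''}.

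You instead do calculus on the parametrization. Your bound $x'(p)=(L'(p)(1-p)+L(p))/D(p)^2>0$ is the analytic form of the one-intersection argument: indeed $x=m/(m-1)$ with $m=L(p)/(p-1)$, the slope of the line from $(1,0)$ to $(p,L(p))$. It gives the function property and $C^1$ at once. The homogeneous-coordinate identity $x'y''-x''y'=\det(\Gamma,\Gamma',\Gamma'')/D^3$, with $\Gamma=(L,\,p-L,\,D)$, then gives $\ell''$ explicitly, so concavity (even strict) comes from the sign of $L''$. The paper's route needs only convexity of $L$ for concavity and explains the geometry. Yours derives \eqref{eq:ell''} instead of quoting it and shows exactly what the $C^2$ claim requires.

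A few loose ends remain. The sign you left open needs no test case. Adding the first column to the second and subtracting it from the third turns
\[
\begin{pmatrix} L & p-L & D\\ L' & 1-L' & L'-1\\ L'' & -L'' & L'' \end{pmatrix}
\quad\text{into}\quad
\begin{pmatrix} L & p & 1-p\\ L' & 1 & -1\\ L'' & 0 & 0 \end{pmatrix},
\]
whose determinant is $-L''$. So $x'y''-x''y'=-L''/D^3$, as you expected, and $\ell''(x(p))=-L''(p)D(p)^3/(L(p)+L'(p)(1-p))^3$, which is \eqref{eq:ell''}.

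Your attempted simplification of $y'$ is off. With $u=p-L$ one has $y=u/(1-u)$, hence simply $y'=(1-L')/D^2$ and $\ell'(x(p))=(1-L'(p))/(L(p)+L'(p)(1-p))$. This also shows that two items in the paper are slips: the numerator $1+p-L'(p)$ in its displayed $\ell'$, and the endpoint value $2-L'(1^-)$. The latter would give $\ell'(1)>0$ whenever $L'(1^-)<2$, which occurs for nearly linear $L$ with $\pi<1/2$. That is incompatible with $\ell\ge0=\ell(1)$.

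Finally, your formula shows that $\ell''$ is exactly as continuous as $L''$. Continuity of $L'$, and hence your $C^1$ step, is automatic from convexity. But \ref{Lor-prop:derivable_at_q}--\ref{Lor-prop:derivable_at_1} only assert that $L''$ exists, with one-sided values at $\pi$ and $1$. Your one-sided fallback therefore yields $\ell''(0)$ and $\ell''(1)$, but not continuity of $\ell''$, either at the endpoints or in the interior. The paper's proof makes the same tacit step. So ``continuously'' requires reading continuity of $L''$ on $[\pi,1]$ (one-sided at the ends) into the definition of $\mathcal{L}or$. Under that reading your proof is complete, and concavity holds either way.
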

Although Proposition~\ref{pro:definition_L-tilde} ensures that $\varphi(L)$ is in fact the graph of a function, which is denoted by $\ell$, in general, this function does not admit an explicit analytical expression. Nevertheless, its derivatives can be expressed parametrically as follows.
In fact, keeping in mind that the graph of $\ell$ coincides with that of the parametric curve given in Equation~\eqref{Eq:persp} and that $\varphi$ induces the following change of variables:
\[
	x(p) =\frac{L(p)}{L(p)-p+1},
	\qquad\qquad
	y(p) = \frac{p-L(p)}{L(p)-p+1},
\]
then, for any $p\in (\pi,1)$, it is possible to provide the derivatives of $\ell$ on $(0,1]$ depending on the derivatives of $L$ on $(\pi,1]$
\begin{align}
	\ell'(x(p)) = & \frac{dy}{dx} = \frac{ dy/dp}{dx/dp} =
	\frac{1+p-L'(p)}{L(p)+L'(p)(1-p)}, \notag
\\
	\ell''(x(p))  = & \frac{d^2y}{dx^2} = \frac{d}{dx}\left(\frac{ dy/dp}{dx/dp}\right) =
 \frac{-L''(p)(L(p)-p+1)^3}{(L(p)+L'(p)(1-p))^3}. \label{eq:ell''}
\end{align}
Note that $\ell'$ and $\ell''$ cannot be expressed in terms of $x$ and $y$ solely. Moreover, for $p\in [0,\pi]$, $L(p)=L'(p)=0$ and then $\ell'$ and $\ell''$ are not defined: in fact, as shown in the second step of the proof of Proposition~\ref{pro:definition_L-tilde}, for $p \in [0,\pi]$, the graph of  \eqref{Eq:persp} coincides with the segment $\{0\}\times [0, \pi(1-\pi)^{-1}]$ that lies on $x_1'=0$.

\subsection{The BVP}
\label{sec:the-BVP}

As shown in Proposition~\ref{pro:definition_L-tilde}, the transformed curve $\ell$ is a concave, twice continuously differentiable function on $[0,1]$ with $\ell(1)=0$ and intercept $\ell(0) = \pi(1-\pi)^{-1}$. A key observation is that $\ell$ is uniquely determined by the pair $(\ell(0), \ell'')$: indeed, given the intercept and the second derivative, $\ell$ can be recovered as the unique solution of the following Boundary Value Problem (BVP)
\begin{equation}
	\label{eq:BVP_problem}
	{\left\{ 
		\begin{array}{ll}
			u^{\prime \prime }(p) =f(p),\quad  & \mbox{if } p\in [0,1], \\
			u(0) = a, &  \\ 
			u(1) = 0, &
		\end{array}
		\right. }
\end{equation}
with $a = \ell(0)$ and $f = \ell''$. The well-posedness of this BVP is established in the following proposition.

\begin{proposition}
	\label{pro:BVP_solution}
	Let $a\ge 0$ and $f$ be a real valued continuous function defined on $[0,1]$. 
	The BVP~\eqref{eq:BVP_problem} is well-posed: its solution exists, is unique, 
	and depends continuously on the input data $a$ and $f$. Moreover, the solution 
	admits the explicit representation
	\begin{equation}
		\label{eq:BVP_solution}
		u(p) = BVP(a, f)(p) = a(1-p) + \int_0^p (p-t) f(t)\,dt - 
		p\int_0^1 (1-t) f(t)\,dt.
	\end{equation}
\end{proposition}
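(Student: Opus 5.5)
The plan is to verify the explicit formula \eqref{eq:BVP_solution} directly, derive uniqueness from the homogeneous problem, and read continuous dependence off the representation as a bound in the sup norm.

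\emph{Existence.} Since $f$ is continuous on $[0,1]$, the function $p\mapsto \int_0^p (p-t) f(t)\,dt$ is twice continuously differentiable. Differentiating under the integral sign (Leibniz rule) gives
\[
\frac{d}{dp}\int_0^p (p-t) f(t)\,dt=\int_0^p f(t)\,dt,
\qquad
\frac{d^2}{dp^2}\int_0^p (p-t) f(t)\,dt=f(p).
\]
The remaining terms $a(1-p)$ and $-p\int_0^1(1-t)f(t)\,dt$ are affine in $p$, so $u''=f$ on $[0,1]$. Evaluating at the endpoints gives $u(0)=a$ and $u(1)=0+\int_0^1(1-t)f(t)\,dt-\int_0^1(1-t)f(t)\,dt=0$. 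Hence $u\in C^2([0,1])$ solves \eqref{eq:BVP_problem}.

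\emph{Uniqueness.} If $u_1,u_2$ are two $C^2$ solutions, then $w=u_1-u_2$ satisfies $w''=0$ with $w(0)=w(1)=0$. So $w$ is affine and vanishes at both endpoints, which forces $w\equiv 0$.

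\emph{Continuous dependence.} Equip the data with the norm $|a|+\|f\|_\infty$ on $\mathbb{R}\times C([0,1])$. The map $(a,f)\mapsto u$ is linear by \eqref{eq:BVP_solution}, so it is enough to bound it. For $p\in[0,1]$,
\[
|u(p)|\le |a| + \tfrac12\|f\|_\infty + \tfrac12\|f\|_\infty ,
\qquad\text{hence}\qquad
\|u\|_\infty\le |a|+\|f\|_\infty .
\]
The derivatives are
\[
u'(p)=-a+\int_0^p f(t)\,dt-\int_0^1(1-t)f(t)\,dt,
\qquad u''=f,
\]
so $\|u'\|_\infty\le |a|+\tfrac32\|f\|_\infty$ and $\|u''\|_\infty=\|f\|_\infty$. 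Therefore the solution map is Lipschitz even into $C^2([0,1])$ with its natural norm. This gives well-posedness in the Hadamard sense.

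None of the steps is a real obstacle. The only place needing care is the interchange of differentiation and integration in the first step. It is justified because the integrand $(p-t)f(t)$ vanishes at $t=p$, which makes the first derivative simple, and a second application of the fundamental theorem of calculus then yields $f(p)$.
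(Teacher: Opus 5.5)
Your proof is correct, and it differs from the paper's mainly in direction and in how much of the well-posedness claim is actually proved.

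The paper derives the formula rather than checking it. It writes the general solution of $u''=f$ as $c_1+c_2p+\int_0^p\int_0^z f(t)\,dt\,dz$, fixes $c_1=a$ and $c_2=-a-\int_0^1(1-t)f(t)\,dt$ from the boundary conditions, and integrates by parts to reach \eqref{eq:BVP_solution}. It also rewrites the result in the Green's-function form $a(1-p)+(p-1)\int_0^p zf(z)\,dz+p\int_p^1(z-1)f(z)\,dz$. Uniqueness is implicit there, because the two constants are uniquely determined. Continuous dependence is only asserted, as a consequence of $f$ being continuous.

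You instead verify the formula a posteriori and prove uniqueness through the homogeneous problem, which rests on the same fact: the kernel of $d^2/dp^2$ consists of affine functions. The real gain is your explicit linear bound $\|u\|_\infty\le|a|+\|f\|_\infty$, together with a Lipschitz bound into $C^2([0,1])$. This is exactly the sup-norm continuity of $(a,f)\mapsto BVP(a,f)$ that the proof of Proposition~\ref{pro:psi_bi-continuous_bijection} later invokes.

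The paper's route shows where the formula comes from. Yours turns the ``depends continuously on the data'' clause into an actual quantitative estimate instead of an assertion.
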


The explicit formula~\eqref{eq:BVP_solution} plays a central role in the construction of the inverse map $\psi^{-1}$ as can be seen in the next section.

\subsection{The bijection $\protect\psi$}
\label{sec:the-psi}   

Thanks to Proposition~\ref{pro:definition_L-tilde}, $\varphi$ maps a \ac{LC} $L$ into a smooth concave real valued function $\ell$ defined on $[0,1]$. On the other hand, $\ell$ can be seen as the solution of the BVP \eqref{eq:BVP_problem} with $f(p)=\ell''(p)$ and $\ell(0) = \pi(1-\pi)^{-1}$ where $\pi=\inf \mathrm{supp}(L)$. 
This means that each \ac{LC} $L\in \mathcal{L}or$ can be univocally mapped into the pair $(\ell(0),\ell'')$. Since $L''(p)>0$ for every $p\in(\pi,1]$ (see~\ref{Lor-prop:derivable_at_q}--\ref{Lor-prop:derivable_at_1}), Equation~\eqref{eq:ell''} gives $\ell''(x)<0$ strictly for every $x\in[0,1]$; since also $\ell(0)>0$, $L$ can be mapped into the pair $(\ln(\ell(0)),\ln(-\ell'')) \in \mathbb{R}\times C([0,1])$.
\\%
The following proposition establishes the formal properties of this map.
\begin{proposition}
	\label{pro:psi_bi-continuous_bijection}
	Let $\mathcal{L}or$ be endowed with the sup-norm metric $\|\cdot\|_\infty$ and let $\mathbb{R} \times C([0,1])$ be endowed with the product metric $d(a,g)=\lvert a\rvert +\| g \|_\infty$. Then the map 
	\begin{equation}
		\label{eq:bijection_psi}
		\begin{array}{rccl}
			\psi : & \mathcal{L}or & \to & \mathbb{R}\times C([0,1])
			\\
			& L & \mapsto & \psi(L)= (\psi_1(L),\psi_2(L) )=(\ln(\ell(0)),\ln(-\ell''))
		\end{array}
	\end{equation}
	is a bijection from $\mathcal{L}or$ onto $\mathbb{R}\times C([0,1])$. Moreover, the inverse map is explicitly defined on the whole space $\mathbb{R}\times C([0,1])$ by
	\[
	\psi^{-1}(b,g) = \varphi^{-1}\left[ BVP( \exp(b), -\exp(g) ) \right], \qquad \forall (b,g)\in \mathbb{R}\times C([0,1]).
	\]
	As shown in the proof of this proposition (see Appendix~\ref{sec:proofs}), this expression always belongs to $\mathcal{L}or$, so that $\psi^{-1}$ is well defined as a map from $\mathbb{R}\times C([0,1])$ to $\mathcal{L}or$.
	Moreover, the maps involved in the construction of $\psi^{-1}$, namely
	\[
	(b,g)\mapsto (e^b,-e^g), \qquad
	(a,f)\mapsto BVP(a,f), \qquad
	u\mapsto \phi^{-1}(\operatorname{graph} u),
	\]
	are smooth on their natural domains. Consequently, after equipping $\mathcal{L}or$ with the smooth structure pulled back through $\psi$ from $\mathbb{R}\times C([0,1])$, the map
	\[
	\psi:\mathcal{L}or\longrightarrow \mathbb{R}\times C([0,1])
	\]
	is a diffeomorphism.
\end{proposition}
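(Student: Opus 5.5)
The plan is to establish, in order, that $\psi$ is well defined with values in $\mathbb{R}\times C([0,1])$, that it is injective, and that the explicit candidate $\psi^{-1}$ maps every pair $(b,g)$ into $\mathcal{L}or$ and is a two-sided inverse; the diffeomorphism claim then follows from smoothness of the three building blocks.

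\emph{Well-definedness.} By Proposition~\ref{pro:definition_L-tilde}, $\ell=\varphi(L)$ is concave, $C^2$ on $[0,1]$, with $\ell(0)=\pi/(1-\pi)>0$, so $\psi_1(L)=\ln\ell(0)\in\mathbb{R}$. By \eqref{eq:ell''} and the positivity assumptions in \ref{Lor-prop:derivable_at_q}--\ref{Lor-prop:derivable_at_1}, $\ell''<0$ on $[0,1]$. This includes the endpoint $x=0$ (corresponding to $p=\pi^+$) and $x=1$ (corresponding to $p=1$), where the denominator $L+L'(1-p)$ stays positive and finite because $L'(\pi^+)>0$. Since $\ell''$ is continuous and strictly negative on the compact interval, $\ln(-\ell'')\in C([0,1])$.

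\emph{Injectivity.} By Proposition~\ref{pro:BVP_solution}, $\ell=BVP(\ell(0),\ell'')$ is determined by $\psi(L)$. The graph of $L$ on $[\pi,1]$ is then $\varphi^{-1}(\operatorname{graph}\ell)$ (Proposition~\ref{pro:perspective}), and $\pi$ is read off from $\ell(0)$ via $\pi=\ell(0)/(1+\ell(0))$, which recovers $L\equiv0$ on $[0,\pi]$. Hence $\psi$ is injective, and $\psi^{-1}\circ\psi=\mathrm{id}$ holds with the stated formula.

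\emph{Surjectivity, the main step.} Fix $(b,g)$ and set $u=BVP(e^b,-e^g)$. Then $u$ is $C^2$, strictly concave, with $u(0)=e^b>0$ and $u(1)=0$. I must show that $\varphi^{-1}(\operatorname{graph}u)$, completed by the flat segment on $[0,\pi]$ with $\pi=e^b/(1+e^b)$, is the graph of some $L\in\mathcal{L}or$. Writing $\varphi^{-1}(x,u(x))=\bigl((x+u)/(u+1),\,x/(u+1)\bigr)$, I need four facts:
\begin{enumerate}[label=(\roman*)]
\item $u>-1$ on $[0,1]$, so the map is defined. Concavity gives $u\ge (1-x)e^b\ge0$, since a concave function lies above its chord.
\item $p(x)=(x+u(x))/(u(x)+1)$ is strictly increasing from $\pi$ to $1$. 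Compute $p'(x)=\bigl(1+u-u'(1-x)\bigr)/(1+u)^2$ and use concavity with $u(1)=0$: $-u'(x)(1-x)\ge u(x)-u(1)\ge 0$, so $p'>0$.
\item $L=x/(u+1)$ as a function of $p$ is convex and strictly increasing, with $L''>0$. Invert the computation \eqref{eq:ell''}: the symmetric formula expresses $L''$ in terms of $-u''>0$ times a positive factor, giving $L''>0$ on $(\pi,1]$ with finite positive one-sided limits at $\pi$ and $1$. Then $L'(\pi^+)=1/(1+e^b)\cdot(\ldots)>0$ must be checked from $dL/dp$ at $x=0$.
\item Convexity across $p=\pi$. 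This follows because $L'\ge0$ is nondecreasing and $L'=0$ on $[0,\pi]$.
\end{enumerate}
These give \ref{Lor-prop:1}--\ref{Lor-prop:derivable_at_1}, and $\psi(\psi^{-1}(b,g))=(b,g)$ follows by Proposition~\ref{pro:BVP_solution} uniqueness. I expect (iii) and the endpoint behaviour at $x=0$ to be the main obstacle. The plan there is to derive the inverse-parametrization formulas directly: $dL/dp=\bigl(1+u-xu'\bigr)\big/\bigl(1+u-(1-x)u'\bigr)$, with $d^2L/dp^2$ proportional to $-u''(1+u)^3/(1+u-(1-x)u')^3$, and check sign and finiteness at $x=0,1$.

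\emph{Smoothness.} The map $(b,g)\mapsto(e^b,-e^g)$ is a Nemytskii operator with smooth (exponential) integrand on $C([0,1])$, hence $C^\infty$. The map $(a,f)\mapsto BVP(a,f)$ is continuous affine by \eqref{eq:BVP_solution}, so it is smooth and gives continuous dependence. The map $u\mapsto\varphi^{-1}(\operatorname{graph}u)$ is a composition of rational operations, with denominators bounded away from zero by (i), and an inversion of the strictly monotone reparametrization $x\mapsto p(x)$ from (ii). Its smoothness, in the sup norm, follows from the implicit function theorem in Banach spaces applied to $p(x)=p$, using $p'>0$ uniformly. With the smooth structure on $\mathcal{L}or$ pulled back through $\psi$, $\psi$ is then tautologically a diffeomorphism. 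Continuity of $\psi^{-1}$ into $(\mathcal{L}or,\|\cdot\|_\infty)$ comes from the same composition.
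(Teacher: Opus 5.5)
Your architecture matches the paper's. Injectivity goes through $\psi^{-1}\circ\psi=\mathrm{id}$ via the BVP representation of $\ell$. Surjectivity comes from showing $\varphi^{-1}[BVP(e^b,-e^g)]\in\mathcal{L}or$, which is Lemma~\ref{lem:surjectivity} in the paper. The diffeomorphism is then tautological for the pulled-back structure.

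The gap is in the computation that carries your surjectivity step. Differentiating $p(x)=(x+u)/(u+1)$ gives
\[
p'(x)=\frac{(1+u')(1+u)-(x+u)u'}{(1+u)^2}=\frac{1+u(x)+u'(x)(1-x)}{(1+u(x))^2},
\]
with a plus sign, not $1+u-u'(1-x)$. The inequality you use to sign it, $-u'(x)(1-x)\ge u(x)-u(1)$, is the tangent-line inequality of a \emph{convex} function. For the concave $u$ the tangent at $x$ lies above the graph, so $u(1)\le u(x)+u'(x)(1-x)$, i.e.\ $-u'(x)(1-x)\le u(x)$.

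The two slips cancel in your conclusion $p'>0$, but neither statement holds. Take $(b,g)=(0,\ln 8)$, so $u(x)=1+3x-4x^2$. At $x=0$ your numerator is $1+1-3=-1$, and your inequality reads $-3\ge 1$. The same wrong denominator $1+u-(1-x)u'$ reappears in your formulas for $dL/dp$ and $d^2L/dp^2$, and the latter carries it to the third power. So the sign of $L''$ is not established by what you wrote. That leaves \ref{Lor-prop:derivable_at_q}, \ref{Lor-prop:derivable_at_1}, strict monotonicity and your convexity argument (iv) unsupported.

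The repair is the paper's: apply the tangent-line inequality of the concave $u$ at \emph{both} endpoints, $y=1$ and $y=0$. This gives
\[
N(x):=1+u(x)+u'(x)(1-x)\ge 1,\qquad M(x):=1+u(x)-xu'(x)\ge 1+e^b,
\]
for all $x\in[0,1]$. Then $p'=N/(1+u)^2>0$ and $L'(p(x))=M/N>0$. In particular $L'(\pi^+)=(1+e^b)/N(0)\in(0,\infty)$, which settles the check you left open. Using $xN+(1-x)M=1+u$, you get $L''(p(x))=-u''(x)(1+u(x))^3/N(x)^3=e^{g(x)}(1+u(x))^3/N(x)^3$, which is finite and strictly positive on all of $[0,1]$.

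Once these are corrected, your analytic convexity argument in (iv) is a valid alternative to the paper's geometric one, which maps the convex hypograph of $u$ through $\varphi^{-1}$. Your argument uses that $L'$ vanishes on $[0,\pi)$, is positive at $\pi^+$, and is increasing thereafter.

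Finally, your claim that $u\mapsto\varphi^{-1}(\operatorname{graph}u)$ is $C^\infty$ in the sup norm is not delivered by the implicit function theorem as sketched. Composing with $p_u^{-1}$ costs a derivative of $u$ at each order of differentiation. You do not need this claim: the paper only proves continuity of the composition, and with the pulled-back smooth structure the diffeomorphism statement holds by construction.
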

In other words, given a pair $(b,g) \in \mathbb{R}\times C([0,1])$, the inverse $\psi^{-1}$ proceeds in three steps: first, the intercept and second derivative of $\ell$ are recovered as $\ell(0) = \exp(b)$ and $\ell'' = -\exp(g)$; second, $\ell$ is reconstructed via the BVP solution $BVP(\exp(b), -\exp(g))$, which is well-posed by Proposition~\ref{pro:BVP_solution}; third, $\varphi^{-1}$ maps $\ell$ back to a \ac{LC} $L \in \mathcal{L}or$.

Mathematically, since $\psi$ is a diffeomorphism, one can equivalently work with a \ac{LC} $L\in \mathcal{L}or$ or with its image $\psi(L)\in \mathbb{R} \times C([0,1])$, and all convergence results established in $\mathbb{R} \times C([0,1])$ (see Propositions~\ref{pro:cons-MoV-full-observed-LCs} and~\ref{pro:cons-est-MoV}) can be continuously pulled back to $\mathcal{L}or$, ensuring full interpretability of the results. Moreover, the perspective $\varphi$ acts as a nonlinear transformation that simultaneously reparametrizes the domain of $L$ from $(\pi,1]$ to $[0,1]$ and maps its values from $[0,1]$ to $[0,+\infty)$, in the spirit of curve registration methods in FDA \citep{ram:sil05}: the scalar component $\psi_1(L) = \ln(\pi/(1-\pi))$ captures the variability of the discrete part of $\Gamma$, while the functional component $\psi_2(L) = \ln(-\ell'')$ captures that of the absolutely continuous part.

Finally, this transformation also has an interpretive role, in analogy with the interpretation of the second derivative of a \ac{LC}, $L''(p)$ as a local inequality weight at the $p$-th quantile (see \cite{bon:goi19}), the second derivative $\ell''(x)$ for $x\in(0,1)$ admits a natural interpretation as the local radial weight of inequality along the direction $y = -x(x-1)^{-1}(p-1)$, i.e.\@ the straight line passing through $(1,0)$ with slope $-x(x-1)^{-1}$ and $y$-intercept $x(x-1)^{-1}$ (the odds ratio of $x$).
The point \((1,0)\) therefore provides a natural pivot for describing radial deviations from the equality line under the perspective transformation: it is the centre of projection, and the pencil of lines through it -- mapped by $\varphi$ into the vertical lines along which $\ell''$ is read -- gives the radial directions along which inequality is measured. Loosely speaking, $(1,0)$ plays the role of a centre of attraction pulling the curve away from perfect equality, the local weight $-\ell''(x)$ measuring the strength of this pull in the direction $x$; the analogy is of course only suggestive, and the formal content is the projective structure just described.

\section{Modes of Variations and asymptotics}
\label{sec:consistency_results}

This section presents the main statistical results and is divided into three parts. In the first (and present) part the notion of random \ac{LC} is introduced together with two main assumptions. In the second part (see~Subsection~\ref{sec:covariance&modes}) the covariance operator, the modes of variations of a random \ac{LC} and their empirical counterparts are introduced. The main results in this subsection include some properties for the covariance operator and the $L^2(\mathbb{P})$ and a.s.\@ consistency for the modes of variations. Finally, Subsection~\ref{sec:double_stoc} discusses the problem of double stochasticity which is anticipated in the introduction and arises whenever each \ac{LC} realization must be estimated from a sample of the underlying random variable.

Thanks to the diffeomorphism $\psi$ introduced in the previous section, the family of \ac{LCs} $\mathcal{L}or$ can be bijectively embedded into $\mathbb{B}=\mathbb{R}\times C([0,1])$, where $C([0,1])$ denotes the family of continuous real--valued functions defined on $[0,1]$. If endowed with the norm $\vert b \vert + \Vert g \Vert_\infty$, $\mathbb{B}$ is a Banach space. Moreover, it is a dense subset of $\mathbb{H} = \mathbb{R}\times L^2_{[0,1]} $ where $L^2_{[0,1]} $ denotes the family of square integrable real--valued functions defined on $[0,1]$ which is a separable Hilbert space whenever it is endowed with the scalar product $\langle (b_1,g_1), (b_2,g_2)\rangle_\mathbb{H} = b_1b_2 + \langle g_1, g_2\rangle_{L^2_{[0,1]} }$. Thus, $\mathcal{L}or$ can exploit the structure of $\mathbb{H}$ in defining the objects necessary to address the \ac{FPCA} problem.
\\
In this view, assume that $(\Omega, \mathcal{A}, \mathbb{P})$ is a probability space, then $L$ is a random element in $\mathcal{L}or$ if its image $\xi=\psi(L)$ is a random element in $(\mathbb{H},\mathcal{F})$ where $\mathcal{F}$ is the $\sigma$--algebra generated by the topology induced by the mentioned scalar product. Moreover, consider the following moment assumptions on $L$.
\begin{assume}\label{hp:momento_4_q}
	The parameter $\pi$ in~\ref{Lor-prop:support}, viewed as a random variable, 
	satisfies 
	\[
	\mathbb{E}\!\left[\ln^4\!\left(\frac{\pi}{1-\pi}\right)\right] < \infty.
	\]
\end{assume}
\begin{assume}\label{hp:momento_4_L}
	The second derivative of $\ell$ satisfies
	\[
	\mathbb{E}\left[ \sup_{x\in[0,1]} \left\vert \ln\left(-\ell''\left(x\right)\right) \right\vert^4 \right] < \infty.
	\]
\end{assume}

Assumption~\ref{hp:momento_4_q} is a moment condition on the log-odds of the structural zero mass, satisfied whenever $\pi$ does not concentrate too close to $0$ or $1$ in the population. Assumption~\ref{hp:momento_4_L} is a uniform moment condition on the second derivative of $L$: recalling from Equation~\eqref{eq:ell''} that, for $x=x(p)$,
\[
-\ell''(x(p)) = \frac{L''(p)\left(L(p)-p+1\right)^3}{\left(L(p)+L'(p)(1-p)\right)^3}, \qquad p\in\left[\pi,1\right],
\]
the quantity inside the supremum measures, at each $p$, the curvature of $L$ relative to its own slope, weighted by a term that depends on the geometry of the support $\left[\pi,1\right]$ and degenerates near its endpoints, where $L$ approaches its flat segment on one side and its slope at $p=1$ on the other. Assumption~\ref{hp:momento_4_L} therefore requires that this relative curvature does not go to infinity, in a fourth-moment sense, anywhere over the support of $L$ -- including regions where $L''$ may become very small (near $p=\pi$, where the \ac{LC} straightens out) or where it may grow large (near $p=1$) -- providing a uniform counterpart to the pointwise control already implied by~\ref{Lor-prop:derivable_at_q}--\ref{Lor-prop:derivable_at_1}. Both assumptions are tailored to the specific structure of the embedding defined in Section~\ref{sec:diffeomorfismo} and together guarantee the existence of the covariance operator and allow the almost sure and $L^2(\mathbb{P})$ consistency results of~\cite{bos00,hor:kok12} to be applied.

\subsection{Covariance Structure and Empirical Modes of Variations}
\label{sec:covariance&modes}

Since $\xi = (\xi^{(1)},\xi^{(2)}) = (\ln(\ell(0)),\ln(-\ell''))$ where $\xi^{(1)}$ is a real random variable and $\xi^{(2)}$ a random process in $C([0,1])$, its covariance operator can be written using a block matrix structure in the following way
\begin{equation}
	\Sigma_\xi=\left[
	\begin{array}{cc}
		\sigma^2_{\xi^{(1)}}	& \Sigma_{\xi^{(1)}, \xi^{(2)}}
		\\ 
		\Sigma_{\xi^{(2)}, \xi^{(1)}} & \Sigma_{\xi^{(2)}}
	\end{array}
	\right],
	\label{eq:block_theoretical_cov}
\end{equation}
where, for any $f\in L^2_{[0,1]}$ and $x\in [0,1]$,
\begin{align*}
	\sigma^2_{\xi^{(1)}} & =
	\mathbb{E}\left[ \left( \xi^{(1)} - \mathbb{E}\left[\xi^{(1)} \right] \right)^2 \right],
	\\
	\Sigma_{\xi^{(2)}} (f) &= \mathbb{E}\left[ \left( \xi^{(2)} - \mathbb{E}\left[\xi^{(2)} \right] \right) \left\langle \xi^{(2)} - \mathbb{E}\left[\xi^{(2)}\right] , f \right\rangle_{L^2_{[0,1]}} \right],
	\\
	\Sigma_{\xi^{(1)}, \xi^{(2)}}(x) & = \Sigma_{\xi^{(2)}, \xi^{(1)}}(x) = \mathbb{E}\left[ \left( \xi^{(1)} - \mathbb{E}\left[\xi^{(1)} \right] \right) \left( \xi^{(2)}(x) - \mathbb{E}\left[\xi^{(2)}(x)\right] \right) \right]
	\\
	& = Cov\left(\xi^{(1)} , \xi^{(2)}(x)\right).	
\end{align*}
Moreover, $Cov(\xi^{(2)}(x_1), \xi^{(2)}(x_2))$ is the covariance function of $\Sigma_{\xi^{(2)}}$ where $x_1, x_2\in [0,1]$. 
Next proposition establishes some useful properties of $\Sigma_\xi$, in particular, it shows that the second component of the eigenvector associated to such operator is continuous. 
\begin{proposition}\label{pro:eigenfunction_continuous}
	Consider $\Sigma_{\xi}$ as in Equation~\eqref{eq:block_theoretical_cov} and assume \ref{hp:momento_4_q} and \ref{hp:momento_4_L}, then the covariance $Cov(\xi^{(2)}(x_1), \xi^{(2)}(x_2))$ and the cross-covariance $Cov(\xi^{(1)}, \xi^{(2)}(x))$ are continuous functions for any $x,x_1,x_2 \in [0,1]$. 
	\\
	Moreover, if $\nu\in \mathbb{H}= \mathbb{R} \times L^2_{[0,1]}$ is an eigenvector of $\Sigma_{\xi}$ associated with eigenvalue $\alpha$, then $\nu\in \mathbb{B}=\mathbb{R}\times C([0,1])$.
\end{proposition}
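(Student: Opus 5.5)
The plan has two parts. First I show that the covariance and cross-covariance functions are continuous. Then I use a kernel representation of $\Sigma_\xi$ to transfer that continuity to the eigenvectors.

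\textbf{Continuity of the covariance functions.} Write $\eta^{(1)}=\xi^{(1)}-\mathbb{E}[\xi^{(1)}]$ and $\eta^{(2)}=\xi^{(2)}-\mathbb{E}[\xi^{(2)}]$. Assumptions \ref{hp:momento_4_q} and \ref{hp:momento_4_L} give $\mathbb{E}|\xi^{(1)}|^2<\infty$ and $\mathbb{E}\|\xi^{(2)}\|_\infty^2<\infty$; fourth moments are more than needed here. Hence $\mathbb{E}[\xi^{(2)}(x)]$ is well defined for every $x$. It is continuous in $x$ by dominated convergence: each path $\xi^{(2)}(\omega)\in C([0,1])$ by Proposition~\ref{pro:psi_bi-continuous_bijection}, and $\|\xi^{(2)}\|_\infty$ is an integrable dominating variable.

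Take $x_n\to x$ and $x_n'\to x'$. Pathwise continuity gives $\eta^{(2)}(x_n)\eta^{(2)}(x_n')\to\eta^{(2)}(x)\eta^{(2)}(x')$ almost surely. The products are dominated by $\|\eta^{(2)}\|_\infty^2$, which is integrable, so dominated convergence yields continuity of $c(x_1,x_2)=Cov(\xi^{(2)}(x_1),\xi^{(2)}(x_2))$ on $[0,1]^2$.

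The same argument handles $k(x)=Cov(\xi^{(1)},\xi^{(2)}(x))$. The dominating variable is $|\eta^{(1)}|\,\|\eta^{(2)}\|_\infty$, which is integrable by Cauchy--Schwarz.

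\textbf{Kernel representation of $\Sigma_\xi$.} Let $\nu=(\nu_1,\nu_2)\in\mathbb{H}$. Then
\[
\Sigma_\xi\nu=\mathbb{E}\bigl[\eta\,\langle\eta,\nu\rangle_{\mathbb{H}}\bigr],\qquad \eta=(\eta^{(1)},\eta^{(2)}).
\]
Expanding the inner product and applying Fubini gives the two components explicitly. The scalar component is $\sigma^2_{\xi^{(1)}}\nu_1+\int_0^1 k(t)\nu_2(t)\,dt$. The functional component is
\[
(\Sigma_\xi\nu)_2(x)=k(x)\,\nu_1+\int_0^1 c(x,t)\,\nu_2(t)\,dt .
\]
Fubini is justified because $\mathbb{E}\int_0^1|\eta^{(2)}(x)\eta^{(2)}(t)\nu_2(t)|\,dt\le \mathbb{E}\|\eta^{(2)}\|_\infty^2\,\|\nu_2\|_{L^1}<\infty$, and similarly for the cross terms.

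\textbf{Continuity of eigenvectors.} Suppose $\Sigma_\xi\nu=\alpha\nu$. The right-hand side above is continuous in $x$ for every $\nu_2\in L^2_{[0,1]}$:
\begin{itemize}
\item $k$ is continuous, so $k(x)\nu_1$ is continuous;
\item $c$ is uniformly continuous on the compact set $[0,1]^2$, so $\bigl|\int_0^1 (c(x,t)-c(x',t))\nu_2(t)\,dt\bigr|\le \sup_t|c(x,t)-c(x',t)|\,\|\nu_2\|_{L^2}\to0$ as $x'\to x$.
\end{itemize}

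If $\alpha\neq0$, then $\nu_2=\alpha^{-1}(\Sigma_\xi\nu)_2$ almost everywhere. So $\nu_2$ has a continuous representative, and $\nu\in\mathbb{B}$.

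The main obstacle is the case $\alpha=0$. There the eigenvalue equation only says that $\nu$ lies in the kernel of $\Sigma_\xi$, and kernel elements need not be continuous. I would handle it by reading the statement for eigenvalues $\alpha>0$, which are the ones relevant to FPCA. Alternatively, one can note that $\ker\Sigma_\xi$ contains discontinuous elements, so the claim must be understood as saying that each eigenvector admits a continuous representative whenever $\alpha\neq0$. Since $\Sigma_\xi$ is positive semidefinite and trace class (its trace is $\mathbb{E}\|\eta\|_{\mathbb{H}}^2<\infty$), every nonzero eigenvalue is positive, and the argument above covers them all.
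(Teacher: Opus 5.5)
Your proposal is correct and follows essentially the paper's route: continuity of $c$ and of the cross-covariance by dominated convergence, where you dominate the products directly while the paper first proves mean-square continuity and then applies Cauchy--Schwarz, followed by reading off continuity of $\nu^{(2)}$ from the second block row of $\Sigma_\xi\nu=\alpha\nu$. You are more careful than the paper in three places: you justify the kernel representation via Fubini, you show that $\int_0^1 c(\cdot,t)\nu^{(2)}(t)\,dt$ is continuous, and you flag that dividing by $\alpha$ requires $\alpha\neq0$, which the paper does silently; one correction is that $\ker\Sigma_\xi$ only \emph{may} contain discontinuous elements, since it can also be trivial.
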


Denote by $\{\alpha_j, \nu_j\}$ the eigenvalues and eigenvectors of $\Sigma_\xi$ with $\alpha_j > \alpha_{j+1}>0$ for any $j\in \mathbb{N}$. It is assumed that $\alpha_j$ is simple and well separated from its neighbours, i.e.\@
\[
\delta_j := \min\left\{\alpha_{j-1}-\alpha_j,\ \alpha_j-\alpha_{j+1}\right\} > 0,
\]
with the convention $\delta_1 := \alpha_1-\alpha_2$ when $j=1$.
Within the framework of PCA, the eigenvectors $\nu_{j}$ are defined as the directions in $\mathbb{H}$ that sequentially maximize the variance of the process $\xi $. Specifically, $\nu_{j}$ is the direction along which $\xi $ exhibits the largest possible variance $\alpha_{j}$, subject to being orthogonal to the previous $(j-1)$ eigenvectors.
From this perspective, to qualitatively assess the variability of $\xi$ along such a direction, it is useful to define and visualize the $j$-th theoretical Mode of Variation (th-MoV) (see, e.g., \citealp[Section~8.3]{ram:sil05}):
\begin{equation}\label{eq:xi_theor_MoV}
	m_{j}^k(x) = \mu(x) \pm k \sqrt{\alpha_j} \nu_j(x), \qquad x\in [0,1], %
\end{equation}
where $\mu = \mathbb{E}[\xi]$ and $k>0$, typically chosen as $k\in \{1,2,3\}$, represents the number of standard deviations from the mean. Note that, since $\xi \in \mathbb{B} = \mathbb{R}\times C([0,1])$, the dependence on $x$ in \eqref{eq:xi_theor_MoV} pertains exclusively to the second (functional) component of the elements involved. 
In practice, these modes are usually investigated only for the first few eigenfunctions (typically $j=1,\dots ,J$ with $J\in \{3,4,5\}$). These directions do not only account for most of the process variability, but they also correspond to those characterized by the lowest estimation error (see e.g.\@ \citealp{hal:hos06,hor:kok12}).
Proposition~\ref{pro:eigenfunction_continuous} implies that $\nu _{j}\in \mathbb{B}$, ensuring that the th-MoVs are elements of $\mathbb{B}$ as well. This fact allows the Mode of Variation for the \ac{LC} (LC-MoV) to be defined as the counterimage of $\psi$ as follows
\begin{equation*}\label{eq:LC_theor_MoV}
	M_{j}^k = \psi^{-1}\left(m_{j}^k\right), %
\end{equation*}

As anticipated in the Introduction, the objective is to provide an inferential framework for consistently estimating $M_{j}^k$.
To do this, consider a sample of (fully observed) random \ac{LCs}, that is $L_1,\ldots, L_n$ i.i.d.\@ as $L$ in $\mathcal{L}or$, and the corresponding sample $\xi_1=\psi(L_1),\ldots, \xi_n=\psi(L_n)$ from which it is possible to compute the empirical covariance operator $\Sigma_{\xi,n}$ and its eigenelements $\{\alpha_{j,n}, \nu_{j,n} \}_{j=1}^n$. Consequently, the $j$--th Mode of Variation estimated from a sample of fully observed \ac{LCs} (fo-MoV) is
\begin{equation*}\label{eq:xi_fo_MoV}
	m_{j,n}^k = \mu_n \pm k \sqrt{\alpha_{j,n}} \nu_{j,n}, %
\end{equation*}
where $1\le j\le n$, $k>0$ and $\mu_n = n^{-1}\sum_{i=1}^{n}\xi_i$. 
Moreover, since $\Sigma_{\xi,n}$ is a covariance operator defined on $\mathbb{H}$, the results of Proposition~\ref{pro:eigenfunction_continuous} can be adapted to show that the empirical eigenfunctions $\nu_{j,n}$ also belong to $\mathbb{B}$. 
Consequently, the fo-MoVs are elements of the same space, so that the Mode of Variation for the \ac{LC} (LC-fo-MoV) can be defined through the inverse mapping $\psi ^{-1}$ as follows:
\begin{equation*}\label{eq:LC_fo_MoV}
	M_{j,n}^k = \psi^{-1}\left(m_{j,n}^k \right).%
\end{equation*}
In this setting, the following proposition holds true (see Appendix~\ref{sec:proofs} for the proof).
\begin{proposition}\label{pro:cons-MoV-full-observed-LCs}
	Under assumptions \ref{hp:momento_4_q} and \ref{hp:momento_4_L}, for a given $k\ge 0$ and as $n \to \infty$,
	\[
	\left\Vert m_{j,n}^k - m_{j}^k \right\Vert_\mathbb{B} \to 0, \qquad
	\left\Vert M_{j,n}^k - M_{j}^k \right\Vert_\infty \to 0 \qquad a.s..
	\]
	Moreover, the convergence is also in $L^2(\mathbb{P})$, that is, as $n\to \infty$,
	\[
	\mathbb{E}\left[ \left\Vert m_{j,n}^k - m_{j}^k \right\Vert_\mathbb{B}^2 \right] \to 0,
	\qquad
	\mathbb{E}\left[ \left\Vert M_{j,n}^k - M_{j}^k \right\Vert_\infty^2 \right] \to 0.
	\]
\end{proposition}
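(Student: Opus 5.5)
Since $\psi^{-1}$ is continuous from $\mathbb{B}$ onto $(\mathcal{L}or,\|\cdot\|_\infty)$ (Proposition~\ref{pro:psi_bi-continuous_bijection}), the plan is to prove the $\mathbb{B}$-convergence of $m_{j,n}^k$ first and then transfer it to $M_{j,n}^k$. In $\mathbb{B}$ we write
\[
\|m_{j,n}^k-m_j^k\|_\mathbb{B}\le \|\mu_n-\mu\|_\mathbb{B}+k\bigl|\sqrt{\alpha_{j,n}}-\sqrt{\alpha_j}\bigr|\,\|\nu_{j,n}\|_\mathbb{B}+k\sqrt{\alpha_j}\,\|\nu_{j,n}-\nu_j\|_\mathbb{B},
\]
with the usual sign convention $\langle\nu_{j,n},\nu_j\rangle_\mathbb{H}\ge0$. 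Each of the three terms is then handled separately.

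\textbf{Mean and eigenvalues.} By \ref{hp:momento_4_q}--\ref{hp:momento_4_L}, $\mathbb{E}\|\xi\|_\mathbb{B}^4<\infty$, so $\xi$ is a random element of the separable Banach space $C([0,1])\times\mathbb{R}$. The Banach-space SLLN (Mourier) gives $\|\mu_n-\mu\|_\mathbb{B}\to0$ a.s. For the $L^2(\mathbb{P})$ version, the bound $\mathbb{E}\|\mu_n-\mu\|^2_\mathbb{B}\le 4\mathbb{E}\|\xi\|^2_\mathbb{B}$ together with uniform integrability (via the fourth moment) gives dominated convergence. In $\mathbb{H}$, the empirical covariance $\Sigma_{\xi,n}\to\Sigma_\xi$ in Hilbert--Schmidt norm a.s. and in $L^2(\mathbb{P})$ \citep{bos00,hor:kok12}. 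Weyl's inequality $|\alpha_{j,n}-\alpha_j|\le\|\Sigma_{\xi,n}-\Sigma_\xi\|_{HS}$ then yields eigenvalue convergence, and since $\alpha_j>0$ the same holds for the square roots.

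\textbf{Eigenfunctions in sup-norm (main obstacle).} The standard bound $\|\nu_{j,n}-\nu_j\|_\mathbb{H}\le 2\sqrt2\,\delta_j^{-1}\|\Sigma_{\xi,n}-\Sigma_\xi\|$ controls only the $L^2$ distance, so it must be upgraded to the $\mathbb{B}$-norm. I would use the eigen-equation pointwise, as in the proof of Proposition~\ref{pro:eigenfunction_continuous}: for the functional component,
\[
\nu^{(2)}_j(x)=\alpha_j^{-1}\Bigl(c_{12}(x)\nu^{(1)}_j+\int_0^1 c_{22}(x,t)\nu^{(2)}_j(t)\,dt\Bigr),
\]
and the analogous identity holds with empirical kernels $c_{12,n}$, $c_{22,n}$ and $\alpha_{j,n}$. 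Subtracting the two identities and applying Cauchy--Schwarz gives
\[
\|\nu_{j,n}^{(2)}-\nu_j^{(2)}\|_\infty\lesssim |\alpha_{j,n}^{-1}-\alpha_j^{-1}|\,C_n+\alpha_j^{-1}\bigl(\|c_{\cdot,n}-c_\cdot\|_\infty+\|c_\cdot\|_\infty\|\nu_{j,n}-\nu_j\|_\mathbb{H}\bigr),
\]
where $C_n$ bounds the empirical kernels in sup-norm and $\|\nu_{j,n}\|_\mathbb{H}=1$. What remains is uniform convergence of the covariance kernels. This follows from the SLLN in the separable Banach space $C([0,1]^2)$ applied to $\xi^{(2)}\otimes\xi^{(2)}$ (and in $C([0,1])$ for $\xi^{(1)}\xi^{(2)}$), which have finite second moments in sup-norm by the fourth-moment assumptions, combined with $\mu_n\to\mu$. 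This gives a.s. convergence, and in particular $C_n$ is a.s. bounded and $\|\nu_{j,n}\|_\mathbb{B}$ is bounded. For $L^2(\mathbb{P})$, the fourth moments give uniform integrability of $\|c_{\cdot,n}-c_\cdot\|_\infty^2$. The factor $\alpha_{j,n}^{-1}$ is delicate, since it may blow up on small-probability events; I would truncate on $\{\alpha_{j,n}>\alpha_j/2\}$ and bound the complement's contribution by Chebyshev together with the fourth moments. This truncation step is the one I expect to be hardest.

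\textbf{Transfer to $\mathcal{L}or$.} Almost sure convergence of $M_{j,n}^k$ follows directly from the continuity of $\psi^{-1}$. For the $L^2(\mathbb{P})$ statement we use that both $M_{j,n}^k$ and $M_j^k$ take values in $[0,1]$, so $\|M_{j,n}^k-M_j^k\|_\infty\le1$; bounded convergence then upgrades a.s. convergence to convergence in $L^2(\mathbb{P})$.
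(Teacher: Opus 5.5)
Your almost-sure argument is the paper's: the Banach-space SLLN for $\mu_n$, the perturbation bounds in $\mathbb{H}$, the upgrade to the sup norm through the functional component of the eigen-equation with the SLLN in $C([0,1]^2)$ and $C([0,1])$, and continuity of $\psi^{-1}$ plus bounded convergence for $M_{j,n}^k$. The gap is in the $L^2(\mathbb{P})$ statement for $m_{j,n}^k$.

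Your decomposition isolates $\Vert\nu_{j,n}\Vert_\mathbb{B}$ and $\Vert\nu_{j,n}-\nu_j\Vert_\mathbb{B}$, and these have no moment control. The only general bound is of the form $\Vert\nu_{j,n}^{(2)}\Vert_\infty\lesssim S_n^{1/2}\alpha_{j,n}^{-1/2}$, with $S_n=n^{-1}\sum_i\Vert\xi_i-\mu_n\Vert_\mathbb{B}^2$. The ratio $\Vert\cdot\Vert_\infty/\Vert\cdot\Vert_{L^2_{[0,1]}}$ on the span of the centred sample is scale-invariant, so no moment assumption bounds it. Likewise, nothing in \ref{hp:momento_4_q}--\ref{hp:momento_4_L} gives moments of $\alpha_{j,n}^{-1}$.

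Suppose, for instance, the law of $\xi$ has an atom. Then $\alpha_{j,n}=0$ with positive probability for every $n$. On that event $\nu_{j,n}$ is an arbitrary unit vector of $\mathbb{H}$, not necessarily in $\mathbb{B}$, so $\mathbb{E}\Vert\nu_{j,n}-\nu_j\Vert_\mathbb{B}^2$ can be infinite for all $n$.

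Truncating on $\{\alpha_{j,n}>\alpha_j/2\}$ does not repair this. Chebyshev only makes $\mathbb{P}(\alpha_{j,n}\le\alpha_j/2)$ small. To make $\mathbb{E}[Z_n^2\mathbf{1}_{\{\alpha_{j,n}\le\alpha_j/2\}}]$ small you also need a uniform higher-moment bound on the integrand $Z_n$, and that is exactly what is missing.

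The missing idea is to control the product $\sqrt{\alpha_{j,n}}\,\nu_{j,n}$ in $\mathbb{B}$, never $\nu_{j,n}$ alone. Write the eigen-equation componentwise:
$\alpha_{j,n}\nu_{j,n}^{(2)}(x)=n^{-1}\sum_i(\xi_i^{(2)}(x)-\mu_n^{(2)}(x))\langle\xi_i-\mu_n,\nu_{j,n}\rangle_\mathbb{H}$, and analogously for $\nu_{j,n}^{(1)}$. Combine this with the identity $n^{-1}\sum_i\langle\xi_i-\mu_n,\nu_{j,n}\rangle_\mathbb{H}^2=\alpha_{j,n}$. Cauchy--Schwarz then gives $\sqrt{\alpha_{j,n}}\,\Vert\nu_{j,n}\Vert_\mathbb{B}\le 2S_n^{1/2}\le 4T_n^{1/2}$, with $T_n=n^{-1}\sum_i\Vert\xi_i-\mu\Vert_\mathbb{B}^2$. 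This holds for every $\omega$, with no lower bound on $\alpha_{j,n}$.

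Bounding the unsplit difference $\Vert\mu_n-\mu\Vert_\mathbb{B}+k\sqrt{\alpha_{j,n}}\Vert\nu_{j,n}\Vert_\mathbb{B}+k\sqrt{\alpha_j}\Vert\nu_j\Vert_\mathbb{B}$ then yields $\Vert m_{j,n}^k-m_j^k\Vert_\mathbb{B}^2\le C(1+T_n)$. Since $T_n$ is an i.i.d.\ sample mean converging in $L^1(\mathbb{P})$, it is uniformly integrable. Almost-sure convergence plus uniform integrability then gives $L^2(\mathbb{P})$ convergence directly. This is the paper's route: it needs no truncation, no control of $\alpha_{j,n}^{-1}$, and only second moments in $\mathbb{B}$ at this step.
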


Proposition~\ref{pro:cons-MoV-full-observed-LCs} implicitly guarantees that $\mu_n, \alpha_{j,n}, \nu_{j,n}$ are a.s.\@ (and $L^2(\mathbb{P})$) consistent estimators for $\mu, \alpha_{j}, \nu_{j}$.

\subsection{Dealing with Double Stochasticity}
\label{sec:double_stoc}

Proposition~\ref{pro:cons-MoV-full-observed-LCs} provides an inferential result for the MoV assuming fully observed \ac{LCs}. However, as anticipated in the Introduction, practical settings typically involve a double source of randomness. First, the theoretical \ac{LCs} $L_1, \ldots, L_n$ are sampled i.i.d.\@ as $L$ from $\mathcal{L}or$, each associated with a non-negative random variable $X_i$ drawn from a population $\mathcal{X}$. Second, these curves are not directly observed; instead, for each $X_i$, a random sample $(X_i^1, \ldots, X_i^{N_i})$ is drawn, with $N_{\min} := \min_{1\le i \le n} N_i$ and its empirical zero-mass proportion is estimated by
\[
\widehat{\pi}_i := \frac{1}{N_i}\sum_{m=1}^{N_i} \mathbb{I}\{X_i^m=0\}.
\]
From these samples, empirical \ac{LCs} $\widetilde{L}_1,\ldots,\widetilde{L}_n$ are computed via Equation~\eqref{eq:Lorenz_def} using empirical means and quantile functions.
These empirical curves are piecewise linear and therefore do not belong to $\mathcal{L}or$. Crucially, because the use of the mapping $\psi$ requires the second derivatives of $\ell = \varphi(L)$, these raw empirical curves cannot be used directly. In practice, the most convenient approach consists of first applying $\varphi$ to the piecewise linear empirical curves $\widetilde{L}_1,\ldots,\widetilde{L}_n$, and subsequently applying a smoothing procedure. This yields the smooth estimators $\widehat{\ell}_1, \ldots, \widehat{\ell}_n \in \mathbb{B}$. This smoothing step can be performed using techniques such as finite difference methods \cite{qua:sac:sal06}, constrained penalized splines \cite{mey12}, or local polynomial smoothing \cite{fan:gij96}, which produce estimators that satisfy the necessary regularity conditions.
From the given procedure, it is then possible to estimate the mean, the covariance operator and its eigenelements by $\widehat{\mu}_n$, $\widehat{\Sigma}_{\xi ,n}$ and $\{\widehat{\alpha}_{j,n},\widehat{\nu}_{j,n}\}$ respectively. From these objects, the estimated $j$--th Mode of Variation \ac{LCs} (est-MoV) can be written as
\begin{equation*}\label{eq:xi_est_MoV}
	\widehat{m}_{j,n}^k =\widehat{\mu}_n \pm k \sqrt{\widehat{\alpha}_{j,n}}\widehat{\nu}_{j,n}, \qquad \textnormal{(est-MoV)},
\end{equation*}
where $k>0$ and $\widehat{\mu}_n = n^{-1}\sum_{i=1}^{n}\widehat{\xi}_i$. With analogous arguments as above, also the est-MoVs are elements in $\mathbb{B}$ and it is possible to define the Mode of Variation for the \ac{LC} using the inverse mapping $\psi^{-1}$ as follows
\begin{equation*}\label{eq:LC_est_MoV}
	\widehat{M}_{j,n}^k = \psi^{-1}(\widehat{m}_{j,n}^k), \qquad \textnormal{(LC-est-MoV)}.
\end{equation*}
The following result is stated under a high-level assumption on the smooth estimators, formulated directly in the embedded space $\mathbb{B}$, where the proof operates. Verifying this assumption for a specific smoothing procedure is estimator-dependent and is not pursued here in full generality; sufficient conditions are discussed in Remark~\ref{rem:sufficient_A4} below. Since $N_{\min}$ grows with $n$, the estimators form a triangular array, $\widehat{\xi}_i=\widehat{\xi}_{n,i}$, $i=1,\ldots,n$, whereas $\xi_1,\xi_2,\ldots$ are the first elements of a fixed i.i.d.\@ sequence as in Proposition~\ref{pro:cons-MoV-full-observed-LCs}; the dependence on $n$ is omitted when no confusion arises.

\begin{assume}\label{hp:estimator_moment}
	For every $n$ and $i=1,\ldots,n$, the smooth estimator $\widehat{L}_i$ belongs to $\mathcal{L}or$ a.s., so that $\widehat{\xi}_i=\psi(\widehat{L}_i)\in\mathbb{B}$, and
	\begin{enumerate}[label=(\roman*), ref=(\roman*)]
		\item  \label{ass:moment_4_hat} 
		$\displaystyle \sup_{n\ge1}\max_{1\le i\le n}\mathbb{E}\left[\left\Vert\widehat{\xi}_i\right\Vert_\mathbb{B}^4\right]<\infty$;
		\item \label{ass:D_n}
		$\displaystyle D_n:=\frac1n\sum_{i=1}^n\left\Vert\widehat{\xi}_i-\xi_i\right\Vert_\mathbb{B}^2\to0$ a.s.\@ as $n\to\infty$, with $N_{\min}=N_{\min}(n)\to\infty$.
	\end{enumerate}
\end{assume}

\begin{remark}\label{rem:sufficient_A4}
	Assumption~\ref{hp:estimator_moment} separates the statistical argument from the analysis of a specific smoother. Condition~(ii) holds, in particular, whenever $\max_{1\le i\le n}\Vert\widehat{\xi}_i-\xi_i\Vert_\mathbb{B}\to0$ a.s., since $D_n\le\max_{1\le i\le n}\Vert\widehat{\xi}_i-\xi_i\Vert_\mathbb{B}^2$. This uniform version is in turn implied by the following two conditions on the scalar and on the functional component, where $\widehat{\ell}_i=\varphi(\widehat{L}_i)$ and the smoother is constrained so that $\widehat{\ell}_i(0)=\widehat{\pi}_i/(1-\widehat{\pi}_i)$.
	\begin{enumerate}[label=(\alph*)]
		\item \emph{Scalar component.} Suppose that $\pi_i\in[\eta,1-\eta]$ a.s.\@ for some $\eta\in(0,1/2)$ and that $N_{\min}/\log n\to\infty$. Conditionally on $\pi_i$, $\widehat{\pi}_i$ is the average of $N_i$ i.i.d.\@ Bernoulli$(\pi_i)$ variables, so Hoeffding's inequality \citep{hoe63} and the union bound give, for every $\varepsilon>0$,
		\[
		\mathbb{P}\left(\max_{1\le i\le n}\vert\widehat{\pi}_i-\pi_i\vert\ge\varepsilon\right)\le 2n\exp\left(-2N_{\min}\varepsilon^2\right)\le 2n^{-3},
		\]
		as soon as $N_{\min}\ge2\varepsilon^{-2}\log n$, which holds for $n$ large enough. By the Borel--Cantelli Lemma \citep[pag.~255]{shi2016}, applied for every rational $\varepsilon>0$, $\max_{i\le n}\vert\widehat{\pi}_i-\pi_i\vert\to0$ a.s.. In particular, eventually $\widehat{\pi}_i\in[\eta/2,1-\eta/2]$ for all $i\le n$; on this interval $t\mapsto\ln(t/(1-t))$ is Lipschitz with constant $C_\eta=4/(\eta(2-\eta))$, so that $\max_{i\le n}\vert\widehat{\xi}^{(1)}_i-\xi^{(1)}_i\vert\le C_\eta\max_{i\le n}\vert\widehat{\pi}_i-\pi_i\vert\to0$ a.s..
		\item \emph{Functional component.} Suppose that $-\ell_i''\ge\kappa$ on $[0,1]$ a.s.\@ for some $\kappa>0$ and all $i$, and that $\max_{1\le i\le n}\Vert\widehat{\ell}_i''-\ell_i''\Vert_\infty\to0$ a.s.. Then, eventually, $\Vert\widehat{\ell}_i''-\ell_i''\Vert_\infty\le\kappa/2$ for all $i\le n$, hence $-\widehat{\ell}_i''\ge\kappa/2$ and, since $\vert\ln s-\ln t\vert\le\vert s-t\vert/\min\{s,t\}$ for $s,t>0$, $\max_{i\le n}\Vert\widehat{\xi}^{(2)}_i-\xi^{(2)}_i\Vert_\infty\le(2/\kappa)\max_{i\le n}\Vert\widehat{\ell}_i''-\ell_i''\Vert_\infty\to0$ a.s..
	\end{enumerate}
	Condition~(b) is formulated on $\ell_i$ rather than on $L_i$: $\ell_i$ and $\widehat{\ell}_i$ share the domain $[0,1]$, which avoids comparing curves with different supports $(\pi_i,1]$ and $(\widehat{\pi}_i,1]$, and this matches the computational pipeline of Section~\ref{sec:comp_notes}, where smoothing is performed after the perspective transformation. Results for shape-constrained regression splines \citep{mey:kim:wan18,sch07} provide convergence rates of this type under smoothness assumptions on the underlying curves (e.g.\@ four continuous derivatives), but only in $L^2$, in probability, and for a fixed curve; a uniform-in-$i$, sup-norm, almost sure version is not available and is left for future work.
	Regarding condition~(i), note that, for finite $N_i$, $\widehat{\pi}_i\in\{0,1\}$ with positive probability, so $\widehat{\pi}_i$ must be kept away from the boundary in practice, e.g.\@ by truncation to $[1/(2N_i),1-1/(2N_i)]$. Under the setting of~(a), this affects $\widehat{\xi}^{(1)}_i$ only on an event of probability at most $2(1-\eta)^{N_i}$, on which $\vert\widehat{\xi}^{(1)}_i\vert\le\ln(2N_i)$, so that the fourth moment remains uniformly bounded. More generally, condition~(i) holds whenever $\Vert\widehat{\xi}_i-\xi_i\Vert_\mathbb{B}\le K$ a.s.\@ for some constant $K$, since then $\Vert\widehat{\xi}_i\Vert_\mathbb{B}^4\le8(\Vert\xi_i\Vert_\mathbb{B}^4+K^4)$ and Lemma~\ref{lem:boundedness_fourth_moment} applies.
\end{remark}

\begin{proposition}\label{pro:cons-est-MoV}
	Under assumptions \ref{hp:momento_4_q}--\ref{hp:estimator_moment}, for a given $k\ge0$ and as $n\to\infty$,
	\[
	\left\Vert \widehat{m}_{j,n}^k - m_{j}^k \right\Vert_\mathbb{B} \to 0, \qquad
	\left\Vert \widehat{M}_{j,n}^k - M_{j}^k \right\Vert_\infty \to 0 \qquad a.s..
	\]
	Moreover, the convergence is also in $L^2(\mathbb{P})$, that is, as $n\to\infty$,
	\[
	\mathbb{E}\left[ \left\Vert \widehat{m}_{j,n}^k - m_{j}^k \right\Vert_\mathbb{B}^2 \right] \to 0,
	\qquad
	\mathbb{E}\left[ \left\Vert \widehat{M}_{j,n}^k - M_{j}^k \right\Vert_\infty^2 \right] \to 0.
	\]
\end{proposition}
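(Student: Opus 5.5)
The plan is to reduce everything to Proposition~\ref{pro:cons-MoV-full-observed-LCs} by a perturbation argument: the estimated objects $\widehat{\mu}_n,\widehat{\Sigma}_{\xi,n}$ are compared with the fully observed ones $\mu_n,\Sigma_{\xi,n}$, which already converge to $\mu,\Sigma_\xi$ a.s.\ and in $L^2(\mathbb{P})$. The statements for $\widehat{M}^k_{j,n}$ then follow from those for $\widehat{m}^k_{j,n}$ through the continuity of $\psi^{-1}$ (Proposition~\ref{pro:psi_bi-continuous_bijection}), exactly as in the fully observed case.

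\emph{Step 1 (mean).} By Cauchy--Schwarz, $\Vert\widehat{\mu}_n-\mu_n\Vert_\mathbb{B}\le n^{-1}\sum_i\Vert\widehat{\xi}_i-\xi_i\Vert_\mathbb{B}\le D_n^{1/2}\to0$ a.s.\ by \ref{hp:estimator_moment}\ref{ass:D_n}.

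\emph{Step 2 (covariance).} Write
\[
\widehat{\xi}_i\otimes\widehat{\xi}_i-\xi_i\otimes\xi_i=(\widehat{\xi}_i-\xi_i)\otimes\widehat{\xi}_i+\xi_i\otimes(\widehat{\xi}_i-\xi_i).
\]
Then bound, in operator norm on $\mathbb{H}$ (dominated by the $\mathbb{B}$ norm),
\[
\Vert\widehat{\Sigma}_{\xi,n}-\Sigma_{\xi,n}\Vert\le D_n^{1/2}\Bigl[\Bigl(\tfrac1n\sum\Vert\widehat{\xi}_i\Vert^2_\mathbb{B}\Bigr)^{1/2}+\Bigl(\tfrac1n\sum\Vert\xi_i\Vert^2_\mathbb{B}\Bigr)^{1/2}\Bigr]+\text{(mean terms)}.
\]
The second average converges a.s.\ by the SLLN under \ref{hp:momento_4_q}--\ref{hp:momento_4_L}. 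The first is bounded by $2\cdot\frac1n\sum\Vert\xi_i\Vert^2_\mathbb{B}+2D_n$, so it is a.s.\ bounded. Hence $\widehat{\Sigma}_{\xi,n}\to\Sigma_\xi$ a.s.\ in operator norm.

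\emph{Step 3 (eigenelements).} Weyl's inequality gives $|\widehat{\alpha}_{j,n}-\alpha_j|\le\Vert\widehat{\Sigma}_{\xi,n}-\Sigma_\xi\Vert$. The standard $2\sqrt2\,\delta_j^{-1}$ bound (Bosq, Horv\'ath--Kokoszka) controls the $\mathbb{H}$-distance of the sign-aligned eigenvectors. To upgrade to the $\mathbb{B}$ norm, I would reuse the representation behind Proposition~\ref{pro:eigenfunction_continuous}: $\widehat{\nu}_{j,n}=\widehat{\alpha}_{j,n}^{-1}\widehat{\Sigma}_{\xi,n}\widehat{\nu}_{j,n}$. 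The functional component is therefore $\widehat{\alpha}_{j,n}^{-1}n^{-1}\sum_i(\widehat{\xi}_i-\widehat{\mu}_n)\langle\widehat{\xi}_i-\widehat{\mu}_n,\widehat{\nu}_{j,n}\rangle$. Its sup-norm distance to the analogous expression for $\nu_j$ is bounded by terms of the form $\frac1n\sum\Vert\widehat{\xi}_i-\xi_i\Vert_\mathbb{B}\Vert\xi_i\Vert_\mathbb{B}$, plus $\Vert\widehat{\nu}_{j,n}-\nu_j\Vert_\mathbb{H}$ times sample second moments in $\mathbb{B}$, plus the sup-norm error of the fully observed kernel. All of these vanish a.s. Combining with Step 1 yields $\Vert\widehat{m}^k_{j,n}-m^k_j\Vert_\mathbb{B}\to0$ a.s. I expect this $\mathbb{H}\to\mathbb{B}$ upgrade, with the triangular-array structure, to be the main technical obstacle. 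The key point is that only averages over $i$ of products of $\Vert\widehat{\xi}_i-\xi_i\Vert$ with fixed-sequence quantities appear, so $D_n$ suffices and no uniform-in-$i$ control is needed.

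\emph{Step 4 ($L^2(\mathbb{P})$ and pullback).} For $L^2(\mathbb{P})$ convergence I would use Vitali's theorem rather than a direct rate. Each squared error $\Vert\widehat{m}^k_{j,n}-m^k_j\Vert^2_\mathbb{B}$ is dominated, via the bounds above, $|\widehat{\alpha}_{j,n}|\le\frac1n\sum\Vert\widehat{\xi}_i\Vert^2$ and $\Vert\widehat{\nu}_{j,n}\Vert_\mathbb{H}=1$, by a polynomial in $\frac1n\sum\Vert\widehat{\xi}_i\Vert_\mathbb{B}^2$ and $\frac1n\sum\Vert\xi_i\Vert_\mathbb{B}^2$ of degree at most one. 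Its second moment is bounded uniformly in $n$ by \ref{hp:estimator_moment}\ref{ass:moment_4_hat} together with Lemma~\ref{lem:boundedness_fourth_moment}, which gives uniform integrability. Combined with a.s.\ convergence, this yields $L^2$ convergence.

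For the \ac{LC} space, $\psi^{-1}$ is continuous from $\mathbb{B}$ to $(\mathcal{L}or,\Vert\cdot\Vert_\infty)$, so a.s.\ convergence transfers directly. Since all curves take values in $[0,1]$, $\Vert\widehat{M}^k_{j,n}-M^k_j\Vert_\infty\le1$, and bounded convergence gives the $L^2(\mathbb{P})$ statement.
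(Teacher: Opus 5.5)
Your almost-sure argument (Steps 1--3) follows the paper's proof closely. It has the same $D_n^{1/2}$ bound on $\widehat{\mu}_n-\mu_n$ and the same product decomposition of the covariance: you use uncentred products plus mean corrections, the paper the centred $\widehat{\zeta}_i=\widehat{\xi}_i-\widehat{\mu}_n$. It also has the same perturbation bounds and the same sup-norm upgrade through the functional component of the eigen-equation.

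The gap is in Step~4. To dominate $\Vert\widehat{m}^k_{j,n}-m^k_j\Vert_\mathbb{B}^2$ for every $\omega$ you need a bound on $\widehat{\alpha}_{j,n}\Vert\widehat{\nu}_{j,n}\Vert_\mathbb{B}^2$. The ingredients you cite do not give one. First, $\Vert\widehat{\nu}_{j,n}\Vert_\mathbb{H}=1$ says nothing about $\Vert\widehat{\nu}^{(2)}_{j,n}\Vert_\infty$. Second, the representation of Step~3 only yields $\Vert\widehat{\nu}^{(2)}_{j,n}\Vert_\infty\le\widehat{S}_n/\widehat{\alpha}_{j,n}$, where $\widehat{S}_n=n^{-1}\sum_i\Vert\widehat{\xi}_i-\widehat{\mu}_n\Vert_\mathbb{B}^2$. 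That gives $\widehat{\alpha}_{j,n}\Vert\widehat{\nu}^{(2)}_{j,n}\Vert_\infty^2\le\widehat{S}_n^2/\widehat{\alpha}_{j,n}$. For fixed $n$, $\widehat{\alpha}_{j,n}$ is bounded away from $0$ only eventually a.s., not uniformly in $\omega$. So this bound is neither a degree-one polynomial in the sample second moments nor controlled by \ref{hp:estimator_moment}\ref{ass:moment_4_hat}. The same obstruction appears if you square a decomposition containing $\alpha_j^{1/2}\Vert\widehat{\nu}_{j,n}-\nu_j\Vert_\mathbb{B}$, since that term carries no factor $\widehat{\alpha}_{j,n}$ at all.

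The missing idea is the cancellation in \eqref{eq:CS_eigen_bound}. Write the functional component of the eigen-equation as $\widehat{\alpha}_{j,n}\widehat{\nu}^{(2)}_{j,n}(x)=n^{-1}\sum_i\widehat{\zeta}^{(2)}_i(x)\langle\widehat{\zeta}_i,\widehat{\nu}_{j,n}\rangle_\mathbb{H}$. Apply Cauchy--Schwarz over $i$ and use $n^{-1}\sum_i\langle\widehat{\zeta}_i,\widehat{\nu}_{j,n}\rangle_\mathbb{H}^2=\langle\widehat{\Sigma}_{\xi,n}\widehat{\nu}_{j,n},\widehat{\nu}_{j,n}\rangle_\mathbb{H}=\widehat{\alpha}_{j,n}$. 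This gives $\widehat{\alpha}_{j,n}^{1/2}\Vert\widehat{\nu}^{(2)}_{j,n}\Vert_\infty\le\widehat{S}_n^{1/2}$, with no inverse eigenvalue. Combine it with the crude triangle inequality, not the refined one you used for the a.s.\ part:
\[
\Vert\widehat{m}^k_{j,n}-m^k_j\Vert_\mathbb{B}\le\Vert\widehat{\mu}_n-\mu\Vert_\mathbb{B}+k\,\widehat{\alpha}_{j,n}^{1/2}\Vert\widehat{\nu}_{j,n}\Vert_\mathbb{B}+k\,\alpha_j^{1/2}\Vert\nu_j\Vert_\mathbb{B}.
\]
This yields $\Vert\widehat{m}^k_{j,n}-m^k_j\Vert_\mathbb{B}^2\le C\,(1+n^{-1}\sum_i\Vert\widehat{\xi}_i\Vert_\mathbb{B}^2)$, and then your uniform-integrability argument goes through. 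The right-hand side is bounded in $L^2(\mathbb{P})$ uniformly in $n$ by \ref{hp:estimator_moment}\ref{ass:moment_4_hat} alone. That is slightly more direct than the paper's separate treatment of $T_n$ (uniformly integrable via $L^1$ convergence) and $D_n$ (bounded in $L^2$). The pull-back to $\mathcal{L}or$ via continuity of $\psi^{-1}$ and bounded convergence is then as you describe.
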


Proposition~\ref{pro:cons-est-MoV} implicitly guarantees that  $\widehat{\mu}_n, \widehat{\alpha}_{j,n}, \widehat{\nu}_{j,n}$ are a.s.\@ (and $L^2(\mathbb{P})$) consistent estimators for $\mu, \alpha_{j}, \nu_{j}$. 

A further consequence of the diffeomorphism property of $\psi$ is that the truncated Karhunen--Lo\`eve reconstruction of the \ac{LC}, obtained by applying $\psi^{-1}$ to the estimated expansion
\[
\widehat{\xi}_{i,k} = \widehat{\mu}_n + \sum_{j=1}^k \langle \widehat{\xi}_i 
- \widehat{\mu}_n, \widehat{\nu}_{j,n} \rangle_\mathbb{H} \widehat{\nu}_{j,n},
\]
is consistent with its population counterpart based on the true eigenelements $\{\alpha_j, \nu_j\}$. Specifically, for a fixed curve $i$, $\Vert\psi^{-1}(\widehat{\xi}_{i,k}) - \psi^{-1}(\xi_{i,k})\Vert_\infty \to 0$ as $n \to \infty$, as a direct consequence of the continuity of $\psi^{-1}$ and of the consistency of $\widehat{\mu}_n$ and $\widehat{\nu}_{j,n}$ established in Propositions~\ref{pro:cons-MoV-full-observed-LCs} and~\ref{pro:cons-est-MoV}, provided that $\Vert\widehat{\xi}_i-\xi_i\Vert_\mathbb{B}\to0$ a.s.\@ for that curve, as granted by the uniform version of Assumption~\ref{hp:estimator_moment}\ref{ass:D_n} discussed in Remark~\ref{rem:sufficient_A4}.

A natural extension of the present framework consists in incorporating the mean $\mu$ into the analysis. Since $\mu > 0$ by definition of $\Gamma \in \mathcal{M}$, one can augment the embedding space from $\mathbb{R} \times C([0,1])$ to $\mathbb{R}^2 \times C([0,1])$ by adding $\ln(\mu)$ as a further scalar component, yielding the embedding
\[
\widetilde{\psi}(L, \mu) = (\ln(\mu),\, \ln(\ell(0)),\, \ln(-\ell'')) 
\in \mathbb{R}^2 \times C([0,1]).
\]
The covariance operator of the augmented process retains a block structure, with an additional scalar block and two cross-covariance terms involving $\ln(\mu)$. Under the additional assumption $\mathbb{E}[\ln^4(\mu)] < \infty$, the theoretical results of the present paper extend to this setting with minor modifications, allowing the variability of the scale of the distribution to be captured jointly with that of its shape and zero mass.

\section{Empirical Application: Global Export Inequality}
\label{sec:application}

To demonstrate the empirical performance of the proposed framework, the joint \ac{FPCA} has been applied to the BACI international trade database, which provides data on bilateral trade flows for more than 200 countries at the product level \citep{CEPII:2010-23}. The analysis focuses on the export distributions of 226 countries for the year 2024 (updated as of 31 January 2026): for each country, $X$ denotes the value, in US dollars, of each exported product, so that for each of the 5606 products in the database, $X$ records the corresponding export value. For the $i$-th country, the associated \ac{LC} $L_i$ is estimated from the empirical distribution of export values via Equation~\eqref{eq:Lorenz_def}.
\\%
Structural zeros arise naturally in this context: it is well established in international economics that no country exports every product category, as trade frictions and firm heterogeneity imply that a substantial fraction of bilateral trade flows are exactly zero \citep{hel:mel:rub08}. Consequently, each country in the dataset exhibits a positive proportion of zero export values, and the corresponding \ac{LC} $L_i$ has a flat segment of length $\pi_i > 0$. The empirical range of $\pi_i$ is wide: China, the most diversified exporter in the sample, has an estimated $\pi$ of approximately $0.18\%$, while Palau %
has an estimated $\pi$ of approximately $97.72\%$, reflecting an extremely concentrated export structure.
\\%
Two approaches to \ac{FPCA} are compared on this dataset. The first, briefly described in Section~\ref{sec:naive_fpca}, is the naive approach that applies FDA techniques directly to the estimated \ac{LCs} without any preliminary transformation. The second, presented in Section~\ref{sec:FPCA&modes}, applies \ac{FPCA} after the embedding $\psi$. The computational aspects of the latter are discussed in Section~\ref{sec:comp_notes}.

\subsection{(Limitations of a) Naive \ac{FPCA}}
\label{sec:naive_fpca}

A naive approach to performing \ac{FPCA} on a sample of \ac{LCs} $L_1, \ldots, L_n \in \mathcal{L}or$ consists in applying standard FDA techniques directly in $L^2([0,1])$, without any preliminary transformation. This section illustrates why such an approach is problematic in the zero-inflated setting from the interpretation point of view.
In particular, for this naive framework, the definition of random \ac{LC} simplifies to a measurable map $L : (\Omega, \mathcal{F}, \mathbb{P}) \to (\mathcal{L}or, \mathcal{B}_{\mathcal{L}or})$, where $\mathcal{B}_{\mathcal{L}or}$ is the Borel $\sigma$-algebra on $\mathcal{L}or$ induced by the $L^2_{[0,1]}$ topology. 
Thus, since $\mathcal{L}or$ is not a linear subspace of $L^2_{[0,1]}$, linear combinations of \ac{LCs} do not generally belong to $\mathcal{L}or$: both the modes of variation and the truncated Karhunen--Lo\`eve reconstructions, defined in a standard way in terms of the mean curve, eigenvalues, and eigenfunctions of the covariance operator of $L$ in $L^2_{[0,1]}$, may fail to be convex, non-negative, or increasing over their support, and therefore cannot be interpreted as \ac{LCs} (see \citealp{bon:goi19}). This is illustrated empirically in Figure~\ref{fig:naive} (top-right and bottom-left panels), which shows examples of a reconstructed curve that lie outside $\mathcal{L}or$ and the first modes of variation respectively.
The extent of the problem can be further quantified by inspecting the factor plane spanned by the first two principal components: not every point in this plane corresponds to a valid \ac{LC}, and the region of the factor plane that produces valid reconstructions in $\mathcal{L}or$ is a strict subset of $\mathbb{R}^2$. As shown in Figure~\ref{fig:naive} (bottom-right panel), a non-negligible fraction of the empirical scores falls outside this feasible region, confirming that naive \ac{FPCA} produces invalid reconstructions for a substantial portion of the sample.
Further theoretical and descriptive analysis of the naive case -- including the observation that the embedding of \cite{bon:goi19}, based on a log-transformation of $L''$, implicitly requires $L'' > 0$ on $(0,1)$, a condition that fails for zero-inflated \ac{LCs} where $L''$ vanishes on $[0,\pi]$ -- has been carried out in \cite{bon:goi19}.
\begin{figure}[ht!]
	\centering
	\includegraphics[width=.45\textwidth]{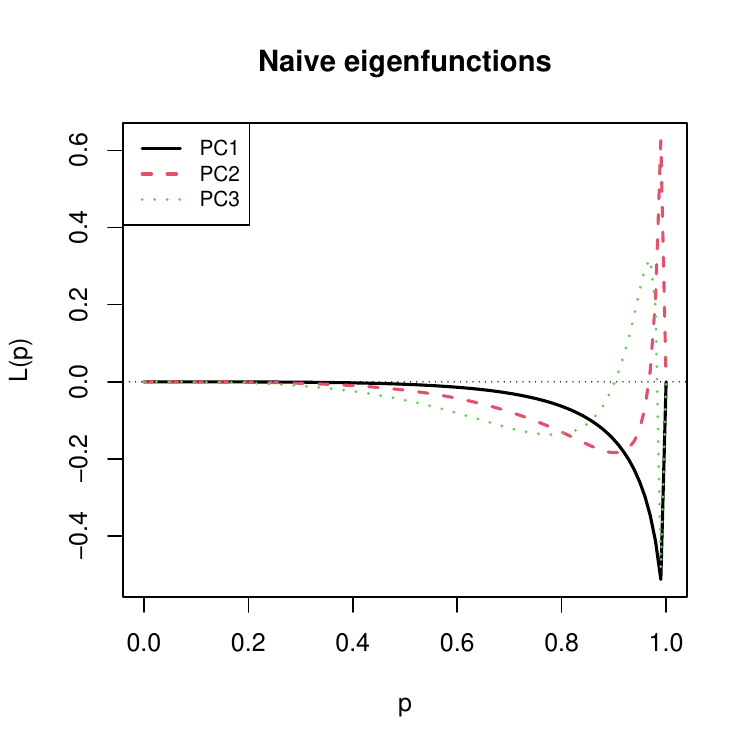}	
	\includegraphics[width=.45\textwidth]{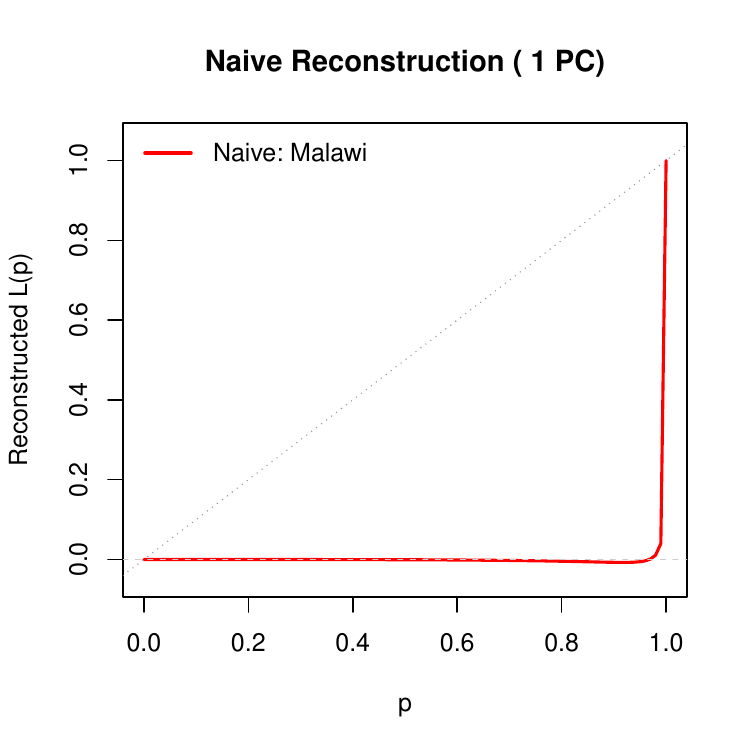}	
	\\
	\includegraphics[width=.45\textwidth]{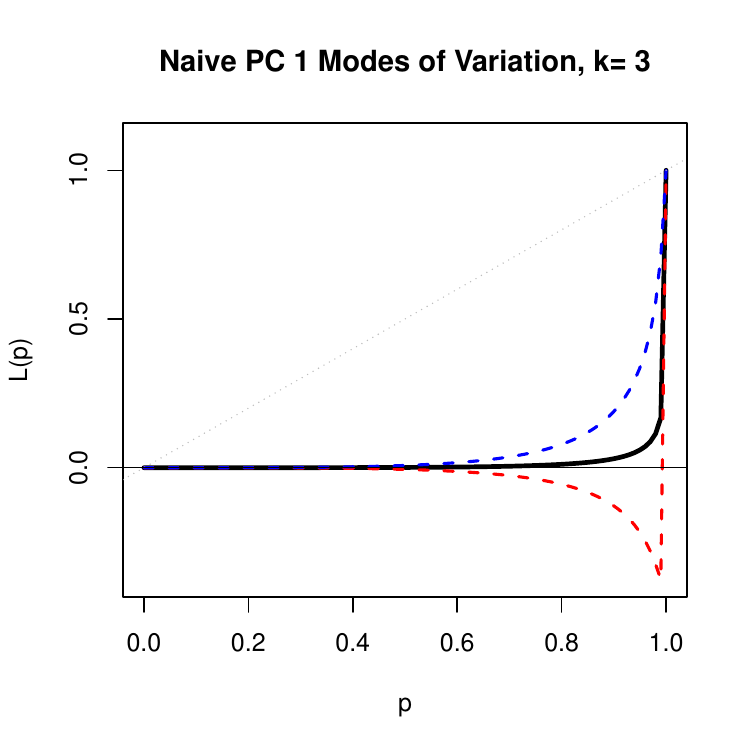}
	\includegraphics[width=.45\textwidth]{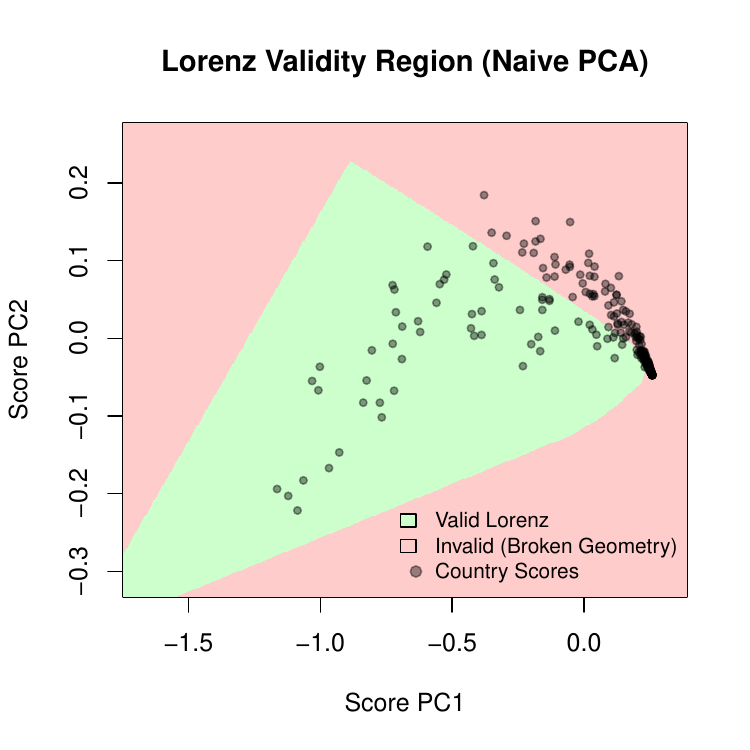}
	\caption{Upper: true \ac{LC} and truncated KL reconstruction outside $\mathcal{L}or$ for one country (positivity is violated for the reconstruction). Bottom left: modes of variation that violate the \ac{LC} constraints. Bottom right: factor plane with empirical scores and feasible region $\mathcal{F}$ (shaded); scores outside $\mathcal{F}$ are highlighted.}
	\label{fig:naive}
\end{figure}

\subsection{\ac{FPCA} after the embedding}
\label{sec:FPCA&modes}

In contrast, the \ac{FPCA} after the embedding $\psi$ is performed in the transformed space $\mathbb{R} \times L^2_{[0,1]}$, where the scalar component $b$ -- capturing the extensive margin, i.e., the share of zero-export products -- and the functional component $g$ -- capturing the intensive margin of export concentration -- are analyzed simultaneously.

Figure~\ref{fig:LC_before_after_embedding} illustrates the successive stages of the embedding. The top-left panel displays the original \acp{LC}; the top-right panel shows the corresponding transformed curves $\ell$, obtained via the perspective map $\varphi$. The bottom-left panel reports the second derivatives $\ell''$, which, interpreted as local radial inequality weights, reveal that export inequality is concentrated at the extremes of the trade value distribution: among products with no recorded trade value and among those with the highest exchange values, while intermediate trade values exhibit a comparatively egalitarian structure. Finally, the bottom-right panel displays the log-transformed quantities $\ln(-\ell'')$, which -- together with the log-odds ratios -- constitute the functional inputs for the estimation of the covariance operator and its spectral decomposition.

Figure~\ref{fig:Eigenfun_for_xi} displays the first three estimated eigenfunctions in $\mathbb{R} \times L^2_{[0,1]}$. The first estimated eigenfunction $\widehat{\nu}_{1,n}$, which explains 79.32\% of the total variance of the transformed process, does not change sign and therefore captures the variability associated with the overall level of radial inequality weight. The second and third eigenfunctions $\widehat{\nu}_{2,n}$ and $\widehat{\nu}_{3,n}$ (which explain 7.89\% and 4.59\% respectively) exhibit multiple sign changes, reflecting modes of variation that act in opposite directions across different sub-intervals of the trade value distribution, and thus capture redistributive dynamics among active exporters.

The economic interpretation of these eigenfunctions is clearer when examining the corresponding modes of variation, reported in the transformed space in Figure~\ref{fig:modesOfVar_for_g} and mapped back to the original \ac{LC} space in Figure~\ref{fig:Eigenfun_after_embedding}. Unlike the naive approach, every variation profile generated by the proposed framework is guaranteed to correspond to a mathematically valid \ac{LC}, providing a robust basis for structural decomposition and counterfactual analysis. From an economic standpoint, the leading modes of variation suggest that global export inequality is not a monolithic phenomenon, but rather a dual-structured process: the first principal component primarily reflects an extensive margin effect, opposing country profiles characterized by large shares of zero-export sectors to those with more integrated trade structures; the second and third principal components, by contrast, capture intensive margin dynamics, describing how export volumes are redistributed among active traders across different percentiles of the exchange value distribution.

\begin{figure}[ht!]
	\centering
	\includegraphics[width=.45\textwidth]{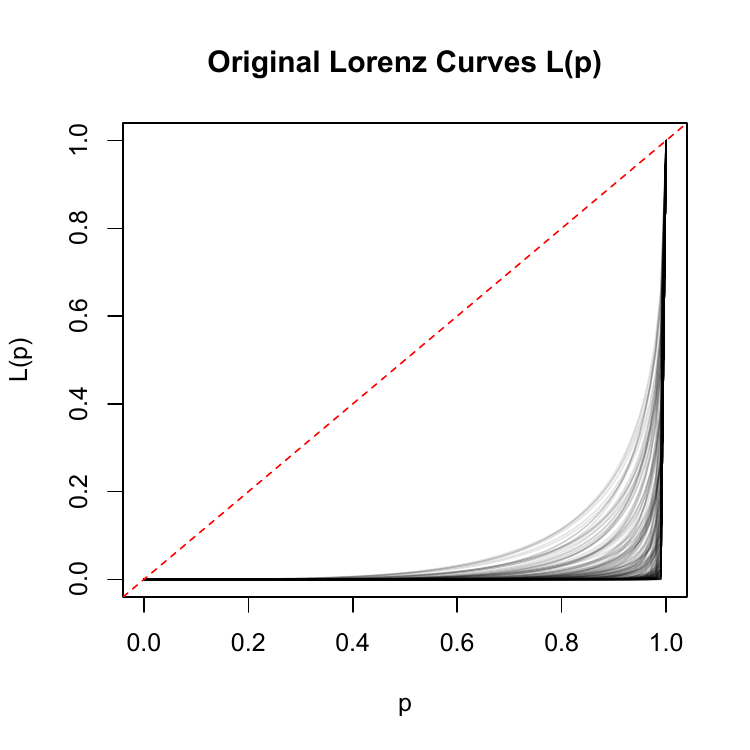}	
	\includegraphics[width=.45\textwidth]{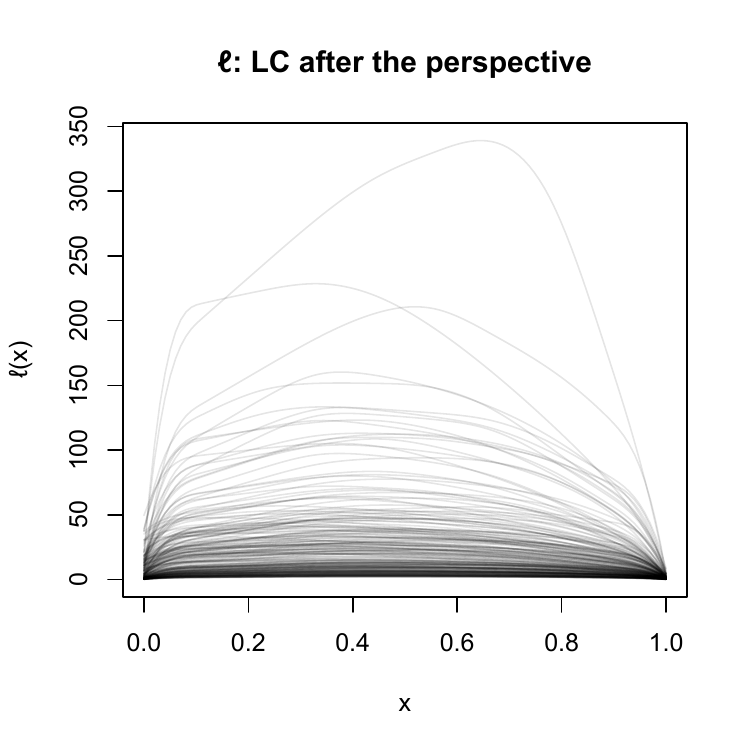}	
	\\
	\includegraphics[width=.45\textwidth]{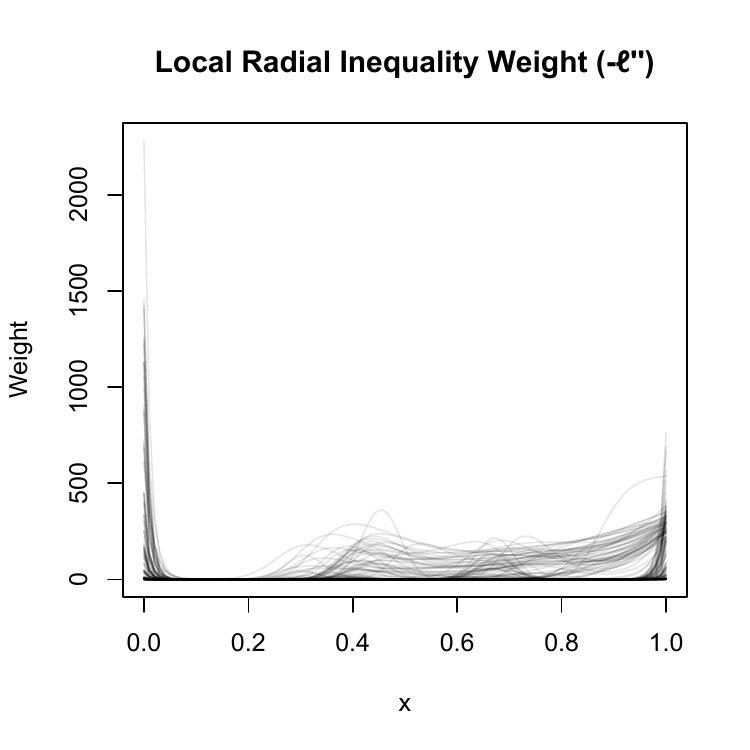}
	\includegraphics[width=.45\textwidth]{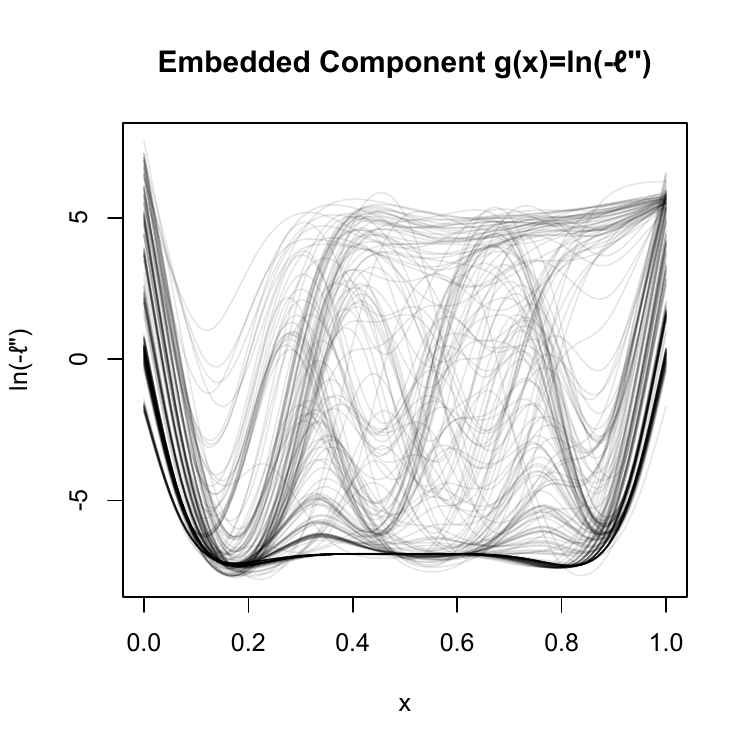}
	\caption{Upper left: original \ac{LC}. Upper right: $\ell$ functions. Bottom left: $-\ell''$ functions. Bottom right: $\ln(-\ell'')$ functions.}
	\label{fig:LC_before_after_embedding}
\end{figure}

\begin{figure}[ht!]
	\centering
	\includegraphics[width=.32\textwidth]{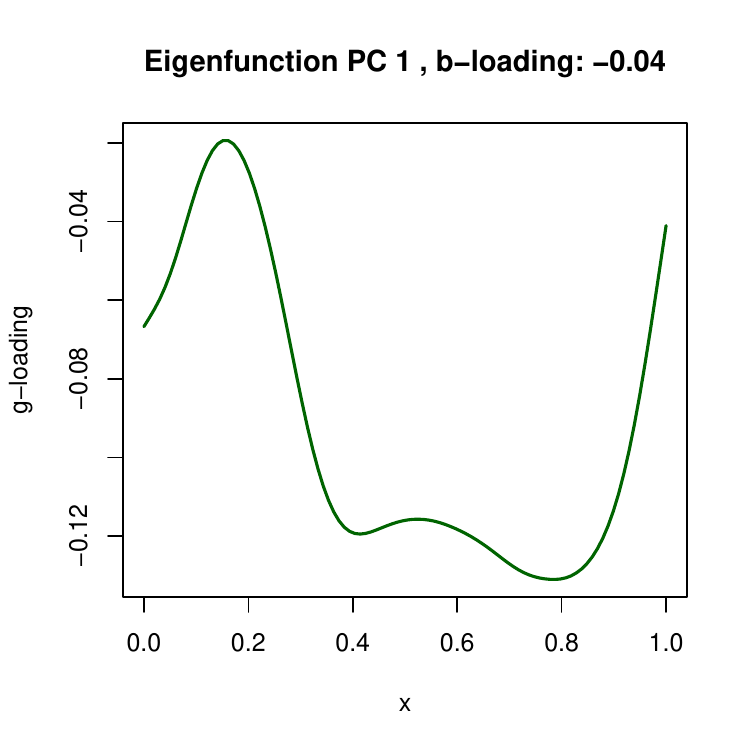}	
	\includegraphics[width=.32\textwidth]{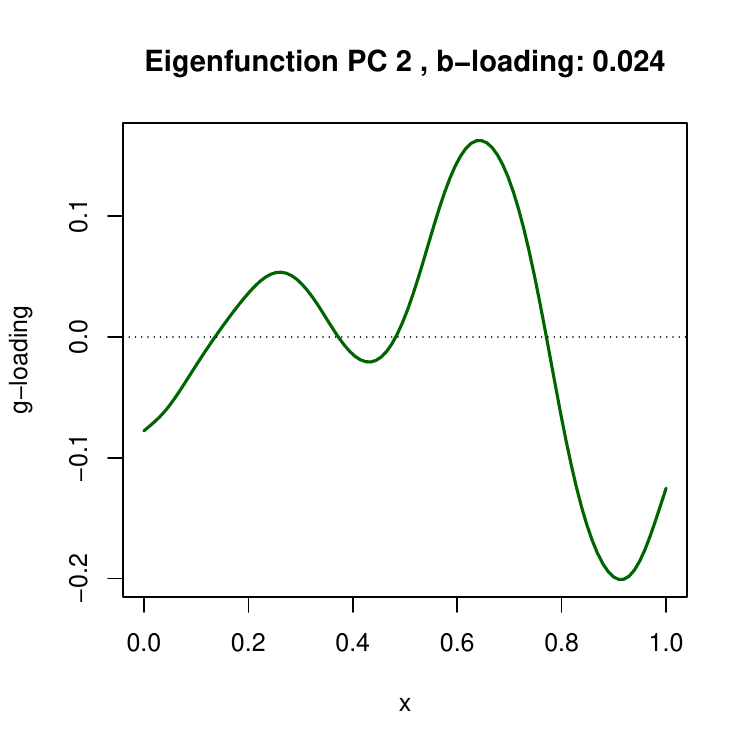}	
	\includegraphics[width=.32\textwidth]{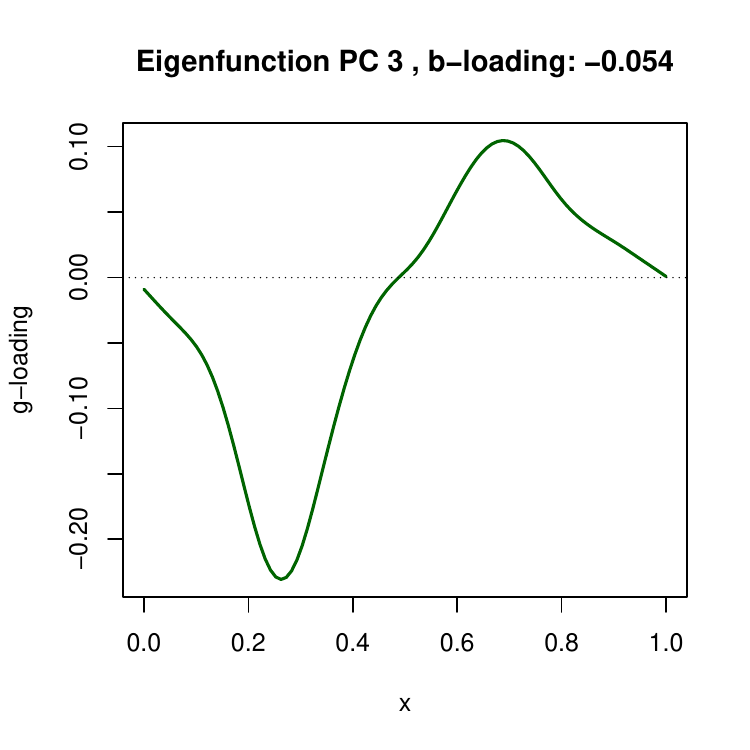}
	\caption{First three Eigenfunctions computed for $\xi$.}
	\label{fig:Eigenfun_for_xi}
\end{figure}

\begin{figure}[ht!]
	\centering
	\includegraphics[width=.32\textwidth]{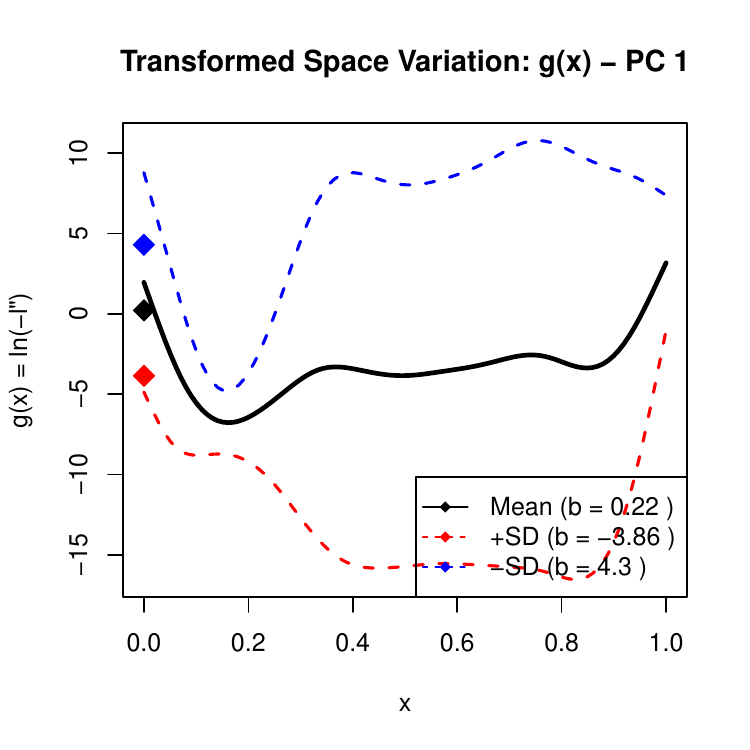}	
	\includegraphics[width=.32\textwidth]{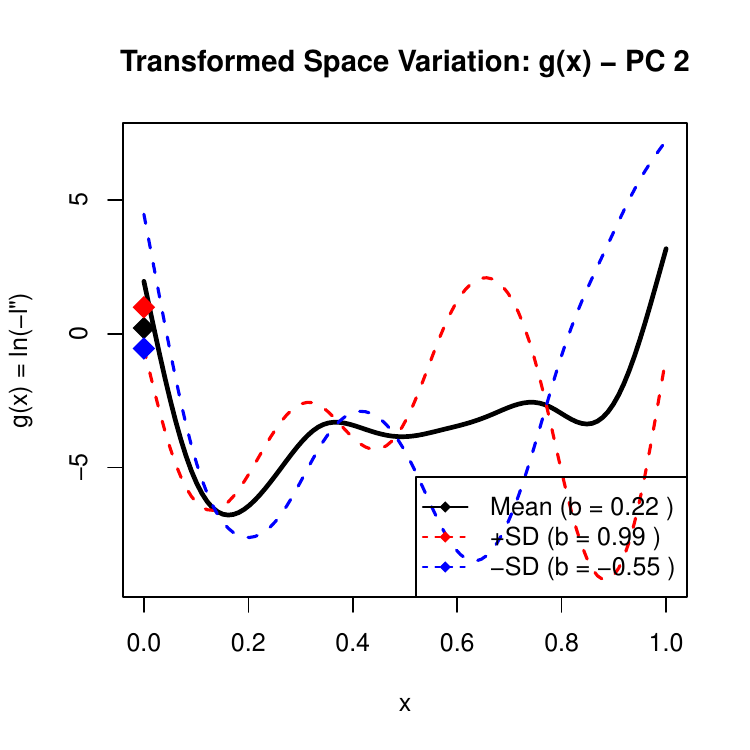}
	\includegraphics[width=.32\textwidth]{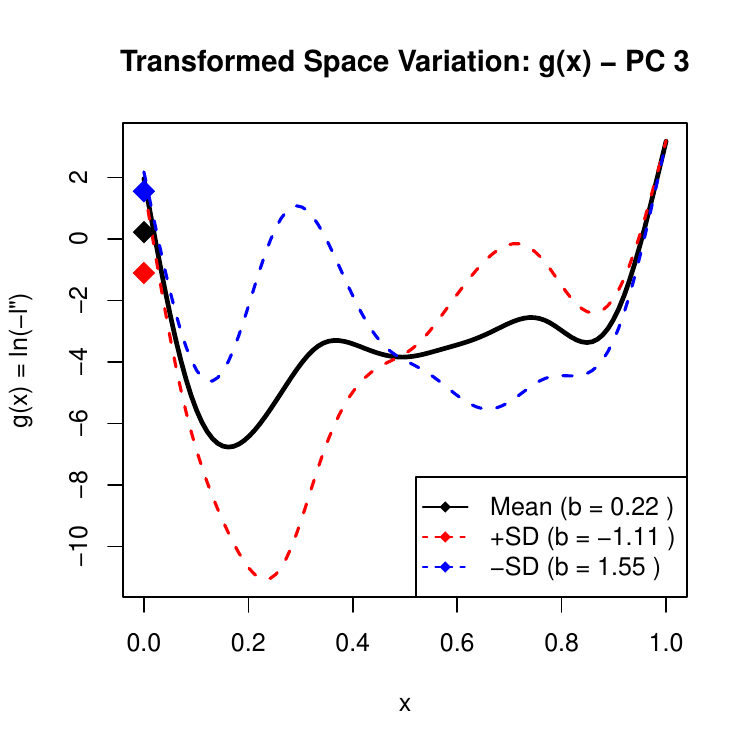}	
	\caption{First three modes of variations for $\xi$.}
	\label{fig:modesOfVar_for_g}
\end{figure}

\begin{figure}[ht!]
	\centering
	\includegraphics[width=.32\textwidth]{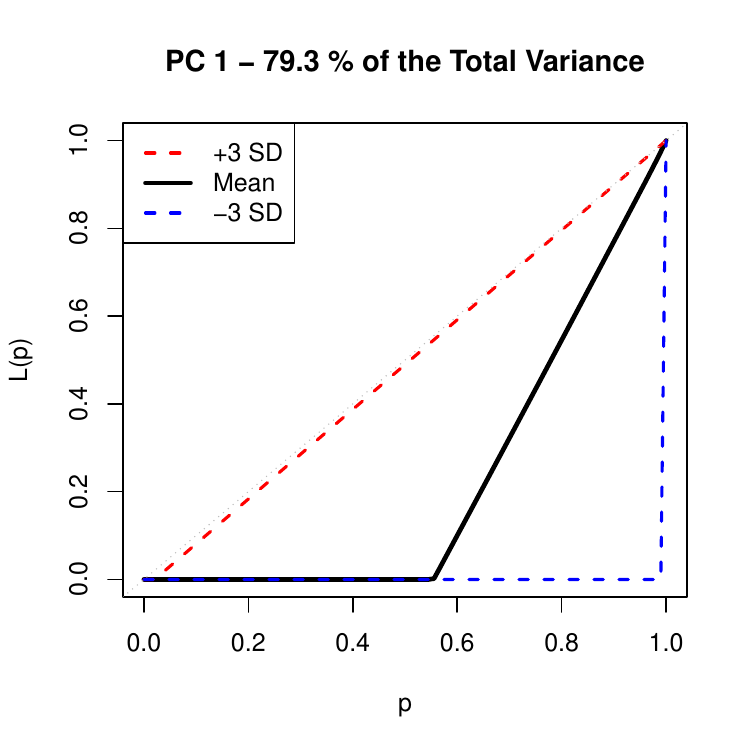}	
	\includegraphics[width=.32\textwidth]{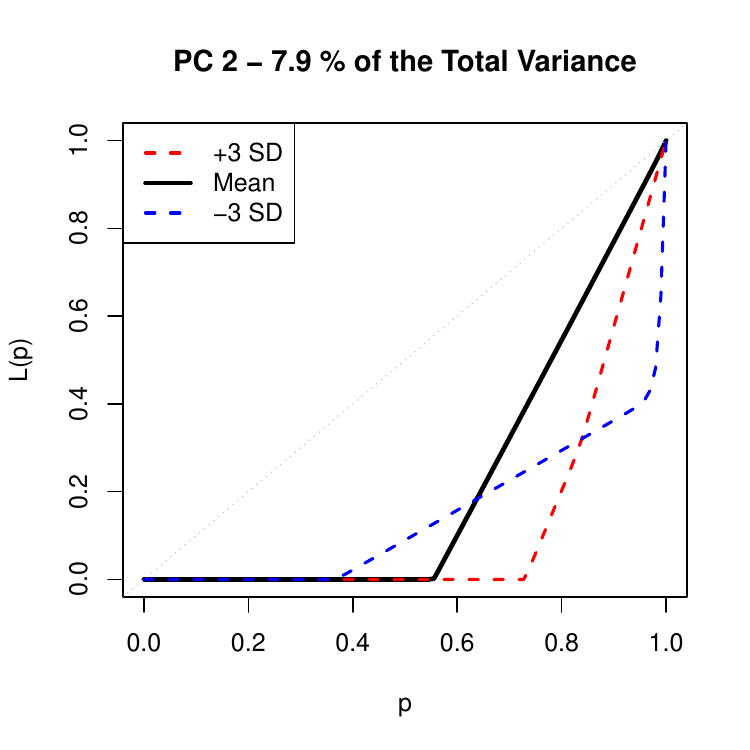}	
	\includegraphics[width=.32\textwidth]{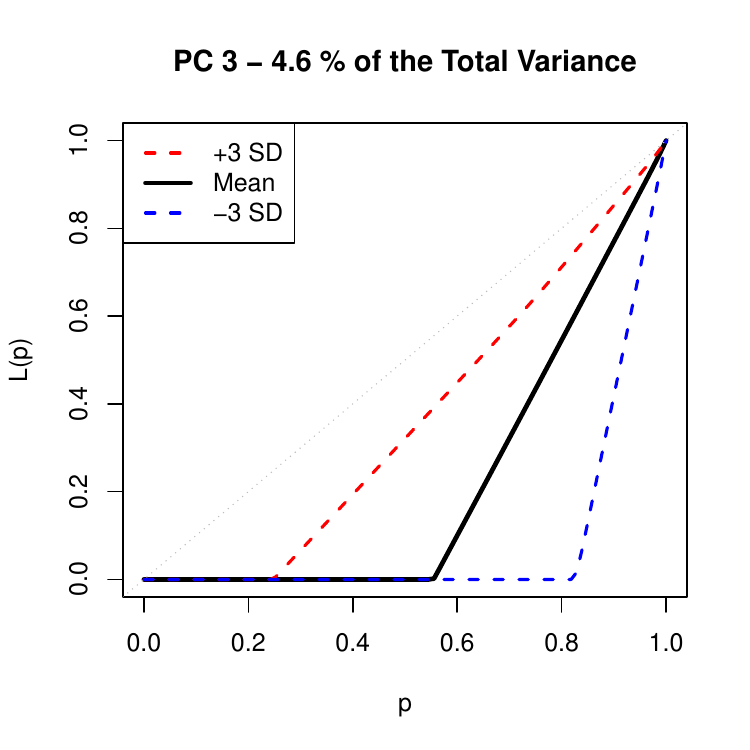}	
	\caption{First three modes of variation mapped back to the Lorenz-curve space $\mathcal{L}or$.}
	\label{fig:Eigenfun_after_embedding}
\end{figure}

\subsection{Computational Notes}
\label{sec:comp_notes}

The numerical implementation of the diffeomorphism $\psi: \mathcal{L}or \to \mathbb{R} \times C([0,1])$ requires particular care to ensure that the geometric constraints of the \ac{LC} (convexity) and its perspective transformation (concavity) are preserved throughout the functional data analysis pipeline.
While empirical \ac{LCs} $\widehat{L}_i$ are typically piecewise linear, the definition of the functional component $g = \ln(-\ell'')$ necessitates a twice-differentiable representation. To this end, the Shape-Constrained Additive Models (SCAM) are used \citep{pya2015scam}. Specifically, after the perspective transformation $\varphi$, the resulting curve $\ell$ is estimated using a B-spline basis subject to a global concavity constraint ($\ell'' \leq 0$). Unlike unconstrained smoothing splines, this approach guarantees that the argument of the logarithm in the mapping $\psi_2(L) = \ln(-\ell'')$ remains non-negative, ensuring the well-posedness of the embedding into the Hilbert space $\mathbb{H}$.
Thus the second derivative $\ell''$ is computed using a high-order finite difference approximation on the linear predictor matrix (LP-matrix) of the spline basis. Given the fitted SCAM object, let $X(x)$ denote the matrix of basis functions evaluated on an equispaced grid point $x \in [0,1]$ with small step size (e.g., $10^{-5}$). The derivative is evaluated as follows
\[
\ell''(x) \approx \frac{1}{h^2} \left( X(x+h) - 2X(x) + X(x-h) \right) \hat{\beta} , 
\]
where $\hat{\beta}$ are the estimated spline coefficients. This method is preferred to direct differentiation of the predicted values because it operates on the underlying functional basis, providing greater numerical stability. The step size $h$ is chosen to balance the truncation error of the Taylor expansion with the floating-point precision of the software.
In instances where the distribution exhibits near-linear segments in the \ac{LC} (corresponding to regions of local equality), the second derivative $\ell''$ may approach zero. To prevent numerical instability during the log-transformation, a small regularization constant $\epsilon = 10^{-3}$ is applied, that is $g = \ln(\max(-\ell'', \epsilon))$. This ensures that the functional component remains finite and integrable, as required by the $L^2_{[0,1]}$ structure.
Formally, this amounts to replacing the embedding $\psi$ with a regularized version $\psi_\varepsilon$, defined by
\[
\psi_\varepsilon(L) = \left( \ln(\ell(0)),\ \ln(\max\{-\ell'',\varepsilon\}) \right), \qquad \varepsilon>0.
\]
Since $t\mapsto\ln(\max\{t,\varepsilon\})$ is continuous and satisfies $\ln(\max\{t,\varepsilon\})\to\ln(t)$ pointwise as $\varepsilon\downarrow0$ for every $t>0$, one has $\psi_\varepsilon(L)\to\psi(L)$ in $\mathbb{H}$ as $\varepsilon\downarrow0$ whenever $-\ell''$ is bounded away from $0$ on $[0,1]$, by dominated convergence. In practice, $\varepsilon=10^{-3}$ is chosen well below the smallest curvature values observed in the estimated curves, so that $\psi_\varepsilon$ and $\psi$ coincide numerically except on the (typically negligible) near-linear segments described above; $\psi_\varepsilon$ is therefore used only as a numerically stable proxy for $\psi$, and all theoretical results of Sections~\ref{sec:diffeomorfismo}--\ref{sec:consistency_results} refer to the unregularized map $\psi$.
Finally, although Equation~\eqref{eq:ell''} establishes the exact analytical relationship between $L''(p)$ and $\ell''(x)$, computing the functional components directly from this formula is computationally unstable for two main reasons. First, estimating the first and second derivatives directly from discrete, empirical \ac{LC} coordinates severely amplifies measurement noise, leading to highly unstable and erratic profiles for $\ell''(x)$. Second, the mapping $x(p)$ is inherently country-specific. Evaluating the formula over empirical points yields irregularly spaced, non-overlapping domains across the sample. Projecting these curves back onto a common grid would require ex-post interpolation, introducing severe numerical artifacts and potentially violating local concavity ($-\ell'' > 0$).
To ensure numerical stability, the raw coordinates are first projected into the transformed space, a shape-constrained continuous spline (enforcing strict concavity) is then fitted, and its exact second derivative is evaluated on a globally uniform grid.

For the subsequent \ac{FPCA}, the scalar component $b = \ln(\ell(0))$ and the functional component $g(x)$ are concatenated into a joint operator. To respect the Hilbertian metric defined in Section~\ref{sec:consistency_results}, the covariance operator $\Sigma_\xi$ is discretized on a fine, equispaced grid of $N_{grid}$ points. The inner product in $\mathbb{H}$ is then approximated via
\[
	\langle \xi_i, \xi_j \rangle_{\mathbb{H}} \approx b_i b_j + \frac{1}{N_{grid}} \sum_{k=1}^{N_{grid}} g_i(x_k) g_j(x_k).
\]
This discretization allows for the direct application of standard \ac{FPCA} algorithms while preserving the theoretical convergence properties established by the diffeomorphism.
In the practical implementation, an inertia balancing weight $w_b$ is applied to the centered scalar component to ensure numerical consistency during the singular value decomposition. Given $n$ observations, the weight is defined as the ratio between the functional Root Mean Square (RMS) and the scalar standard deviation, that is
\[
	w_b = \frac{\sqrt{\frac{1}{n} \sum_{m=1}^{n} \frac{1}{N_{grid}} \sum_{k=1}^{N_{grid}} [g_m(x_k) - \bar{g}(x_k)]^2}}{\sigma_b}.
\]
This rescaling ensures that both structural zero-inflation (extensive margin) and the local inequality weight (intensive margin) contribute proportionately to the total variance. Without this balancing, the functional block would numerically dominate the scalar value due to the discretization over $N_{grid}$ points, artificially biasing the principal components toward local shape fluctuations and ignoring the global signal provided by the mass of zeros. 
For any fixed value $w_b>0$, the resulting Hilbert norm is equivalent to the one used in Section~\ref{sec:consistency_results}. In the empirical analysis, $w_b$ is computed once from the observed sample and then treated as a fixed normalization for the subsequent PCA. 
The resulting PCA thus respects the weighted inner product given by
\[
	\langle \psi_1, \psi_2 \rangle_{\mathbb{H}} = w_b^2 (b_1 b_2) + \frac{1}{N_{grid}} \sum_{k=1}^{N_{grid}} g_1(x_k)g_2(x_k).
\]
The reconstruction of the \ac{LC} from its embedded coordinates $(b, g) \in \mathbb{R} \times C([0,1])$ is performed by numerically inverting the transformation $\psi$. First, the perspective function $\ell$ is recovered by solving the BVP \eqref{eq:BVP_problem} via trapezoidal integration. Given the discrete grid $\mathbf{x} = \{x_1, \dots, x_{N_{grid}}\}$, the integral terms in \eqref{eq:BVP_solution} are computed cumulatively. The inverse perspective map $\varphi^{-1}$ then yields a set of pairs $(\hat{p}, \hat{L})$ which, by construction, define the \ac{LC} over its support $(\pi, 1]$, where $\pi = e^b / (1 + e^b)$. Finally, to ensure a consistent functional representation for all samples, the reconstructed curve is interpolated onto a standard equispaced grid $p \in [0,1]$, with $L(p) = 0$ for $p \leq \pi$. This numerical pipeline ensures that the estimated modes of variation and any KL-reconstructions strictly satisfy the analytical properties of the class $\mathcal{L}or$, including convexity and the presence of the structural zero segment.

\paragraph{Acknowledgments}

This publication was funded by “Bando SSHINE 2025” which has received funding from University of Piemonte Orientale.
The authors are members of the Gruppo Nazionale per l’Analisi Matematica, la Probabilità e le loro Applicazioni (GNAMPA) of the Istituto Nazionale di Alta Matematica (INdAM).
This work is based upon work from COST Action HiTEc, CA21163, supported by COST (European Cooperation in Science and Technology).

\paragraph{Declaration of generative AI and AI-assisted technologies in the writing process.}

During the preparation of this work the authors used AI tools in the writing process to improve the readability and language of the manuscript. After using these tools, the authors reviewed and edited the content as needed and take full responsibility for the content of the publication.

\appendix

\section{Appendix}

This appendix presents some geometrical properties of the perspective map $\varphi$ in Section~\ref{sec:about_varphi} that will be used in the proofs of the main results collected in Section~\ref{sec:proofs}.

\subsection{Useful properties of the perspective $\varphi$}
\label{sec:about_varphi}

The purpose of this section is to describe some simple (mostly well-known) properties of the perspective $\varphi$. Such properties will be used in the proofs collected in Appendix~\ref{sec:proofs}.

To do this, firstly consider a graphical illustration (see Figure~\ref{fig:perspectivity_visualization}) of the perspective transformation introduced in Proposition~\ref{pro:perspective}. 

\begin{figure}[ht!]
	\begin{center}
		\includegraphics[width=.9\textwidth]{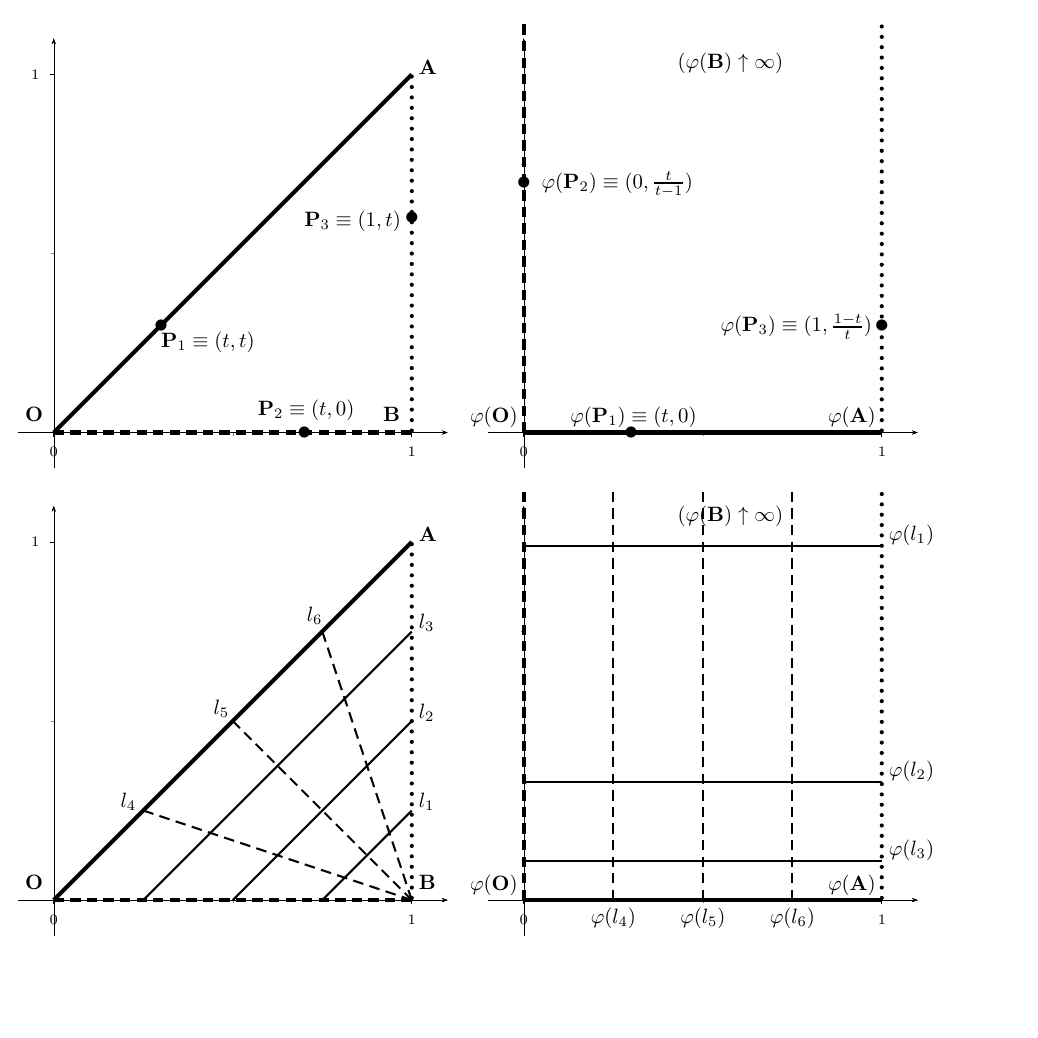}
		\caption{Perspectivity at work: right panels show how $\varphi$ maps points (first row) and lines (second row) depicted in left panels.}
		\label{fig:perspectivity_visualization}
	\end{center}
\end{figure}

Observe that the origin, $O$, is invariant,  point $A=(1,1)$ moves into $\varphi(A)=(1,0)$, and point $B=(1,0)$ moves to infinity. Moreover,  one might observe that the generic point on the main bisector $P_1=(t,t)$, with $t\in\mathbb{R}$, is mapped in $\varphi(P_1)=(t,0)$, and the diagonal of the unit square is mapped into the unit segment of the $x$--axis (solid lines). Furthermore,  the segment from $O$ to $B$ is mapped into the positive half $y$--axis (dashed lines). In particular, a generic point $P_2=(t,0)$ moves into $\varphi(P_2)=(0, \frac{t}{(1-t)})$. Finally,  the segment from $A$ to $B$ maps into the positive half line at $x=1$ (dotted lines). Indeed, the point $P_3=(1,t)$ moves into $\varphi(P_3)=(1,\frac{1-t}{t})$.  

Regarding the first objective, note that, by definition, $\widetilde{\varphi} $ moves the point $(1:0:1)\in P^2$ (i.e.\@ the point $(0,1)\in\mathbb{R}^2$) to the point at infinity $(0:1:0)$ which is the point in $P^2$ belonging to all the vertical lines\footnote{The parametric form in $P^2$ of the vertical line $x_1=a$, $\forall a\in\mathbb{R}$, is $(a:t:1)$ with $t\in\mathbb{R}$; dividing by $t$ and letting $t\to\infty$ one gets $(0:1:0)$ for any $a\in\mathbb{R}$}.
Moreover, by its definition, it is possible to see that:
\begin{enumerate}[label=\upshape{($\mathbf{\mathrm{P}_{\arabic*}}$)}, ref=\upshape{($\mathbf{\mathrm{P}_{\arabic*}}$)}]
		\item \label{proprietà-facio_rette_0}  $\varphi$ maps straight lines in straight lines;
		\item \label{proprietà-fascio_rette_1}  $\varphi$ maps the lines of equation $y=m(x-1)$ with $m<0$ into the vertical lines $x_1'=m(m-1)^{-1}$, so, as $m$ varies over $(-\infty,0]$, such lines cover the entire vertical strip $[0,1)\times \mathbb{R}$ (in general, all the straight lines passing through $(1,0)$ are mapped into vertical lines except for $y=-x-1$ that is mapped into the line at infinity);
		\item \label{proprietà-fascio_rette_2} $\varphi$ maps the line $y=0$ into $x_1'=0$;
		\item \label{proprietà-fascio_rette_3} $\varphi$ maps the line $x=1$ into $x_1'=1$;
		\item \label{proprietà-equidistribuzione} $\varphi$ maps the bisector line $y=x$ into the horizontal axis $x_2'=0$;
		\item \label{proprietà-T_e_varphi(T)} $\varphi$ maps the set $T= \{(x,y)\in \mathbb{R}^2 : 0\le x\le 1, 0\le y\le x \} \setminus \{(1,0)\}$ (i.e.\@ the closed triangle with vertexes $(0,0)$, $(1,0)$ and $(1,1)$ excluding the vertex on the boundary $(1,0)$), into the half vertical strip $\varphi(T) = [0,1]\times [0,+\infty)$;
		\item \label{proprietà-convex_into_convex} $\varphi$ maps any convex subset of $T$ in a convex subset of $\varphi(T)$ (this is due to the fact that any convex set can be expressed as the intersection of halfspaces that are defined via straight lines with the latter being mapped into straight lines thanks to \ref{proprietà-facio_rette_0}).
		\item \label{proprietà-varphi_bicontinua} $\varphi$ (the inverse $\varphi^{-1}$ resp.) is a diffeomorphism if restricted over $T$ ($\varphi(T)$ resp.) because it is a rational function, see~\eqref{eq:varphi_change_coordinates} (\eqref{eq:varphi_inverse} resp.), defined over a domain without singularities.
	\end{enumerate}
	Most of the above properties are illustrated in Figure~\ref{fig:perspectivity_visualization}.

\subsection{Proofs}
\label{sec:proofs}

\begin{proof}[Proof of Proposition~\ref{pro:perspective}.]
	Let ${P}^2$ be the projective plane whose points have homogeneous coordinates $(x_1:x_2:x_0)$, i.e.\@ the equivalence class of all points in $\mathbb{R}^3$ that are equal to $(x_1,x_2,x_0)$ up to a scaling factor (see e.g.\@ \cite{ber2009}). 
	With this setting, any point $(x_1,x_2)$ in the plane $\mathbb{R}^2$ can be seen as a point in $P^2$ with coordinates $(x_1:x_2:1) \equiv (tx_1:tx_2:t)$, for any not null real number $t$, whereas the set of point in $P^2$ satisfying $x_0 = 0$ is the line at infinity. 
	Moreover, the following equation
	\begin{equation*}
		\widetilde{\varphi} : \qquad 
		\left(
		\begin{array}{c}
			x_1'\\
			x_2'\\
			x_0'
		\end{array}
		\right)
		=
		\left(
		\begin{array}{ccc}
			a & b & 0\\
			c & d & 0\\
			h & k & 1
		\end{array}
		\right)
		\left(
		\begin{array}{c}
			x_1\\
			x_2\\
			x_0
		\end{array}
		\right)
	\end{equation*}
	represents the generic form of a perspective on the projective plane that, mapping $P^2$ to itself, can be uniquely identified by three noncollinear points and their associated images \citep{ber2009}. 
	In this view, the perspective introduced in Section~\ref{sec:about_varphi} is obtained imposing the following conditions:
	\begin{equation} \label{eq:perspective_conditions}
		\left\{
		\begin{array}{ll}
			\widetilde{\varphi}(0:0:1) = (0:0:1) \qquad & \textnormal{the origin is mapped into itself},\\
			\widetilde{\varphi}(1:1:1) = (1:0:1) & \textnormal{the point }(1,1)\textnormal{ is mapped into }(1,0),\\
			\widetilde{\varphi}(1:0:1) = (0:1:0) & \textnormal{the point }(1,0)\textnormal{ is moved to the infinity point}
			\\
			& \textnormal{belonging to all vertical lines},
		\end{array}
		\right.
	\end{equation}
	from which one gets
	\[
	\widetilde{\varphi} : \qquad 
	\left(\begin{array}{c}
		x_1'\\
		x_2'\\
		x_0'
	\end{array}\right)
	=
	\left(\begin{array}{c}
		x_2\\
		x_1-x_2\\
		x_2-x_1+x_0
	\end{array}\right),
	\]
	and, writing the perspective as a map of $\mathbb{R}^2$ to itself (i.e.\@ dividing the first two coordinates by $x_0'$),
	\begin{equation}
		\label{eq:varphi_change_coordinates}
		\varphi : \qquad 
		\left(\begin{array}{c}
			x_1'\\
			x_2'
		\end{array}\right)
		=
		\left(\begin{array}{c}
			x_2/(x_2-x_1+1)\\
			(x_1-x_2)/(x_2-x_1+1)
		\end{array}\right),
	\end{equation}
	that straightforwardly leads to \eqref{Eq:persp} with a trivial change in the variable names.
	
	Regarding the inverse $\varphi^{-1}$, reverting the conditions in \eqref{eq:perspective_conditions} one gets
	\[
	\widetilde{\varphi}^{-1} : \qquad 
	\left(\begin{array}{c}
		x_1\\
		x_2\\
		x_0
	\end{array}\right)
	=
	\left(\begin{array}{c}
		x_1'+x_2'\\
		x_1'\\
		x_2'+x_0'
	\end{array}\right),
	\]
	that, written in terms of $\mathbb{R}^2$ coordinates, becomes
	\begin{equation}\label{eq:varphi_inverse}
		\varphi^{-1} : \qquad 
		\left(\begin{array}{c}
			x_1\\
			x_2
		\end{array}\right)
		=
		\left(\begin{array}{c}
			(x_1' + x_2') / (x_2' + 1)\\
			x_1'/(x_2' + 1)
		\end{array}\right),
	\end{equation}
	which concludes the proof. 
\end{proof}

\begin{proof}[Proof of Proposition~\ref{pro:definition_L-tilde}.]
	The proof is divided into three steps. First, the function $\varphi$ transforms $L$ into an almost everywhere $C^2$ parametric curve whose graph is contained in $\varphi(T)=[0,1]\times[0,+\infty)$. Second, this parametric curve defines %
	a function $\ell$, defined on $[0,1]$ and taking values in $[0,+\infty)$, whose graph coincides with that of $\varphi(p, L(p))$ for $p\in [\pi, 1]$. Third, the regularity properties of $\ell$ are deduced.
	\\
	\underline{FIRST STEP.} Note that by definition of $\varphi$ the image of $L$ is the parametric curve in $\mathbb{R}^2$ given by
	\[
	\varphi(p, L(p)) = \left(\frac{L(p)}{L(p)-p+1} ,\frac{p-L(p)}{L(p)-p+1}\right),\qquad p\in[0,1].
	\tag{\ref{Eq:persp}}
	\]
	Now, since the graph of $L$ is a subset of the simplex $T$, its image through $\varphi$ is a subset of $\varphi(T)$ (see \ref{proprietà-T_e_varphi(T)}). Moreover, by \ref{proprietà-varphi_bicontinua}, since $L\in\mathcal{L}or$ and $\pi=\inf \mathrm{supp}(L)$, it is straightforward to see that, the parametric curve \eqref{Eq:persp} is twice differentiable for all $p\in (\pi,1]$ and right differentiable for $p=\pi$.
	
	\noindent \underline{SECOND STEP.} The idea is to show that the graph of  \eqref{Eq:persp} intersect the vertical line $x_1'=0$ at least at one point while intersecting any vertical line $x_1'=k$ with $k\in(0,1]$ at only one point. 
	Regarding  $k=0$ it is sufficient to observe that, for $p \in [0,\pi]$, $L(p)=0$ and hence the graph of  \eqref{Eq:persp} coincides with the segment $\{0\}\times [0, \pi(1-\pi)^{-1}]$ that lies on $x_1'=0$.
	For $k\in(0,1)$, because of \ref{proprietà-fascio_rette_1}, the uniqueness of the intersection with the graph of   \eqref{Eq:persp} is a consequence of the uniqueness of the intersection of $L$ with the line $y=m(x-1)$ with $m<0$. 
	Finally, for $k=1$, note that $x=1$ is invariant for $\varphi$ (see \ref{proprietà-fascio_rette_3}) and intersects $L$ only at the point $(1,1)$, thus the intersection of $x_1'=1$ with the graph of   \eqref{Eq:persp} is unique at the point $(1,0)$.
	In other words, the parametric curve \eqref{Eq:persp} allows the function $\varphi(L)=\ell: [0,1] \to [0,+\infty)$ to be defined and its graph coincides with that of $\varphi(p, L(p))$ restricted to $p\in [\pi, 1]$. As a consequence, $\ell(0) = \pi(1-\pi)^{-1}$ and $\ell(1)=0$.
	\\
	\underline{THIRD STEP.} As a consequence of the regularity properties seen at the first step and thanks to \ref{Lor-prop:derivable_at_q} and \ref{Lor-prop:derivable_at_1} (see Definition~\ref{def:Lorenz_curves}) $\ell$ is twice differentiable with continuous and bounded derivatives over $[0,1]$. About the concavity of the function, note that since $L$ is a convex function, the locus of points between $L$ and the bisector line i.e.\@ $\Lambda =\{(x,y) : x\in [0,1] \text{ and } L(x) \le y\le x \}$ is a convex set. Thanks to \ref{proprietà-convex_into_convex}, $\varphi(\Lambda)$ is still convex and equals the locus of points between $\ell$ and the horizontal axis implying that the function $\ell$ is concave over $[0,1]$.
	Explicitly, from Equation~\eqref{eq:ell''}, as $p\downarrow\pi$, using $L\left(\pi\right)=0$ and the positive finite right derivatives $L'\left(\pi^+\right),L''\left(\pi^+\right)$ granted by~\ref{Lor-prop:derivable_at_q},
	\[
	\ell'\left(x\left(\pi^+\right)\right) = \frac{1+\pi-L'\left(\pi^+\right)}{L'\left(\pi^+\right)\left(1-\pi\right)}, \qquad
	\ell''\left(x\left(\pi^+\right)\right) = \frac{-L''\left(\pi^+\right)}{\left(L'\left(\pi^+\right)\right)^3},
	\]
	where $x\left(\pi^+\right)=0$; both limits are finite since $L'\left(\pi^+\right),L''\left(\pi^+\right)\in\left(0,\infty\right)$. Similarly, as $p\uparrow1$, using $L\left(1\right)=1$ and the bounded second left derivative $L''\left(1^-\right)$ granted by~\ref{Lor-prop:derivable_at_1} (which in particular implies $L'\left(1^-\right)<\infty$),
	\[
	\ell'\left(x\left(1^-\right)\right) = 2-L'\left(1^-\right), \qquad
	\ell''\left(x\left(1^-\right)\right) = -L''\left(1^-\right),
	\]
	where $x\left(1^-\right)=1$; both are finite. These explicit limits confirm that $\ell'$ and $\ell''$ extend continuously to $x=0$ and $x=1$ respectively, consistently with the regularity already established at the first step.
\end{proof}

\begin{proof}[Proof of Proposition~\ref{pro:BVP_solution}.]
	The general solution of the BVP problem \eqref{eq:BVP_problem} is
	\begin{equation*}
		u(p) =c_{1}+c_{2}p+\int_{0}^{p}\int_{0}^{z}f( t) dtdz.
	\end{equation*}
	The boundary conditions lead to $c_{1}=a$ and $c_{2}=-a-\int_{0}^{1}\int_{0}^{z}f( t) dtdz$. By integration by parts, the latter can be rewritten as $c_{2}=-a-\int_{0}^{1}(1-z) f( z) dz$. Hence one has
	\begin{equation*}
		u(p) =a-p\left[ a+\int_{0}^{1}(1-z) f( z) dz \right] +\int_{0}^{p}\int_{0}^{z}f ( t) dtdz
	\end{equation*}
	which, integrated by parts, leads to
	\begin{equation*}
		u(p) =a-p\left[ a+\int_{0}^{1}(1-z) f( z) dz \right] +\int_{0}^{p} ( p-z ) f( z ) dz
	\end{equation*}
	and, by straightforward calculation, to
	\begin{equation*}
		BVP(a,f) = u(p) = a(1-p) + ( p-1) \int_{0}^{p}zf( z) dz+p\int_{p}^{1}( z-1) f( z) dz. \tag{\ref{eq:BVP_solution}}
	\end{equation*}
	
	Finally, because $f$ is continuous on $[0,1]$ and hence integrable, the BVP's solution is well-posed and unique, which means that it depends continuously on the input data $a$ and $f$. %
\end{proof}

The following Lemma is preparatory for the proof of Proposition~\ref{pro:psi_bi-continuous_bijection}.

\begin{lemma}\label{lem:surjectivity}
	For every $\left(b,g\right)\in\mathbb{R}\times C\left(\left[0,1\right]\right)$, the function $\varphi^{-1}\left[BVP\left(e^b,-e^g\right)\right]$ belongs to $\mathcal{L}or$.
\end{lemma}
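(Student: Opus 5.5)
The plan is to construct the candidate curve explicitly, parametrized by the abscissa $x\in[0,1]$ of the transformed plane. Then I check properties~\ref{Lor-prop:1}--\ref{Lor-prop:derivable_at_1} of Definition~\ref{def:Lorenz_curves} one by one, using only concavity of $u$ and the boundary values.

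\textbf{Properties of $u$.} Set $u=BVP(e^b,-e^g)$. By Proposition~\ref{pro:BVP_solution}, $u\in C^2([0,1])$ with $u''=-e^g<0$, $u(0)=e^b>0$ and $u(1)=0$. So $u$ is strictly concave. Hence $u>0$ on $[0,1)$, and the graph of $u$ lies in $\varphi(T)=[0,1]\times[0,+\infty)$.

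Two tangent-line inequalities from concavity will be used repeatedly:
\[
0=u(1)\le u(x)+u'(x)(1-x),\qquad e^b=u(0)\le u(x)-xu'(x),\qquad x\in[0,1].
\]

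\textbf{The candidate curve.} Apply $\varphi^{-1}$ from~\eqref{eq:varphi_inverse} to the graph of $u$, together with the vertical segment $\{0\}\times[0,e^b]$, in analogy with the second step of the proof of Proposition~\ref{pro:definition_L-tilde}.

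The segment maps to $\{(t,0):t\in[0,\pi]\}$ with $\pi=e^b/(1+e^b)\in(0,1)$. The graph of $u$ maps to the curve
\[
p(x)=\frac{x+u(x)}{1+u(x)},\qquad L(p(x))=\frac{x}{1+u(x)},\qquad x\in[0,1].
\]
Differentiating gives
\[
p'(x)=\frac{1+u+u'(1-x)}{(1+u)^2},\qquad \frac{d}{dx}\frac{x}{1+u}=\frac{1+u-xu'}{(1+u)^2}.
\]
By the first tangent inequality the numerator of $p'$ is at least $1$; by the second, the other numerator is at least $1+e^b$. Hence $x\mapsto p(x)$ is a $C^2$ increasing bijection of $[0,1]$ onto $[\pi,1]$, since $p(0)=\pi$ and $p(1)=1$. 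So the image is the graph of a genuine function $L$ on $[0,1]$.

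This function satisfies $L=0$ on $[0,\pi]$, $L(1)=1$ and $L(0)=0$, giving~\ref{Lor-prop:1} and~\ref{Lor-prop:support}. It is $C^2$ and strictly increasing on $[\pi,1]$, with
\[
L'(p)=\frac{1+u-xu'}{1+u+u'(1-x)}\in(0,\infty).
\]
At $x=0$ this is finite and positive, which gives the positive right first derivative at $\pi$ required in~\ref{Lor-prop:derivable_at_q}.

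\textbf{Second derivative (main obstacle).} The step I expect to be the main obstacle is computing $L''$ and showing it is positive and finite up to both endpoints. I would derive it as $\frac{d}{dx}L'(p(x))/p'(x)$. Alternatively, I would invert formula~\eqref{eq:ell''}, since the roles of $L$ and $\ell$ are exchanged by the perspective map. I expect an expression of the form
\[
L''(p(x))=\frac{-u''(x)\,(1+u(x))^3}{\bigl(1+u(x)+u'(x)(1-x)\bigr)^3}.
\]
Its numerator is $e^{g}(1+u)^3>0$, and its denominator is at least $1$ by the first tangent inequality. So $L''$ is continuous and strictly positive on $[\pi,1]$, including the one-sided values at $p=\pi$ and $p=1$ (the latter is just $e^{g(1)}$). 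This gives~\ref{Lor-prop:derivable_at_q} and~\ref{Lor-prop:derivable_at_1}. The computation is a routine quotient-rule calculation, but it must be checked carefully.

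\textbf{Convexity.} For~\ref{Lor-prop:conv_incr}, $L$ is convex on $[\pi,1]$ because $L''>0$ there, and it is identically $0$ on $[0,\pi]$. At the junction the left derivative is $0$, which is less than the right derivative $L'(\pi^+)>0$. So $L$ is convex and non-decreasing on $[0,1]$ and strictly increasing on $[\pi,1]$. Therefore $L\in\mathcal{L}or$.
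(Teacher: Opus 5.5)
Your proposal is correct and follows essentially the same route as the paper. Both use the parametrization $p(x)=(x+u)/(1+u)$, $L(p(x))=x/(1+u)$, the two tangent-line inequalities giving $p'>0$ and $L'>0$, and the same second-derivative formula. The paper writes that formula as $-u''\,[xN+(1-x)M]\,D^2/N^3$ with $M=1+u-xu'$, $N=1+u+u'(1-x)$, $D=1+u$. Since $xN+(1-x)M=D$, this is your expression; indeed $\frac{d}{dx}(M/N)=-u''D/N^2$, and dividing by $p'=N/D^2$ gives exactly your conjectured formula.

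The only difference is the convexity step. The paper argues geometrically: the convex hypograph of $u$ is sent by the line-preserving map $\varphi^{-1}$ onto the region between $L$ and the diagonal. You instead use $L''>0$ on $[\pi,1]$ together with the left derivative $0$ at $\pi$ not exceeding $L'(\pi^+)>0$. This is equally valid, and it reuses the computation you already need for the regularity conditions.
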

\begin{proof}
	Set $a=e^b>0$, $f=-e^g$, and $u=BVP\left(a,f\right)$, so that $u\in C^2\left(\left[0,1\right]\right)$, $u''=f<0$ everywhere, $u\left(0\right)=a$, $u\left(1\right)=0$ (Proposition~\ref{pro:BVP_solution}); in particular $u$ is strictly concave.
	
	\smallskip
	\noindent\underline{Non-negativity.} By concavity, writing $p=\left(1-p\right)\cdot0+p\cdot1$ for $p\in\left[0,1\right]$,
	\[
	u\left(p\right) \ge \left(1-p\right)u\left(0\right)+p\,u\left(1\right) = a\left(1-p\right) \ge 0.
	\]
	
	\smallskip
	\noindent\underline{The change of variables $p\left(x\right)$.} Define $p:\left[0,1\right]\to\mathbb{R}$ by $p\left(x\right)=\left(x+u\left(x\right)\right)/\left(u\left(x\right)+1\right)$, so that
	\[
	p'\left(x\right) = \frac{u\left(x\right)+1+u'\left(x\right)\left(1-x\right)}{\left(u\left(x\right)+1\right)^2}.
	\]
	Since $u$ is concave, the tangent line at $x$ lies above $u$, so $u\left(1\right)\le u\left(x\right)+u'\left(x\right)\left(1-x\right)$, i.e.\@ $u\left(x\right)+u'\left(x\right)\left(1-x\right)\ge0$; hence $p'\left(x\right)\ge1/\left(u\left(x\right)+1\right)^2>0$ for every $x\in\left[0,1\right]$. Thus $p$ is a strictly increasing $C^1$-diffeomorphism from $\left[0,1\right]$ onto $\left[\pi,1\right]$, where $\pi:=p\left(0\right)=a/\left(a+1\right)\in\left(0,1\right)$ and $p\left(1\right)=1$.
	
	\smallskip
	\noindent\underline{Definition of $L$.} Set $L\left(p\right)=0$ for $p\in\left[0,\pi\right]$, and $L\left(p\left(x\right)\right)=x/\left(u\left(x\right)+1\right)$ for $x\in\left(0,1\right]$. Since $u\ge0$, one has $L\left(p\left(x\right)\right)=x/\left(u\left(x\right)+1\right)\le x\le1$, and $L\left(p\left(x\right)\right)>0$ for $x>0$; together with $L\left(1\right)=1/\left(u\left(1\right)+1\right)=1$, this gives~\ref{Lor-prop:1} and~\ref{Lor-prop:support}.
	
	\smallskip
	\noindent\underline{Monotonicity.} By the inverse function theorem, for $x\in\left(0,1\right)$,
	\[
	L'\left(p\left(x\right)\right) = \frac{dL/dx}{dp/dx} = \frac{u\left(x\right)+1-xu'\left(x\right)}{u\left(x\right)+1+u'\left(x\right)\left(1-x\right)}.
	\]
	Again by concavity, the tangent line at $x$ lies above $u$ at $0$: $u\left(0\right)\le u\left(x\right)-xu'\left(x\right)$, i.e.\@ $u\left(x\right)+1-xu'\left(x\right)\ge a+1>0$; combined with the positivity of the denominator shown above, $L'\left(p\left(x\right)\right)>0$ for every $x\in\left(0,1\right)$, so $L$ is strictly increasing on $\left(\pi,1\right)$, giving the increasing part of~\ref{Lor-prop:conv_incr}.
	
	\smallskip
	\noindent\underline{Convexity.} The hypograph $\Lambda' = \left\{\left(x,y\right):x\in\left[0,1\right],\,0\le y\le u\left(x\right)\right\}$ is convex, since $u$ is concave. As $\varphi^{-1}$ maps straight lines to straight lines \ref{proprietà-facio_rette_0}, it maps convex subsets of $\varphi\left(T\right)$ into convex subsets of $T$, by the same argument as~\ref{proprietà-convex_into_convex}. Since $\varphi^{-1}\left(x,0\right)=\left(x,x\right)$ and $\varphi^{-1}\left(x,u\left(x\right)\right)=\left(p\left(x\right),L\left(p\left(x\right)\right)\right)$, the image $\varphi^{-1}\left(\Lambda'\right)$ is the region $\left\{\left(p,z\right):L\left(p\right)\le z\le p\right\}$, which is therefore convex; as in the proof of Proposition~\ref{pro:definition_L-tilde}, this is equivalent to $L$ being convex on $\left[0,1\right]$, completing~\ref{Lor-prop:conv_incr}.
		
	\smallskip
	\noindent\underline{Regularity and boundary derivatives.} Since $\varphi^{-1}$ restricted to $\varphi\left(T\right)$ is a diffeomorphism onto $T$ \ref{proprietà-varphi_bicontinua} and $u\in C^2\left(\left[0,1\right]\right)$, $L$ is twice differentiable on $\left(0,1\right]\setminus\left\{\pi\right\}$, giving the differentiability part of~\ref{Lor-prop:derivable_at_q}.
	Since $u$ is concave, its tangent line at $x$ lies above $u$ both at $0$ and at $1$, giving
	\[
	M\left(x\right):=u\left(x\right)+1-xu'\left(x\right)\ge a+1>0, \qquad N\left(x\right):=u\left(x\right)+1+u'\left(x\right)\left(1-x\right)\ge1>0
	\]
	for every $x\in\left[0,1\right]$. Since also $D\left(x\right):=u\left(x\right)+1\ge1>0$ and $u''\left(x\right)=f\left(x\right)=-e^{g\left(x\right)}<0$ for every $x\in\left[0,1\right]$ (as $g$ is real-valued), differentiating $L'\left(p\left(x\right)\right)=M\left(x\right)/N\left(x\right)$ with respect to $x$ and dividing by $p'\left(x\right)=N\left(x\right)/D\left(x\right)^2$ gives
	\[
	L''\left(p\left(x\right)\right) = -u''\left(x\right)\,\frac{\left[xN\left(x\right)+\left(1-x\right)M\left(x\right)\right]D\left(x\right)^2}{N\left(x\right)^3},
	\]
	which is finite and strictly positive for every $x\in\left[0,1\right]$, i.e.\@ $L''\left(p\right)>0$ for every $p\in\left(\pi,1\right)$, with positive one-sided limits at $p=\pi$ and $p=1$. This establishes the strict positivity of $L''$ on $\left(\pi,1\right)$ required in~\ref{Lor-prop:derivable_at_q}, together with the strict positivity at $p=1$ required in~\ref{Lor-prop:derivable_at_1}.
	For later reference, the explicit boundary values follow from the same formula: at $x=0$, since $D\left(0\right)=M\left(0\right)=a+1$ and $N\left(0\right)\ge1$,
	\[
	L'\left(\pi^+\right) = \frac{M\left(0\right)}{N\left(0\right)} = \frac{a+1}{N\left(0\right)} \in\left(0,\infty\right), \qquad
	L''\left(\pi^+\right) = e^{g\left(0\right)}\,\frac{\left(a+1\right)^3}{N\left(0\right)^3} \in\left(0,\infty\right);
	\]
	at $x=1$, since $D\left(1\right)=N\left(1\right)=1$,
	\[
	L''\left(1^-\right) = e^{g\left(1\right)} \in\left(0,\infty\right).
	\]
	Together with the previous steps, this completes~\ref{Lor-prop:derivable_at_q} and~\ref{Lor-prop:derivable_at_1}, and shows $\varphi^{-1}\left[BVP\left(e^b,-e^g\right)\right]\in\mathcal{L}or$.
\end{proof}

\begin{proof}[Proof of Proposition~\ref{pro:psi_bi-continuous_bijection}.]
	Firstly prove that $\psi$ is a bijection. It is enough to show that $\psi^{-1}\{\psi (L)\}=L$ and $\psi \{\psi ^{-1}(b,g)\}=(b,g)$ for any $L\in \mathcal{L}or$ and $(b,g)\in \mathbb{R}\times C([0,1])$ (see \cite{mcc88}).
	\\
	Concerning the first condition $\psi^{-1}\{\psi (L)\}=L$, by definition of $\psi$ and $\psi^{-1}$, for each $L\in\mathcal{L}or$, it holds
	\begin{align*}
		\psi^{-1}\{\psi (L)\} & = \psi^{-1}\left\{ (\ln(\ell(0)),\ln(-\ell''))\right\} \\
		& =\varphi^{-1}\left\{ BVP\left[ \exp(\ln(\ell(0))),-\exp(\ln(-\ell'') ) \right]\right\}
		\\
		& = \varphi^{-1}\left\{ BVP( \ell(0) , \ell'')\right\}
		= \varphi^{-1}\left\{ \ell \right\} = L.
	\end{align*}
	\\
	By Lemma~\ref{lem:surjectivity}, $\varphi^{-1}\left[BVP(e^b,-e^g)\right]\in\mathcal{L}or$ for every $(b,g)$, so $\psi^{-1}$ is well defined as a map into $\mathcal{L}or$; it remains to verify $\psi\{\psi^{-1}(b,g)\}=(b,g)$, $\forall (b,g)\in \mathbb{R}\times C([0,1])$. To do this, note that
	\begin{align*}
		\psi \{\psi^{-1}(b,g)\} & = \psi \left\{\varphi^{-1}\left[ BVP( e^b, -e^g ) \right]\right\} 
		\\
		& = ( \psi_1( \varphi^{-1}\left[ BVP( e^b, -e^g ) \right] ) , 
		\psi_2(\varphi^{-1}\left[ BVP( e^b, -e^g ) \right] ) )
	\end{align*}
	where $\psi_1$, $\psi_2$ are defined by \eqref{eq:bijection_psi}. Thus it is sufficient to prove that 
	\begin{align}
		\psi_1( \varphi^{-1}\left[ BVP( e^b, -e^g ) \right] ) & =b, \notag
		\\
		\psi_2(\varphi^{-1}\left[ BVP( e^b, -e^g ) \right] ) & = g. \label{eq:proof_psi2}
	\end{align}
	For the first, note that $\psi_1( \varphi^{-1}\left[ BVP( e^b, -e^g ) \right] )= \ln(\pi(1-\pi)^{-1})$ where $\pi = \inf \mathrm{supp}(\varphi^{-1}\left[ BVP( e^b, -e^g ) \right] )$. Since $BVP( e^b, -e^g ) $ is a function passing through $(x_1',x_2')=(0,e^b)$ (and with second derivative $-e^g$) then the support of the function $\varphi^{-1}\left[ BVP( e^b, -e^g ) \right]$ is the interval $[\pi,1]$ where
	$\pi = \frac{x_1'+x_2'}{x_2'+1} = \frac{e^b}{e^b+1}$ and consequently $\ln(\pi(1-\pi)^{-1}) = b$.
	For the second identity in \eqref{eq:proof_psi2}, note that by definition of $\psi_2$
	\begin{align*}
		\psi_2(\varphi^{-1}\left[ BVP( e^b, -e^g ) \right] ) & =\ln( -D^2\varphi( \varphi^{-1}\left[ BVP( e^b, -e^g ) \right] ) ) 
		\\
		& = \ln( -D^2\left[ BVP( e^b, -e^g ) \right] )  = \ln( e^g ) = g.
	\end{align*}
	
	It remains to justify that $\psi$ and $\psi^{-1}$ are continuous with respect to the stated metrics, and to specify the smooth structure on $\mathcal{L}or$.
	\\
	For $\psi^{-1}$: by Proposition~\ref{pro:BVP_solution}, $BVP(a,f)$ depends continuously on $(a,f)\in\mathbb{R}\times C([0,1])$ in sup norm; the map $(b,g)\mapsto(e^b,-e^g)$ is continuous $\mathbb{R}\times C([0,1])\to\mathbb{R}\times C([0,1])$; and $\varphi^{-1}$ restricted to $\varphi(T)$ is continuous onto $T$ by~\ref{proprietà-varphi_bicontinua}. Composing these three continuous maps, and using Lemma~\ref{lem:surjectivity} to know the composition lands in $\mathcal{L}or$, shows that $\psi^{-1}$ is continuous from $\mathbb{R}\times C([0,1])$ to $\left(\mathcal{L}or,\|\cdot\|_\infty\right)$.
	\\
	For $\psi$: given $L\in\mathcal{L}or$, $\varphi$ restricted to $T$ is continuous onto $\varphi(T)$ by~\ref{proprietà-varphi_bicontinua}, giving continuity of $L\mapsto\ell$; since $\ell''(x)<0$ for every $x\in[0,1]$ by~\ref{Lor-prop:derivable_at_q}--\ref{Lor-prop:derivable_at_1} (see the remark after Definition~\ref{def:Lorenz_curves}), the map $\ell\mapsto\ln(-\ell'')$ is well defined and continuous on $[0,1]$, and $\ell(0)>0$ makes $\ell(0)\mapsto\ln(\ell(0))$ continuous as well. Hence $\psi$ is continuous from $\left(\mathcal{L}or,\|\cdot\|_\infty\right)$ to $\mathbb{R}\times C([0,1])$.
	\\
	Finally, equipping $\mathcal{L}or$ with the smooth structure pulled back through $\psi$ from $\mathbb{R}\times C([0,1])$ -- i.e.\@ declaring $h:\mathcal{L}or\to\mathbb{R}$ smooth whenever $h\circ\psi^{-1}$ is smooth on $\mathbb{R}\times C([0,1])$ -- $\psi$ becomes, by construction, a diffeomorphism from $\mathcal{L}or$ onto $\mathbb{R}\times C([0,1])$, consistently with the metric topology established above.
\end{proof}

The remaining proofs are based on the following technical and classical result.
\begin{lemma}\label{lem:boundedness_fourth_moment}
	Under assumptions \ref{hp:momento_4_q} and \ref{hp:momento_4_L}, the fourth moment of $\Vert \xi \Vert_\mathbb{H}$ is bounded, i.e.\@ $\mathbb{E}\left[ \left\Vert \xi \right\Vert_\mathbb{H}^{4} \right]<\infty$. Moreover, the fourth moment of $\xi$ in the Banach space $\mathbb{B}=\mathbb{R}\times C\left(\left[0,1\right]\right)$ is also bounded, i.e.\@ $\mathbb{E}\left[ \left\Vert \xi \right\Vert_\mathbb{B}^{4} \right]<\infty$.
\end{lemma}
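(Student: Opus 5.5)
The plan is to bound both norms of $\xi=(\xi^{(1)},\xi^{(2)})=(\ln(\ell(0)),\ln(-\ell''))$ pointwise on $\Omega$ by a fixed combination of the two quantities controlled in Assumptions~\ref{hp:momento_4_q} and~\ref{hp:momento_4_L}, and then take expectations. Since $\ell(0)=\pi(1-\pi)^{-1}$ by Proposition~\ref{pro:definition_L-tilde}, one has $\xi^{(1)}=\ln(\pi/(1-\pi))$ exactly. Assumption~\ref{hp:momento_4_q} then reads $\mathbb{E}[\vert\xi^{(1)}\vert^4]<\infty$.

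The $\mathbb{B}$ statement comes first, because it is the stronger one. Here $\Vert\xi\Vert_\mathbb{B}=\vert\xi^{(1)}\vert+\Vert\xi^{(2)}\Vert_\infty$. The elementary convexity inequality $(x+y)^4\le 8(x^4+y^4)$ for $x,y\ge0$ (Jensen applied to $t\mapsto t^4$) gives
\[
\mathbb{E}\left[\Vert\xi\Vert_\mathbb{B}^4\right]\le 8\,\mathbb{E}\left[\ln^4\!\left(\tfrac{\pi}{1-\pi}\right)\right]+8\,\mathbb{E}\left[\sup_{x\in[0,1]}\left\vert\ln(-\ell''(x))\right\vert^4\right].
\]
Both terms on the right are finite by Assumptions~\ref{hp:momento_4_q} and~\ref{hp:momento_4_L}.

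For the $\mathbb{H}$ norm, use $\Vert g\Vert_{L^2_{[0,1]}}^2=\int_0^1 g^2\le\Vert g\Vert_\infty^2$, since $[0,1]$ has unit Lebesgue measure. This yields $\Vert\xi\Vert_\mathbb{H}^2=\vert\xi^{(1)}\vert^2+\Vert\xi^{(2)}\Vert_{L^2}^2\le\Vert\xi\Vert_\mathbb{B}^2$, hence $\Vert\xi\Vert_\mathbb{H}^4\le\Vert\xi\Vert_\mathbb{B}^4$. Alternatively, $\Vert\xi\Vert_\mathbb{H}^4\le 2(\vert\xi^{(1)}\vert^4+\Vert\xi^{(2)}\Vert_{L^2}^4)$ can be bounded directly in the same way. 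Either route gives the first claim as a consequence of the second.

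The only genuinely delicate point is measurability. The quantities $\Vert\xi\Vert_\mathbb{B}$ and $\sup_x\vert\ln(-\ell''(x))\vert$ must be random variables, even though $\xi$ is defined only as a random element of $(\mathbb{H},\mathcal{F})$. I would handle this by observing that $\xi^{(2)}$ has continuous paths, by Proposition~\ref{pro:psi_bi-continuous_bijection}. The supremum over $[0,1]$ therefore equals the supremum over the rationals in $[0,1]$, which is a countable supremum of point evaluations. Each point evaluation can in turn be written as a limit of $L^2$ inner products with shrinking indicator averages, and these are measurable functionals on $\mathbb{H}$. Apart from this check, the argument is a direct application of the stated assumptions.
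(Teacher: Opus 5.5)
Your proof is correct and follows essentially the paper's argument: the $(x+y)^4\le 8(x^4+y^4)$ split of the $\mathbb{B}$-norm with Assumptions~\ref{hp:momento_4_q} and~\ref{hp:momento_4_L} applied directly, and the $L^2_{[0,1]}$ norm dominated by the sup norm. The paper bounds the $\mathbb{H}$-norm separately, using $(a+b)^2\le 2a^2+2b^2$ and Jensen, rather than through $\Vert\cdot\Vert_\mathbb{H}\le\Vert\cdot\Vert_\mathbb{B}$; this difference is inessential, and your measurability check is a sound addition that the paper leaves implicit.
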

\begin{proof}[Proof of Lemma~\ref{lem:boundedness_fourth_moment}.]
	First, consider the fourth moment of $\Vert \xi \Vert_\mathbb{H}$
	\begin{align*}
		\mathbb{E}\left[ \left\Vert \xi \right\Vert_\mathbb{H}^{4} \right]  &= \mathbb{E}\left[ \left\Vert \psi(L) \right\Vert_\mathbb{H}^{4} \right] = \mathbb{E}\left[ \left\Vert (\ln(\ell(0)) , \ln(-\ell'')) \right\Vert_\mathbb{H}^{4} \right]  
		\\
		& =  \mathbb{E}\left[ \left(\ln^2 (\ell(0)) + \left\Vert \ln(-\ell'')\right\Vert_{L^2_{[0,1]}}^2\right)^{2} \right]
		\\
		& \le 2 \mathbb{E}\left[ \ln^4 (\ell(0)) + \left\Vert \ln(-\ell'') \right\Vert_{L^2_{[0,1]}}^4  \right]
	\end{align*}
	where the last inequality is a consequence of $\left(a+b\right)^2 \le 2a^2+2b^2$, for any $a,b \in \mathbb{R}$. Similarly,
	\begin{align*}
		\mathbb{E}\left[ \left\Vert \xi \right\Vert_\mathbb{B}^{4} \right]  
		& = \mathbb{E}\left[ \left( \left\vert \ln(\ell(0)) \right\vert + \left\Vert \ln(-\ell'') \right\Vert_\infty \right)^4 \right] 
		\le 8 \mathbb{E}\left[ \ln^4 (\ell(0)) + \left\Vert \ln(-\ell'') \right\Vert_\infty^4  \right],
	\end{align*}
	using $\left(a+b\right)^4 \le 8\left(a^4+b^4\right)$ for any $a,b\ge0$. Thanks to Assumption~\ref{hp:momento_4_q}, $\mathbb{E}\left[ \ln^4 (\ell(0)) \right]$ is finite, and it remains to control $\mathbb{E}\left[ \left\Vert \ln(-\ell'') \right\Vert_{L^2_{[0,1]}}^4 \right]$ and $\mathbb{E}\left[ \left\Vert \ln(-\ell'') \right\Vert_\infty^4 \right]$.
	By definition,
	\[
	\left\Vert \ln(-\ell'') \right\Vert_\infty^4 = \left( \sup_{x\in\left[0,1\right]} \left\vert \ln\left(-\ell''\left(x\right)\right) \right\vert \right)^4 = \sup_{x\in\left[0,1\right]} \left\vert \ln\left(-\ell''\left(x\right)\right) \right\vert^4,
	\]
	so that Assumption~\ref{hp:momento_4_L} directly gives $\mathbb{E}\left[ \left\Vert \ln(-\ell'') \right\Vert_\infty^4 \right] < \infty$, which proves the second moment bound.
	
	For the first, since $\left[0,1\right]$ has unit length, Jensen's inequality gives
	\[
	\left\Vert \ln(-\ell'') \right\Vert_{L^2_{[0,1]}}^4 = \left( \int_0^1 \ln^2\left(-\ell''\left(x\right)\right) dx \right)^2 \le \int_0^1 \ln^4\left(-\ell''\left(x\right)\right) dx \le \sup_{x\in\left[0,1\right]} \left\vert \ln\left(-\ell''\left(x\right)\right) \right\vert^4,
	\]
	so that, again by Assumption~\ref{hp:momento_4_L}, $\mathbb{E}\left[ \left\Vert \ln(-\ell'') \right\Vert_{L^2_{[0,1]}}^4 \right] < \infty$. Combining the two bounds concludes the proof.
\end{proof}

\begin{proof}[Proof of Proposition~\ref{pro:eigenfunction_continuous}.] 
	Thanks to Lemma~\ref{lem:boundedness_fourth_moment} (which requires \ref{hp:momento_4_q} and \ref{hp:momento_4_L}), the fourth moment of $\Vert \xi \Vert_\mathbb{H}$ and the fourth moment of $\Vert \xi \Vert_\mathbb{B}$ are both finite. By Jensen's inequality,
	\[
	\left(\mathbb{E}\left[ \Vert \xi \Vert_\mathbb{H}^{2} \right] \right)^2 \le \mathbb{E}\left[ \Vert \xi \Vert_\mathbb{H}^{4} \right] <\infty,
	\qquad
	\left(\mathbb{E}\left[ \Vert \xi \Vert_\mathbb{B}^{2} \right] \right)^2 \le \mathbb{E}\left[ \Vert \xi \Vert_\mathbb{B}^{4} \right] <\infty,
	\]
	and, consequently,
	\[
	\mathbb{E}\left[ \left\vert \xi^{(1)} \right\vert^{2} \right]<\infty ,
	\qquad \qquad
	\mathbb{E}\left[ \left\Vert \xi^{(2)} \right\Vert_{L^2_{[0,1]}}^{2} \right]<\infty,
	\qquad \qquad
	\mathbb{E}\left[ \left\Vert \xi^{(2)} \right\Vert_\infty^{2} \right]<\infty.
	\]
	By Proposition~\ref{pro:definition_L-tilde}, $\ell\in C^2\left(\left[0,1\right]\right)$ almost surely, so each sample path of $\xi^{(2)} = \ln\left(-\ell''\right)$ is continuous on $\left[0,1\right]$ and, in particular,
	\[
	\xi^{(2)}\left(x+h\right) - \xi^{(2)}\left(x\right) \longrightarrow 0 \qquad \text{a.s.\@ as } h\to0,
	\]
	for every $x\in\left[0,1\right]$. Moreover,
	\[
	\left\vert \xi^{(2)}\left(x+h\right) - \xi^{(2)}\left(x\right) \right\vert^2 \le \left( 2\left\Vert \xi^{(2)} \right\Vert_\infty \right)^2 \qquad \text{a.s.\@ for every } h,
	\]
	then the dominated convergence Theorem yields
	\[
	\mathbb{E}\left[\left(\xi^{(2)}\left(x+h\right) - \xi^{(2)}\left(x\right)\right)^2\right] \longrightarrow 0 \qquad \text{as } h\to0,
	\]
	i.e.\@ $\xi^{(2)}$ is continuous in quadratic mean on $\left[0,1\right]$.
	\\%
	To show that $Cov(\xi^{(1)}, \xi^{(2)}(x))$ is continuous with respect to $x$ note that the linearity property of the covariance and the Cauchy--Schwarz inequality lead to
	\begin{align*}
		\left\vert Cov(\xi^{(1)}, \xi^{(2)}(x+h)) - Cov(\xi^{(1)}, \xi^{(2)}(x)) \right\vert = 
		\left\vert Cov(\xi^{(1)}, \xi^{(2)}(x+h) - \xi^{(2)}(x))\right\vert \le
		\\
		\le 
		\sigma_{\xi^{(1)}} 
		\left( 
		\mathbb{E}\left[\left(\xi^{(2)}(x+h) - \xi^{(2)}(x)\right)^2\right]  - 
		\left( \mathbb{E}\left[ \xi^{(2)}(x+h)\right] - \mathbb{E}\left[\xi^{(2)}(x)\right]\right)^2
		\right)^{1/2}.
	\end{align*}
	For $h\to 0$, the left--hand side tends to zero because $\xi^{(2)}$ is continuous in quadratic mean.
	Regarding the covariance function $c(x_1,x_2)=Cov(\xi^{(2)}(x_1), \xi^{(2)}(x_2))$, the triangle inequality gives
	\begin{align*}
		\left\vert c(x_1+h_1\right. & \left. \!\!, x_2+h_2) - c(x_1,x_2) \right\vert = 
		\\ 
		& = 
		\left\vert c(x_1+h_1,x_2+h_2) - c(x_1+h_1,x_2) + c(x_1+h_1,x_2) - c(x_1,x_2) \right\vert \le 
		\\
		& \le \left\vert Cov(\xi^{(2)}(x_1+h_1), \xi^{(2)}(x_2+h_2) - \xi^{(2)}(x_2)) \right\vert +
		\\
		& \qquad\qquad + \left\vert Cov(\xi^{(2)}(x_1+h_1) - \xi^{(2)}(x_1),  \xi^{(2)}(x_2)) \right\vert.
	\end{align*}
	Using arguments analogous to those previously employed for the cross-covariance, and exploiting the mean-square continuity of $\xi^{(2)}$, it follows that the covariance function $c(x_1,x_2)$ is continuous on $[0,1]\times [0,1]$.
	\\%
	Let $\nu = [\nu^{(1)}, \nu^{(2)}]^{\intercal}\in \mathbb{H}$ be an eigenvector of the covariance operator $\Sigma_{\xi}$ associated with eigenvalue $\alpha$. Then it satisfies the equation $\Sigma_{\xi}\nu = \alpha \nu$ that is
	\[
	\alpha \left[
	\begin{array}{c}
		\nu^{(1)}
		\\
		\nu^{(2)}
	\end{array}
	\right] =
	\left[
	\begin{array}{cc}
		\sigma^2_{\xi^{(1)}}	& \Sigma_{\xi^{(1)}, \xi^{(2)}}
		\\ 
		\Sigma_{\xi^{(2)}, \xi^{(1)}} & \Sigma_{\xi^{(2)}}
	\end{array}
	\right]
	\left[
	\begin{array}{c}
		\nu^{(1)}
		\\
		\nu^{(2)}
	\end{array}
	\right] = 
	\left[
	\begin{array}{c}
		\nu^{(1)} \sigma^2_{\xi^{(1)}} + \Sigma_{\xi^{(1)}, \xi^{(2)}} \nu^{(2)}
		\\
		\nu^{(1)} \Sigma_{\xi^{(2)}, \xi^{(1)}} + \Sigma_{\xi^{(2)}} \nu^{(2)}
	\end{array}
	\right].
	\]
	Thus, 
	\begin{equation}\label{eq:second_component_eigenvalue}
		\nu^{(2)} = \frac{1}{\alpha}\left(\nu^{(1)} \Sigma_{\xi^{(2)}, \xi^{(1)}} + \Sigma_{\xi^{(2)}} \nu^{(2)} \right)
	\end{equation}
	is continuous on $[0,1]$ thanks to results established above and, hence, $\nu\in \mathbb{B}$.
\end{proof}

\begin{proof}[Proof of Proposition~\ref{pro:cons-MoV-full-observed-LCs}.]
 First of all, note that since $M_{j}^{k}=\psi ^{-1}(m_{j}^{k})$ and $M_{j,n}^{k}=\psi ^{-1}(m_{j,n}^{k})$, the continuity of $\psi ^{-1}$ from $\mathbb{B}$ to $\left(\mathcal{L}or,\Vert\cdot\Vert_\infty\right)$ (see Proposition~\ref{pro:psi_bi-continuous_bijection}) implies that the a.s.\@ convergence of $m_{j,n}^{k}$ to $m_{j}^{k}$ in $\mathbb{B}$ is sufficient to guarantee that $\Vert M_{j,n}^{k}-M_{j}^{k}\Vert_\infty\to0$ a.s..
The convergence in $\mathbb{B}$ is established in two steps: the following triangle inequality controls the $\mathbb{H}$-norm of the difference, and the uniform convergence of the functional component is proved afterwards. Consider then
\begin{align}
	\left\Vert m_{j,n}^k-m_{j}^k \right\Vert_\mathbb{H} & = \left\Vert \mu_{n}\pm k\nu _{j,n} (\alpha _{j,n})^{1/2} - 
	\left( \mu \pm k\nu _{j}(\alpha _{j})^{1/2} \right) \right\Vert_\mathbb{H}  \notag \\
	&\leq \left\Vert \mu_{n}-\mu \right\Vert_\mathbb{H} + k  \left( (\alpha _{j,n})^{1/2} \Vert \nu _{j,n}-\nu _{j} \Vert_\mathbb{H} \right. \notag\\
	& \qquad\qquad\qquad\qquad  \left. + \Vert \nu _{j} \Vert_\mathbb{H} \times
	\left\vert (\alpha _{j,n})^{1/2}- (\alpha _{j})^{1/2} \right\vert \right),
	\label{eq:tri_ineq_mode_of_variation_Lor_tilde_primo}
\end{align}
where, to avoid identification problems, $\nu _{j,n}$ is taken such that $\langle \nu _{j,n},\nu _{j}\rangle_\mathbb{H} \ge 0$.
Concerning the first summand in \eqref{eq:tri_ineq_mode_of_variation_Lor_tilde_primo}, note that, by Lemma~\ref{lem:boundedness_fourth_moment} and Jensen's inequality, $\mathbb{E}\left[\Vert\xi\Vert_\mathbb{B}\right]<\infty$; since $\mu_{n}$ is the empirical mean of i.i.d.\@ variables with bounded first moment in the Banach space $\mathbb{B}=\mathbb{R}\times C([0,1])$, the Strong Law of Large Numbers (see e.g.\@ \citealp[Chapter~7]{led:tal13}) guarantees that $\Vert \mu_{n}-\mu\Vert_\mathbb{B} \rightarrow 0 $ a.s.. This implies that $\Vert \mu_{n}-\mu\Vert_\mathbb{H} \rightarrow 0 $ and that the functional component converges uniformly on $[0,1]$.
Consider now the remaining terms of \eqref{eq:tri_ineq_mode_of_variation_Lor_tilde_primo}, that is
\begin{equation*}
	k \left( (\alpha _{j,n})^{1/2} \, \Vert \nu _{j,n}-\nu _{j} \Vert_\mathbb{H} + \Vert
	\nu _{j} \Vert_\mathbb{H}\ \left\vert (\alpha _{j,n})^{1/2}-(\alpha _{j})^{1/2}
	\right\vert \right).
\end{equation*}
Using the fact that $\Vert \nu _{j}\Vert_\mathbb{H} =1$, $\alpha _{1,n}\geq \alpha_{j,n}$ and $\mathbb{E}\left[ \Vert \xi \Vert_\mathbb{H}^{2} \right] <\infty $ (see Lemma~\ref{lem:boundedness_fourth_moment}) together with standard consistency results for the empirical covariance operator and its eigenelements $ \{\alpha _{j,n},\nu _{j,n}\} $ (see \citealp[Chapter~4]{bos00}), the following results hold, for a suitable positive constant $C$,
\begin{align}
	& \left\vert (\alpha _{j,n})^{1/2}-(\alpha _{j})^{1/2} \right\vert \leq 
	C \Vert \Sigma _{\xi ,n}-\Sigma _{\xi } \Vert _{\infty } \to 0, \qquad a.s.,
	\label{eq:eigenvalues_dominated_by_covariance}
	\\
	& 
	\Vert \nu _{j,n}-\nu _{j} \Vert_\mathbb{H} \leq \frac{2\sqrt{2}}{\delta_j} \Vert \Sigma _{\xi ,n}-\Sigma _{\xi } \Vert _{\infty }\to 0, \qquad a.s.,
	\label{eq:eigenvectors_dominated_by_covariance}
\end{align}
where $\delta_j$ is defined as above (with the convention $\delta_1=\alpha_1-\alpha_2$ for $j=1$).
To prove the uniform convergence of the functional component of $m_{j,n}^k$, it is sufficient to show that the second component of the eigenfunctions, denoted by $\nu_{j,n}^{(2)}$, converges uniformly to $\nu_{j}^{(2)}$ on $[0,1]$. 
Analogously to what was done to get Equation~\eqref{eq:second_component_eigenvalue}, the functional component of the eigenvalue equation is:
\begin{equation*}
	\nu_{j,n}^{(2)}(x_1) = \frac{1}{\alpha_{j,n}} \left( \nu_{j,n}^{(1)} c_{2,1,n}(x_1) + \int_0^1 c_n(x_1,x_2) \nu_{j,n}^{(2)}(x_2) dx_2 \right), \qquad x_1 \in [0,1],
\end{equation*}
where 
\begin{align*}
c_{2,1,n}(x_1) & = \frac1n\sum_{i=1}^n\left(\xi_i^{(2)}(x_1)-\mu_n^{(2)}(x_1)\right)\left(\xi_i^{(1)}-\mu_n^{(1)}\right),
\\
c_n(x_1,x_2) & = \frac1n\sum_{i=1}^n\left(\xi_i^{(2)}(x_1)-\mu_n^{(2)}(x_1)\right)\left(\xi_i^{(2)}(x_2)-\mu_n^{(2)}(x_2)\right)
\end{align*}
are the empirical counterparts of the cross-covariance function $\Sigma_{\xi^{(2)},\xi^{(1)}}$ and of the covariance function $c$ of $\xi^{(2)}$, respectively (see Equation~\eqref{eq:block_theoretical_cov}).
Thanks to Proposition~\ref{pro:eigenfunction_continuous}, the theoretical covariance function $c(x_1,x_2)$ is continuous on $[0,1]^2$.
Since $c_n$ is centred at the empirical mean $\mu_n^{(2)}$, it is not itself an average of i.i.d.\@ terms; however, it differs from such an average only by a term vanishing with $\mu_n-\mu$. Indeed,
\[
c_n=\frac1n\sum_{i=1}^n\left(\xi_i^{(2)}-\mu^{(2)}\right)\otimes\left(\xi_i^{(2)}-\mu^{(2)}\right)-\left(\mu_n^{(2)}-\mu^{(2)}\right)\otimes\left(\mu_n^{(2)}-\mu^{(2)}\right),
\]
where $\left(f\otimes h\right)(x_1,x_2) := f(x_1)h(x_2)$, and the second term has sup norm bounded by $\Vert\mu_n-\mu\Vert_\mathbb{B}^2\to0$ a.s.. It is therefore enough to study the first term, which is the empirical mean of i.i.d.\@ elements of $C([0,1]^2)$. Since
\[
\left\Vert \left(\xi_i^{(2)}-\mu^{(2)}\right)\otimes\left(\xi_i^{(2)}-\mu^{(2)}\right) \right\Vert_{C([0,1]^2)}
= \left\Vert \xi_i^{(2)}-\mu^{(2)} \right\Vert_\infty^2 \le \left\Vert \xi_i-\mu \right\Vert_\mathbb{B}^2,
\]
Lemma~\ref{lem:boundedness_fourth_moment} together with Jensen's inequality gives
\[
\mathbb{E}\left[ \left\Vert \left(\xi_i^{(2)}-\mu^{(2)}\right)\otimes\left(\xi_i^{(2)}-\mu^{(2)}\right) \right\Vert_{C([0,1]^2)} \right] \le \mathbb{E}\left[\left\Vert \xi_i-\mu \right\Vert_\mathbb{B}^2\right] <\infty,
\]
so that the Strong Law of Large Numbers in $C([0,1]^2)$ applies and, combined with the previous remark, ensures that $\Vert c_n - c \Vert_\infty \to 0$ a.s..
The same argument, applied to $c_{2,1,n}$ and to the i.i.d.\@ elements $\left(\xi_i^{(2)}-\mu^{(2)}\right)\left(\xi_i^{(1)}-\mu^{(1)}\right)$ of $C([0,1])$, gives $\Vert c_{2,1,n}-\Sigma_{\xi^{(2)},\xi^{(1)}}\Vert_\infty\to0$ a.s..

By adding and subtracting the term $\alpha_{j,n}^{-1} \int_0^1 c_n(x_1,x_2) \nu_{j}^{(2)}(x_2) dx_2$, the following bound for the difference is obtained:
\begin{align*}
	\vert \nu_{j,n}^{(2)}(x_1) & - \nu_{j}^{(2)}(x_1) \vert \leq \frac{1}{\alpha_{j,n}} \left\vert \int_0^1 c_n(x_1,x_2) \left( \nu_{j,n}^{(2)}(x_2) - \nu_{j}^{(2)}(x_2) \right) dx_2 \right\vert \\
	& \quad + \left\vert \frac{1}{\alpha_{j,n}} \int_0^1 c_n(x_1,x_2) \nu_{j}^{(2)}(x_2) dx_2 - \frac{1}{\alpha_{j}} \int_0^1 c(x_1,x_2) \nu_{j}^{(2)}(x_2) dx_2 \right\vert.
\end{align*}
Applying the Cauchy--Schwarz inequality to the first integral gives
\begin{equation*}
	\left\vert \int_0^1 c_n(x_1,x_2) \left( \nu_{j,n}^{(2)}(x_2) - \nu_{j}^{(2)}(x_2) \right) dx_2 \right\vert 
	\leq \!
	\left( \int_0^1 c_n(x_1,x_2)^2 dx_2 \right)^{1/2} 
	\!\!\!\!\! \Vert \nu_{j,n}^{(2)} - \nu_{j}^{(2)} \Vert_{L^2_{[0,1]}}.
\end{equation*}
The first factor is uniformly bounded a.s.\@ due to the uniform convergence of $c_n$, while $\Vert \nu_{j,n}^{(2)} - \nu_{j}^{(2)} \Vert_{L^2_{[0,1]}} \to 0$ as a consequence of the convergence in $\mathbb{H}$ established above. Thus, the first term vanishes uniformly in $x_1$. Similarly, the second term converges to zero uniformly on $[0,1]$ due to the a.s.\@ convergence of $\alpha_{j,n} \to \alpha_j$ and the uniform convergence of the covariance functions.
Consequently, $\Vert m_{j,n}^k - m_{j}^k \Vert_\mathbb{B} \to 0$ a.s..

Regarding $L^2(\mathbb{P})$ convergence, since $\Vert m_{j,n}^k - m_{j}^k \Vert_{\mathbb{B}}\to 0$ a.s., it is enough to prove that the family $\{\Vert m_{j,n}^k - m_{j}^k \Vert_{\mathbb{B}}^2\}_{n}$ is uniformly integrable (see \citealp[Theorem 4, p.~188]{shi2016}).
To this end, note that the empirical eigenpair satisfies $\alpha_{j,n}\nu_{j,n}=\Sigma_{\xi,n}\nu_{j,n}$ with $\Vert\nu_{j,n}\Vert_\mathbb{H}=1$, that is, componentwise,
\[
\alpha_{j,n}\,\nu_{j,n}^{(1)}=\frac1n\sum_{i=1}^n\big(\xi_i^{(1)}-\mu_n^{(1)}\big)\langle \xi_i-\mu_n,\nu_{j,n}\rangle_\mathbb{H},
\qquad
\alpha_{j,n}\,\nu_{j,n}^{(2)}(x)=\frac1n\sum_{i=1}^n\big(\xi_i^{(2)}(x)-\mu_n^{(2)}(x)\big)\langle \xi_i-\mu_n,\nu_{j,n}\rangle_\mathbb{H},
\]
for $x\in[0,1]$, while $n^{-1}\sum_{i=1}^n\langle \xi_i-\mu_n,\nu_{j,n}\rangle_\mathbb{H}^2=\langle\Sigma_{\xi,n}\nu_{j,n},\nu_{j,n}\rangle_\mathbb{H}=\alpha_{j,n}$.
By the Cauchy--Schwarz inequality, the factor $\sqrt{\alpha_{j,n}}$ cancels out and, whenever $\alpha_{j,n}>0$,
\begin{equation}\label{eq:CS_eigen_bound}
	\sqrt{\alpha_{j,n}}\,\big\vert\nu_{j,n}^{(1)}\big\vert+\sqrt{\alpha_{j,n}}\,\big\Vert\nu_{j,n}^{(2)}\big\Vert_\infty
	\le 2\Big(\frac1n\sum_{i=1}^n\Vert\xi_i-\mu_n\Vert_\mathbb{B}^2\Big)^{1/2}
	\le 4\,T_n^{1/2},
	\qquad
	T_n:=\frac1n\sum_{i=1}^n\Vert\xi_i-\mu\Vert_\mathbb{B}^2,
\end{equation}
where the last inequality follows from $\Vert\xi_i-\mu_n\Vert_\mathbb{B}^2\le 2\Vert\xi_i-\mu\Vert_\mathbb{B}^2+2\Vert\mu_n-\mu\Vert_\mathbb{B}^2$ and $\Vert\mu_n-\mu\Vert_\mathbb{B}^2\le T_n$ (Jensen's inequality). Hence
\[
\Vert m_{j,n}^k - m_{j}^k \Vert_{\mathbb{B}}^2
\le 3\left(\Vert\mu_n-\mu\Vert_\mathbb{B}^2+16k^2T_n+k^2\alpha_j\Vert\nu_j\Vert_\mathbb{B}^2\right)
\le C\,(1+T_n),
\]
for a constant $C$ not depending on $n$ or $\omega$. Since $\mathbb{E}[\Vert\xi\Vert_\mathbb{B}^2]<\infty$ by Lemma~\ref{lem:boundedness_fourth_moment}, $T_n$ is the sample mean of i.i.d.\@ non-negative integrable random variables; therefore $T_n\to\mathbb{E}[\Vert\xi-\mu\Vert_\mathbb{B}^2]$ a.s.\@ and $\mathbb{E}[T_n]=\mathbb{E}[\Vert\xi-\mu\Vert_\mathbb{B}^2]$ for every $n$, so that $\{T_n\}_n$ converges in $L^1(\mathbb{P})$ (see e.g.\@ \citealp[Theorem~3, pag.~257]{shi2016}) and is uniformly integrable. Being dominated by a uniformly integrable family, $\{\Vert m_{j,n}^k - m_{j}^k \Vert_{\mathbb{B}}^2\}_n$ is uniformly integrable as well, and $\mathbb{E}[\Vert m_{j,n}^k - m_{j}^k \Vert_{\mathbb{B}}^2]\to0$; in particular, since $\Vert\cdot\Vert_\mathbb{H}\le\Vert\cdot\Vert_\mathbb{B}$, also $\mathbb{E}[\Vert m_{j,n}^k - m_{j}^k \Vert_{\mathbb{H}}^2]\to0$.
Finally, $M_{j,n}^k,M_j^k\in\mathcal{L}or$ (Lemma~\ref{lem:surjectivity}) take values in $[0,1]$, so that $\Vert M_{j,n}^k-M_j^k\Vert_\infty\le1$ deterministically for every $n$. Since $\Vert M_{j,n}^k-M_j^k\Vert_\infty\to0$ a.s.\@ (shown above), the Bounded Convergence Theorem gives
\begin{equation*}
	\mathbb{E} \left[ \left\Vert M_{j,n}^k - M_{j}^k \right\Vert_\infty^2 \right] \to 0 \quad \text{as } n \to \infty,
\end{equation*}
which concludes the proof.
\end{proof}

\begin{proof}[Proof of Proposition~\ref{pro:cons-est-MoV}.]
	Throughout the proof, $\widehat{\nu}_{j,n}$ is taken such that $\langle\widehat{\nu}_{j,n},\nu_j\rangle_\mathbb{H}\ge0$, the sign $\pm$ is the same in $\widehat{m}_{j,n}^k$ and in $m_j^k$, and $C$ denotes a universal constant whose value may change from line to line. The notation of the proof of Proposition~\ref{pro:cons-MoV-full-observed-LCs} is used; in particular, $T_n=n^{-1}\sum_{i=1}^n\Vert\xi_i-\mu\Vert_\mathbb{B}^2$ converges a.s.\@ and is uniformly integrable. Recall also that $\Vert\cdot\Vert_\mathbb{H}\le\Vert\cdot\Vert_\mathbb{B}$ and $\Vert\cdot\Vert_{L^1_{[0,1]}}\le\Vert\cdot\Vert_{L^2_{[0,1]}}$, and set
	\[
	e_i:=\widehat{\xi}_i-\xi_i,\qquad
	\zeta_i:=\xi_i-\mu_n,\qquad
	\widehat{\zeta}_i:=\widehat{\xi}_i-\widehat{\mu}_n,\qquad i=1,\ldots,n,
	\]
	so that $\widehat{\mu}_n-\mu_n=n^{-1}\sum_{i=1}^n e_i$ and $\widehat{\zeta}_i-\zeta_i=e_i-(\widehat{\mu}_n-\mu_n)$.
	
	As in Proposition~\ref{pro:cons-MoV-full-observed-LCs}, since $M_j^k=\psi^{-1}(m_j^k)$ and $\widehat{M}_{j,n}^k=\psi^{-1}(\widehat{m}_{j,n}^k)$, the continuity of $\psi^{-1}$ from $\mathbb{B}$ to $\left(\mathcal{L}or,\Vert\cdot\Vert_\infty\right)$ (see Proposition~\ref{pro:psi_bi-continuous_bijection}) implies that the a.s.\@ convergence of $\widehat{m}_{j,n}^k$ to $m_j^k$ in $\mathbb{B}$ is sufficient to guarantee that $\Vert\widehat{M}_{j,n}^k-M_j^k\Vert_\infty\to0$ a.s..
	To this end, consider
	\begin{equation}
		\Vert\widehat{m}_{j,n}^k-m_j^k\Vert_\mathbb{B}\le
		\underbrace{\Vert\widehat{\mu}_n-\mu_n\Vert_\mathbb{B}+\Vert\mu_n-\mu\Vert_\mathbb{B}}_{\text{(I)}}
		+k\underbrace{\left\vert\widehat{\alpha}_{j,n}^{1/2}-\alpha_j^{1/2}\right\vert\Vert\widehat{\nu}_{j,n}\Vert_\mathbb{B}}_{\text{(II)}}
		+k\,\alpha_j^{1/2}\underbrace{\Vert\widehat{\nu}_{j,n}-\nu_j\Vert_\mathbb{B}}_{\text{(III)}}.
		\label{eq:tri_ineq_est_MoV}
	\end{equation}
	The three terms are handled separately. Term~(I) is controlled directly by Assumption~\ref{hp:estimator_moment}\ref{ass:D_n} and by Proposition~\ref{pro:cons-MoV-full-observed-LCs}. Terms~(II) and~(III), instead, both depend on the eigenelements, and are therefore reached in three steps: the distance $\Vert\widehat{\Sigma}_{\xi,n}-\Sigma_\xi\Vert_\infty$ between the estimated and the theoretical covariance operators is bounded first; the usual perturbation bounds then give the convergence of $\widehat{\alpha}_{j,n}$ and of $\widehat{\nu}_{j,n}$ in $\mathbb{H}$, which settles the scalar factor in~(II); finally, the eigen-equation is used to upgrade the convergence of $\widehat{\nu}_{j,n}$ to the sup norm, which completes both~(II) and~(III). The $L^2(\mathbb{P})$ statement is proved at the end, by dominating~\eqref{eq:tri_ineq_est_MoV} through two uniformly integrable averages.
		
	Regarding the mean term (I), by Jensen's inequality,
	\begin{equation}\label{eq:est_mean}
		\Vert\widehat{\mu}_n-\mu_n\Vert_\mathbb{B}\le\frac1n\sum_{i=1}^n\Vert e_i\Vert_\mathbb{B}\le D_n^{1/2},
	\end{equation}
	which vanishes a.s.\@ by Assumption~\ref{hp:estimator_moment}\ref{ass:D_n}, while $\Vert\mu_n-\mu\Vert_\mathbb{B}\to0$ a.s.\@ by Proposition~\ref{pro:cons-MoV-full-observed-LCs}. Hence (I) vanishes a.s..
	
	For the covariance operator, let $a\otimes b$ be the operator $v\mapsto\langle b,v\rangle_\mathbb{H}\,a$, whose norm equals $\Vert a\Vert_\mathbb{H}\Vert b\Vert_\mathbb{H}$. Thus, one has $\Sigma_{\xi,n}=n^{-1}\sum_{i=1}^n\zeta_i\otimes\zeta_i$ and $\widehat{\Sigma}_{\xi,n}=n^{-1}\sum_{i=1}^n\widehat{\zeta}_i\otimes\widehat{\zeta}_i$, so that
	\begin{equation}\label{eq:otimes_decomposition}
		\widehat{\zeta}_i\otimes\widehat{\zeta}_i-\zeta_i\otimes\zeta_i=(\widehat{\zeta}_i-\zeta_i)\otimes\widehat{\zeta}_i+\zeta_i\otimes(\widehat{\zeta}_i-\zeta_i)
	\end{equation}
	and, consequently, $\Vert\widehat{\Sigma}_{\xi,n}-\Sigma_{\xi,n}\Vert_\infty\le R_n$, where
	\begin{equation}\label{eq:def_Rn}
		R_n:=\frac1n\sum_{i=1}^n\Vert\widehat{\zeta}_i-\zeta_i\Vert_\mathbb{B}\left(\Vert\widehat{\zeta}_i\Vert_\mathbb{B}+\Vert\zeta_i\Vert_\mathbb{B}\right).
	\end{equation}
	To see that $R_n$ vanishes, set $S_n:=n^{-1}\sum_{i=1}^n\Vert\zeta_i\Vert_\mathbb{B}^2$ and $\widehat{S}_n:=n^{-1}\sum_{i=1}^n\Vert\widehat{\zeta}_i\Vert_\mathbb{B}^2$, and note that $\widehat{\zeta}_i-\zeta_i=e_i-(\widehat{\mu}_n-\mu_n)$, so that $\Vert\widehat{\zeta}_i-\zeta_i\Vert_\mathbb{B}^2\le2\Vert e_i\Vert_\mathbb{B}^2+2\Vert\widehat{\mu}_n-\mu_n\Vert_\mathbb{B}^2$ by the triangle inequality and $(a+b)^2\le2a^2+2b^2$; averaging over $i$ and using~\eqref{eq:est_mean} gives
	\begin{equation}\label{eq:S_bounds}
		\frac1n\sum_{i=1}^n\Vert\widehat{\zeta}_i-\zeta_i\Vert_\mathbb{B}^2\le4D_n,
		\qquad
		S_n\le4T_n,
		\qquad
		\widehat{S}_n\le2S_n+8D_n\le C\,(T_n+D_n),
	\end{equation}
	the bound on $S_n$ following from the second inequality in~\eqref{eq:CS_eigen_bound}. By the Cauchy--Schwarz inequality applied to~\eqref{eq:def_Rn},
	\begin{equation}\label{eq:est_cross}
		R_n\le\left(\frac1n\sum_{i=1}^n\Vert\widehat{\zeta}_i-\zeta_i\Vert_\mathbb{B}^2\right)^{1/2}\left(\frac2n\sum_{i=1}^n\left(\Vert\widehat{\zeta}_i\Vert_\mathbb{B}^2+\Vert\zeta_i\Vert_\mathbb{B}^2\right)\right)^{1/2}
		\le C\,D_n^{1/2}\,(T_n+D_n)^{1/2},
	\end{equation}
	which vanishes a.s.\@ since $D_n\to0$ a.s.\@ and $T_n$ converges a.s.. Combined with $\Vert\Sigma_{\xi,n}-\Sigma_\xi\Vert_\infty\to0$ a.s.\@ (see again the proof of Proposition~\ref{pro:cons-MoV-full-observed-LCs}), this gives $\Vert\widehat{\Sigma}_{\xi,n}-\Sigma_\xi\Vert_\infty\to0$ a.s..
	
	Consider now the behaviour of eigenelements in $\mathbb{H}$. By the perturbation bounds for the eigenelements of compact self-adjoint operators \citep[Chapter~4]{bos00},
	\begin{align}
		&\left\vert\widehat{\alpha}_{j,n}-\alpha_j\right\vert\le\Vert\widehat{\Sigma}_{\xi,n}-\Sigma_\xi\Vert_\infty\to0,\qquad a.s.,
		\label{eq:est_eigenvalues}
		\\
		&\Vert\widehat{\nu}_{j,n}-\nu_j\Vert_\mathbb{H}\le\frac{2\sqrt2}{\delta_j}\Vert\widehat{\Sigma}_{\xi,n}-\Sigma_\xi\Vert_\infty\to0,\qquad a.s..
		\label{eq:est_eigenvectors}
	\end{align}
	In particular $\widehat{\alpha}_{j,n}\to\alpha_j>0$ a.s., so that $\widehat{\alpha}_{j,n}>0$ eventually and, since $\vert\widehat{\alpha}_{j,n}^{1/2}-\alpha_j^{1/2}\vert\le\vert\widehat{\alpha}_{j,n}-\alpha_j\vert\,\alpha_j^{-1/2}$, the scalar factor in (II) vanishes a.s..
	
	Regarding the behaviour of the eigenfunctions in $\mathbb{B}$, the sup-norm upgrade of~\eqref{eq:est_eigenvectors} also shows that $\Vert\widehat{\nu}_{j,n}\Vert_\mathbb{B}$ is eventually bounded, as required by~(II). Let $\widehat{c}_{2,1,n}$ and $\widehat{c}_n$ be defined as $c_{2,1,n}$ and $c_n$ in the proof of Proposition~\ref{pro:cons-MoV-full-observed-LCs}, with $\zeta_i$ replaced by $\widehat{\zeta}_i$. The scalar decomposition corresponding to~\eqref{eq:otimes_decomposition}, namely
	\[
	\widehat{\zeta}_i^{(2)}(x_1)\widehat{\zeta}_i^{(2)}(x_2)-\zeta_i^{(2)}(x_1)\zeta_i^{(2)}(x_2)
	=\left(\widehat{\zeta}_i^{(2)}-\zeta_i^{(2)}\right)(x_1)\,\widehat{\zeta}_i^{(2)}(x_2)+\zeta_i^{(2)}(x_1)\left(\widehat{\zeta}_i^{(2)}-\zeta_i^{(2)}\right)(x_2),
	\]
	and its analogue for the cross-covariance give $\Vert\widehat{c}_n-c_n\Vert_\infty\le R_n$ and $\Vert\widehat{c}_{2,1,n}-c_{2,1,n}\Vert_\infty\le R_n$. Hence, by~\eqref{eq:est_cross} and the uniform convergence of $c_n$ and $c_{2,1,n}$ established in Proposition~\ref{pro:cons-MoV-full-observed-LCs},
	\[
	\Vert\widehat{c}_n-c\Vert_\infty\to0,
	\qquad
	\Vert\widehat{c}_{2,1,n}-\Sigma_{\xi^{(2)},\xi^{(1)}}\Vert_\infty\to0,
	\qquad a.s.,
	\]
	and, in particular, $\Vert\widehat{c}_n\Vert_\infty$ and $\Vert\widehat{c}_{2,1,n}\Vert_\infty$ are eventually bounded a.s..
	Recalling that 
	\[
	\widehat{\Sigma}_{\xi,n}v=n^{-1}\sum_{i=1}^n\langle\widehat{\zeta}_i,v\rangle_\mathbb{H}\,\widehat{\zeta}_i
	\]
	and exchanging the finite sum with the integral in $\langle\widehat{\zeta}_i,v\rangle_\mathbb{H}$, the second component of $\widehat{\Sigma}_{\xi,n}v$ at $x\in[0,1]$ equals $v^{(1)}\widehat{c}_{2,1,n}(x)+\int_0^1\widehat{c}_n(x,y)v^{(2)}(y)\,dy$, which is the empirical analogue of the second block row of~\eqref{eq:block_theoretical_cov}. Evaluating it at $v=\widehat{\nu}_{j,n}$ and $v=\nu_j$ respectively, the functional components of the eigen-equations $\widehat{\Sigma}_{\xi,n}\widehat{\nu}_{j,n}=\widehat{\alpha}_{j,n}\widehat{\nu}_{j,n}$ and $\Sigma_\xi\nu_j=\alpha_j\nu_j$ read, for $x\in[0,1]$,
	\[
	\widehat{\alpha}_{j,n}\widehat{\nu}_{j,n}^{(2)}(x)=\widehat{\nu}_{j,n}^{(1)}\,\widehat{c}_{2,1,n}(x)+\int_0^1\widehat{c}_n(x,y)\,\widehat{\nu}_{j,n}^{(2)}(y)\,dy,
	\qquad
	\alpha_j\nu_j^{(2)}(x)=\nu_j^{(1)}\,\Sigma_{\xi^{(2)},\xi^{(1)}}(x)+\int_0^1 c(x,y)\,\nu_j^{(2)}(y)\,dy,
	\]
	which, as in Proposition~\ref{pro:eigenfunction_continuous}, also show that $\widehat{\nu}_{j,n}\in\mathbb{B}$ whenever $\widehat{\alpha}_{j,n}>0$. Subtracting the two equations and adding and subtracting intermediate terms,
	\begin{align*}
		\left\Vert\widehat{\alpha}_{j,n}\widehat{\nu}_{j,n}^{(2)}-\alpha_j\nu_j^{(2)}\right\Vert_\infty
		\le{}&\left\vert\widehat{\nu}_{j,n}^{(1)}-\nu_j^{(1)}\right\vert\,\Vert\widehat{c}_{2,1,n}\Vert_\infty
		+\left\vert\nu_j^{(1)}\right\vert\,\Vert\widehat{c}_{2,1,n}-\Sigma_{\xi^{(2)},\xi^{(1)}}\Vert_\infty\\
		&+\Vert\widehat{c}_n\Vert_\infty\,\Vert\widehat{\nu}_{j,n}^{(2)}-\nu_j^{(2)}\Vert_{L^2_{[0,1]}}
		+\Vert\widehat{c}_n-c\Vert_\infty\,\Vert\nu_j^{(2)}\Vert_{L^2_{[0,1]}},
	\end{align*}
	where the four terms on the right-hand side all vanish a.s.. Indeed, in the first and third term $\vert\widehat{\nu}_{j,n}^{(1)}-\nu_j^{(1)}\vert$ and $\Vert\widehat{\nu}_{j,n}^{(2)}-\nu_j^{(2)}\Vert_{L^2_{[0,1]}}$ are both bounded by $\Vert\widehat{\nu}_{j,n}-\nu_j\Vert_\mathbb{H}$, which vanishes a.s.\@ by~\eqref{eq:est_eigenvectors}, while the factors $\Vert\widehat{c}_{2,1,n}\Vert_\infty$ and $\Vert\widehat{c}_n\Vert_\infty$ are eventually bounded a.s.\@ as noted above; in the second and fourth term the factors $\vert\nu_j^{(1)}\vert$ and $\Vert\nu_j^{(2)}\Vert_{L^2_{[0,1]}}$ are bounded by $\Vert\nu_j\Vert_\mathbb{H}=1$, while $\Vert\widehat{c}_{2,1,n}-\Sigma_{\xi^{(2)},\xi^{(1)}}\Vert_\infty$ and $\Vert\widehat{c}_n-c\Vert_\infty$ vanish a.s.\@ by the convergence of the empirical covariance functions established above. 
	Since, for $\widehat{\alpha}_{j,n}>0$,
	\[
	\widehat{\nu}_{j,n}^{(2)}-\nu_j^{(2)}=\frac{1}{\widehat{\alpha}_{j,n}}\left(\widehat{\alpha}_{j,n}\widehat{\nu}_{j,n}^{(2)}-\alpha_j\nu_j^{(2)}\right)+\frac{\alpha_j-\widehat{\alpha}_{j,n}}{\widehat{\alpha}_{j,n}}\,\nu_j^{(2)},
	\]
	the triangle inequality gives
	\[
	\Vert\widehat{\nu}_{j,n}^{(2)}-\nu_j^{(2)}\Vert_\infty\le\frac{1}{\widehat{\alpha}_{j,n}}\left\Vert\widehat{\alpha}_{j,n}\widehat{\nu}_{j,n}^{(2)}-\alpha_j\nu_j^{(2)}\right\Vert_\infty+\left\vert\frac{\alpha_j}{\widehat{\alpha}_{j,n}}-1\right\vert\Vert\nu_j^{(2)}\Vert_\infty,
	\]
	where $\widehat{\alpha}_{j,n}\ge\alpha_j/2$ eventually a.s.\@ and $\alpha_j/\widehat{\alpha}_{j,n}\to1$ a.s.\@ by~\eqref{eq:est_eigenvalues}, while $\Vert\nu_j^{(2)}\Vert_\infty<\infty$ by Proposition~\ref{pro:eigenfunction_continuous}. It follows that $\Vert\widehat{\nu}_{j,n}-\nu_j\Vert_\mathbb{B}\to0$ a.s..
	All three terms in~\eqref{eq:tri_ineq_est_MoV} therefore vanish, so that $\Vert\widehat{m}_{j,n}^k-m_j^k\Vert_\mathbb{B}\to0$ a.s.\@ and, by the continuity of $\psi^{-1}$ recalled at the beginning of the proof, $\Vert\widehat{M}_{j,n}^k-M_j^k\Vert_\infty\to0$ a.s..
	
	Finally, regarding the convergence in $L^2(\mathbb{P})$, note that, since $\Vert\widehat{m}_{j,n}^k-m_j^k\Vert_\mathbb{B}\to0$ a.s.\@ and $\{\Vert\widehat{m}_{j,n}^k-m_j^k\Vert_\mathbb{B}^2\}_n$ is non-negative, it is enough to prove that this family is uniformly integrable (see \citealp[Theorem 4, p.~188]{shi2016}). Hence, the strategy is to dominate it by the two averages
	\[
	T_n=\frac1n\sum_{i=1}^n\Vert\xi_i-\mu\Vert_\mathbb{B}^2,
	\qquad
	D_n=\frac1n\sum_{i=1}^n\Vert\widehat{\xi}_i-\xi_i\Vert_\mathbb{B}^2,
	\]
	the former accounting for the sampling of the \ac{LCs}, the latter for their estimation, and to show that each of them is uniformly integrable.
	The only term in~\eqref{eq:tri_ineq_est_MoV} that is not already dominated by $T_n$ or $D_n$ is the factor $\widehat{\alpha}_{j,n}^{1/2}\Vert\widehat{\nu}_{j,n}\Vert_\mathbb{B}$ appearing in (II). It is bounded exactly as in~\eqref{eq:CS_eigen_bound}, where the Cauchy--Schwarz inequality applied to the componentwise eigen-equations makes the factor $\widehat{\alpha}_{j,n}^{1/2}$ cancel out: replacing $\Sigma_{\xi,n}$, $\alpha_{j,n}$, $\nu_{j,n}$ and $\xi_i-\mu_n$ by $\widehat{\Sigma}_{\xi,n}$, $\widehat{\alpha}_{j,n}$, $\widehat{\nu}_{j,n}$ and $\widehat{\zeta}_i$ gives
	\[
	\widehat{\alpha}_{j,n}^{1/2}\,\Vert\widehat{\nu}_{j,n}\Vert_\mathbb{B}\le2\,\widehat{S}_n^{1/2}\le C\,(T_n+D_n)^{1/2},
	\]
	the last inequality being~\eqref{eq:S_bounds}. Since also $\Vert\widehat{\mu}_n-\mu_n\Vert_\mathbb{B}^2\le D_n$ by~\eqref{eq:est_mean}, $\Vert\mu_n-\mu\Vert_\mathbb{B}^2\le T_n$ by Jensen's inequality, and $\alpha_j\Vert\nu_j\Vert_\mathbb{B}^2$ is a deterministic constant, squaring~\eqref{eq:tri_ineq_est_MoV} yields
	\begin{equation}\label{eq:est_domination}
		\Vert\widehat{m}_{j,n}^k-m_j^k\Vert_\mathbb{B}^2\le C\,(1+T_n+D_n),
	\end{equation}
	where $C$ depends on $k$, $\alpha_j$ and $\nu_j$, but neither on $n$ nor on $\omega$.
	The family $\{T_n\}_n$ is uniformly integrable, as shown in the proof of Proposition~\ref{pro:cons-MoV-full-observed-LCs}, because the $\xi_i$ are i.i.d.\@ and $\mathbb{E}[\Vert\xi\Vert_\mathbb{B}^2]<\infty$. The same argument is not available for $\{D_n\}_n$, since the $\widehat{\xi}_i$ form a triangular array and are neither independent nor identically distributed; uniform integrability is obtained instead from a uniform bound on the second moment. Indeed, by Jensen's inequality and $\Vert\widehat{\xi}_i-\xi_i\Vert_\mathbb{B}^4\le8(\Vert\widehat{\xi}_i\Vert_\mathbb{B}^4+\Vert\xi_i\Vert_\mathbb{B}^4)$,
	\[
	\sup_n\mathbb{E}\left[D_n^2\right]\le\sup_n\frac1n\sum_{i=1}^n\mathbb{E}\left[\Vert\widehat{\xi}_i-\xi_i\Vert_\mathbb{B}^4\right]\le8\left(\sup_{n\ge1}\max_{1\le i\le n}\mathbb{E}\left[\Vert\widehat{\xi}_i\Vert_\mathbb{B}^4\right]+\mathbb{E}\left[\Vert\xi\Vert_\mathbb{B}^4\right]\right)<\infty,
	\]
	which is finite by Assumption~\ref{hp:estimator_moment}\ref{ass:moment_4_hat} and Lemma~\ref{lem:boundedness_fourth_moment}; a family bounded in $L^2(\mathbb{P})$ is uniformly integrable.
	Since the sum of two uniformly integrable families is uniformly integrable, and so is any family dominated by one, \eqref{eq:est_domination} gives the uniform integrability of $\{\Vert\widehat{m}_{j,n}^k-m_j^k\Vert_\mathbb{B}^2\}_n$ and hence $\mathbb{E}[\Vert\widehat{m}_{j,n}^k-m_j^k\Vert_\mathbb{B}^2]\to0$.
	Finally, $\widehat{M}_{j,n}^k,M_j^k\in\mathcal{L}or$ (Lemma~\ref{lem:surjectivity}) take values in $[0,1]$, so that $\Vert\widehat{M}_{j,n}^k-M_j^k\Vert_\infty\le1$ deterministically for every $n$. Since $\Vert\widehat{M}_{j,n}^k-M_j^k\Vert_\infty\to0$ a.s., the Bounded Convergence Theorem gives
	\begin{equation*}
		\mathbb{E}\left[\left\Vert \widehat{M}_{j,n}^k-M_{j}^k \right\Vert_\infty^2\right] \to 0 \quad\text{as } n\to\infty,
	\end{equation*}
	which concludes the proof.
\end{proof}

\bibliography{Lorenz_biblio}

\end{document}